\newcommand{\ie}{i.e.}
\newcommand{\eg}{e.g.}
\newcommand{\realR}{\mathbb{R}}
\newcommand{\compC}{\mathbb{C}}
\newcommand{\strip}{\mathbb{S}}
\newcommand{\cyld}{\mathbb{S}^c}
\newcommand{\gfn}{\mathbf{g}}
\newcommand{\Jlike}{\mathbf{J}}
\newcommand{\Clike}{\tilde{C}}
\newcommand{\Jinv}{\mathbf{I}}
\newcommand{\bigO}{\mathcal{O}}
\newcommand{\F}{\mathcal{F}}
\newcommand{\G}{\mathcal{G}}
\newcommand{\z}{\hat{z}}
\newcommand{\T}{\hat{T}}
\renewcommand{\P}{\hat{P}}
\renewcommand{\S}{\hat{S}}
\DeclareMathOperator{\Ai}{Ai}
\DeclareMathOperator{\Tr}{Tr}
\DeclareMathOperator{\diag}{diag} 
\DeclareMathOperator{\supp}{supp}
\DeclareMathOperator*{\arccot}{arccot}
\DeclareMathOperator*{\const}{const}
\title{Random matrices with equispaced external source}
\author{Tom Claeys\thanks{Universit\'{e} Catholique de Louvain, Chemin du cyclotron 2, B-1348 Louvain-La-Neuve, Belgium \newline
email: \href{mailto:tom.claeys@uclouvain.be}{\protect\nolinkurl{tom.claeys@uclouvain.be}}} \ and
Dong Wang\thanks{Department of Mathematics, National University of Singapore, Singapore 119076 \newline
  email: \href{mailto:matwd@nus.edu.sg}{\protect\nolinkurl{matwd@nus.edu.sg}}}}
\newtheorem{cor}{Corollary}
\newtheorem{thm}{Theorem}
\newtheorem{prop}{Proposition}
\newtheorem{lem}{Lemma}
\theoremstyle{remark}
\newtheorem{rmk}{Remark}
\numberwithin{equation}{section}
\begin{document}

\maketitle

\begin{abstract}
We study Hermitian random matrix models with an external source matrix which has equispaced eigenvalues, and with an external field such that the limiting mean density of eigenvalues is supported on a single interval as the dimension tends to infinity. We obtain strong asymptotics for the multiple orthogonal polynomials associated to these models, and as a consequence for the average characteristic polynomials. One feature of the multiple orthogonal polynomials analyzed in this paper is that the number of orthogonality weights of the polynomials grows with the degree. Nevertheless we are able to characterize them in terms of a pair of $2\times 1$ vector-valued Riemann-Hilbert problems, and to perform an asymptotic analysis of the Riemann-Hilbert problems.
\end{abstract}

\section{Introduction}
We consider random matrix ensembles under the influence of an
external source matrix with equidistant eigenvalues. The ensembles
consist of the space of $n\times n$ Hermitian matrices with a
probability distribution of the form
\begin{equation}\label{prob}
\frac{1}{Z_n}\exp(-n\Tr \left[V(M)-AM\right])dM,
\end{equation}
where
\begin{equation}
  d M=\prod_{i<j}d \Re M_{ij}d \Im M_{ij}\ \prod_{j=1}^n d M_{jj}.
\end{equation}
The \emph{external field} $V(x)$ is a real analytic function which has sufficiently fast growth at infinity,
\begin{equation} \label{Vinfty}
  \lim_{x \to \pm\infty}\frac{V(x)}{|x|+1}=+\infty,
\end{equation}
and the \emph{external source matrix} $A$ is a deterministic Hermitian matrix. Due to the unitary invariance of the model, we assume, without loss of generality,
\begin{equation} \label{eq:A_is_diagonal}
  A = \diag(a_1, a_2, \dotsc, a_n).
\end{equation}
In our paper, we further assume the eigenvalues of $A$ are equispaced on a certain interval, such that $a_j = a(j-1)/n + b$ where $a$ and $b$ are constants. By arguments of shifting and scaling, it suffices to consider the case
\begin{equation} \label{eq:a_j_are_equispaced}
  a_{j} = \frac{j-1}{n}, \quad \text{where $j = 1, 2, \dotsc, n$,}
\end{equation}
and we work with the external source matrix $A$ given by \eqref{eq:A_is_diagonal} and \eqref{eq:a_j_are_equispaced} throughout this paper. The normalization constant $Z_n$ in \eqref{prob} depends on $n$ and $V$. In the
simplest example we have $V(x)=\frac{x^2}{2}$, which gives the
Gaussian Unitary Ensemble (GUE) in external source $A$. If we allow singularities of $V$, and let $V(x) = (x - \frac{m}{n} \log x) \chi_{x > 0}$, we have the complex Wishart ensemble that has wide applications in statistics and wireless communication, see \eg\ \cite{Baik-Ben_Arous-Peche05}.

\medskip

Random matrix ensembles with external source were introduced in
\cite{Brezin-Hikami98, Zinn_Justin97}, and are intimately connected
to multiple orthogonal polynomials \cite{Bleher-Kuijlaars04a}.
If the external field is the classical one $V(x) = \frac{x^2}{2}$ or $V(x) = (x - \frac{m}{n} \log x) \chi_{x > 0}$, \ie, the ensemble becomes the GUE with external source or the complex Wishart ensemble, more techniques are available for asymptotic analysis, and for a large class of external source matrices, including the equispaced one defined by \eqref{eq:A_is_diagonal} and \eqref{eq:a_j_are_equispaced}, the asymptotics can be obtained. See \cite{El_Karoui07} for the complex Wishart ensemble. However, when the external field $V(x)$ is general, the asymptotic analysis of the random matrix ensembles with external source has only had success for particular choices of external source matrices. Asymptotics for large $n$ have been studied in
\cite{Bleher-Kuijlaars04,
Bleher-Kuijlaars05,Aptekarev-Bleher-Kuijlaars05, Bleher-Kuijlaars07, Adler-van_Moerbeke07,
Bleher-Delvaux-Kuijlaars10, Aptekarev-Lysov-Tulyakov11} in the case
where the external source matrix $A$ has two different eigenvalues $a$ and $-a$ with equal
multiplicity, and in \cite{Baik-Wang10a, Baik-Wang10,
Bertola-Buckingham-Lee-Pierce11, Bertola-Buckingham-Lee-Pierce11a}
when $A$ has a bounded, or slowly growing with $n$, number of
non-zero eigenvalues. Large $n$ asymptotics for general external
source matrices have been considered in the physics literature, see
\eg\ \cite{Eynard-Orantin09}, but rigorous asymptotic results have
not been obtained to the best of our knowledge except for the two
above-mentioned cases. We remark that the GUE with external source and the complex Wishart ensemble have other generalizations, the complex Wigner matrix model with external source and the complex sample covariance matrix model respectively. They have also been studied extensively, see \eg\ \cite{Bai-Silverstein10}.

\medskip

Let us first recall some general properties about random
matrix ensembles with external source. An ensemble of the form \eqref{prob} with eigenvalues of the external source matrix being $a_1, \dotsc, a_n$
induces a probability distribution on
the eigenvalues $\lambda_1, \ldots, \lambda_n$ of the matrices given
by \cite{Brezin-Hikami98, Harish-Chandra57, Itzykson-Zuber80}
\begin{equation}\label{jpdf0}
 \frac{1}{Z'_n} \det(e^{na_i\lambda_j})^n_{i,j=1} \, \Delta(\lambda) \, \prod_{j=1}^n e^{-nV(\lambda_j)}\ \prod_{j=1}^n d\lambda_j,
\end{equation}
where $Z'_n =\const \cdot Z_n \cdot \Delta(a)$, and $\Delta(\lambda) = \prod_{i < j} (\lambda_j - \lambda_i)$ and $\Delta(a) = \prod_{i<j} (a_j - a_i)$ are Vandermonde determinants.
A remarkable fact is that the average characteristic
polynomial of such an ensemble \eqref{prob} satisfies orthogonality
conditions: indeed, let
\begin{equation}\label{averagechar}
  p_n^{(n)}(z) \colonequals \mathbb E_n(\det(zI-M))=\mathbb E_n'(\prod_{j=1}^n(z-\lambda_j)),
\end{equation}
where $\mathbb E_n$ is the average with respect to \eqref{prob}, and
$\mathbb E_n'$ is the average with respect to \eqref{jpdf0}, then it
was proved in \cite{Bleher-Kuijlaars04a} that $p_n^{(n)}$ is
characterized as the unique monic polynomial of degree $n$
satisfying the orthogonality conditions
\begin{equation}\label{orthoII}
\int_{\mathbb R}p_n^{(n)}(x)e^{na_jx}e^{-nV(x)}dx=0,\quad\text{for
$ j=1,\ldots, n.$}
\end{equation}
These are the orthogonality conditions for the so-called type II
multiple orthogonal polynomials with respect to $n$ different
orthogonality weights $e^{na_jx}e^{-nV(x)}$, $j=1,\ldots, n$.
Specialized to our situation $a_j=\frac{j-1}{n}$ for $j=1,\ldots,
n$, the joint probability distribution of the eigenvalues takes the
form
\begin{equation}\label{jpdf}
  \frac{1}{Z'_n}\prod_{i<j} (\lambda_j - \lambda_i) \
  \prod_{i<j} (e^{\lambda_j}-e^{\lambda_i})\ \prod_{j=1}^n
  e^{-nV(\lambda_j)}\ \prod_{j=1}^n d\lambda_j,
\end{equation}
and the monic type II multiple orthogonal polynomials $p^{(n)}_j(x)$, where $j = 0, 1, \dotsc$ is the degree, are characterized by
\begin{equation}\label{orthoIIb}
  \int_{\mathbb R}p^{(n)}_j(x)e^{kx}e^{-nV(x)}dx=0,\quad\text{for $k=0,\ldots, j-1$.}
\end{equation}

It is well-known that the point process \eqref{jpdf0} is
determinantal \cite{Zinn_Justin97}, and its two-point correlation
kernel can be written in terms of multiple orthogonal polynomials.
If $a_j=\frac{j-1}{n}$, the kernel takes the form
\cite{Bleher-Kuijlaars04a}
\begin{equation}\label{kernel}
  K_n(x,y)=e^{-\frac{n}{2}(V(x)+V(y))}\sum_{j=0}^{n-1}\frac{1}{h_j^{(n)}} p_j^{(n)}(x)Q_j^{(n)}(y),
\end{equation}
where $p_j^{(n)}(x)$ are the type II monic multiple orthogonal
polynomials characterized by \eqref{orthoIIb}, and $Q_j^{(n)}(y)=q^{(n)}_j(e^y)$ are linear combinations of $e^{ky}$ with $k = 0, 1, 2, \dotsc, j$, subjected to the orthogonality conditions
\begin{equation}\label{orthoI}
  \int_{\mathbb R} x^k q_j^{(n)}(e^x)e^{-nV(x)}dx=0,\qquad\mbox{ for
    $k=0,\ldots, j-1$,}
\end{equation}
where $q^{(n)}_j$ is a monic polynomial of degree $j$.
Finally the constants $h_j^{(n)}$ are given by
\begin{equation} \label{kappa}
  h_j^{(n)}=\int_{\mathbb R}p_j^{(n)}(x)q_j^{(n)}(e^x)e^{-nV(x)}dx.
\end{equation}
The orthogonality conditions (\ref{orthoIIb}) and (\ref{orthoI}) for
$p_j^{(n)}$ and $q_j^{(n)}$ can also be written at once as
\begin{equation}\label{ortho}
\int_{\mathbb
R}p_j^{(n)}(x)q_k^{(n)}(e^x)e^{-nV(x)}dx=0,\qquad\mbox{ for $j\neq
k\in\mathbb N=\{1,2,\ldots\}$.}
\end{equation}
Note that the multiple weights $e^{kx} e^{-nV(x)}$ constitute an AT system \cite[Section 4.4]{Nikishin-Sorokin91}, and hence $p^{(n)}_j$ and $q^{(n)}_j$ are uniquely defined, and $h^{(n)}_j \neq 0$ \cite{Coussement-Van_Assche01}.
\begin{rmk}
  As the counterpart of $p^{(n)}_j(x)$, $Q^{(n)}_j(x)$ is the $j$-th multiple orthogonal polynomial of type I, up to the constant factor $h_j^{(n)}$. Generally the type I multiple orthogonal polynomials are not polynomials, but in the present setting, $Q^{(n)}_j(x)$ is a polynomial in $e^x$.
\end{rmk}
\begin{rmk}
If the external field $V$ is a quadratic polynomial, distributions of the form (\ref{jpdf0}) can also be realized in models consisting of $n$ non-intersecting Brownian motions. In particular,
(\ref{jpdf}) is the joint probability distribution at an intermediate time of $n$ non-intersecting Brownian motions starting at one point and ending at $n$ equidistant points. Such a model has been studied in \cite{Johansson04}. Different endpoint configurations have been studied e.g.\ in \cite{Adler-Delepine-van_Moerbeke09, Adler-Orantin-van_Moerbeke10}.
\end{rmk}
In analogy to
(\ref{averagechar}), $q_n^{(n)}$ can also be interpreted as an
average over the determinantal point process (\ref{jpdf}). We will
prove the following result in Appendix \ref{subsec:proof_of_algebraic_prop}.
\begin{prop}\label{propq}
Let $V$ be real analytic satisfying \eqref{Vinfty}.
  We have the identities
\begin{equation}\label{q}
  q_n^{(n)}(e^z)= \mathbb{E}_n(\det(e^{zI} - e^{M})) = \mathbb E_n'(\prod_{j=1}^n(e^z-e^{\lambda_j})),
\end{equation}
where $\mathbb{E}_n$ denotes the expectation associated to \eqref{prob} with $A$ given by \eqref{eq:A_is_diagonal}--\eqref{eq:a_j_are_equispaced}, and $\mathbb{E}_n'$ is the expectation associated to \eqref{jpdf}.
\end{prop}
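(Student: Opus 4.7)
The plan is to establish the two equalities separately. The first, $\mathbb{E}_n(\det(e^{zI} - e^M)) = \mathbb{E}_n'(\prod_{j=1}^n(e^z - e^{\lambda_j}))$, is essentially immediate: since $e^M$ has eigenvalues $e^{\lambda_1}, \ldots, e^{\lambda_n}$, one has $\det(e^{zI} - e^M) = \prod_{j=1}^n(e^z - e^{\lambda_j})$, which depends only on the spectrum of $M$, so the matrix average \eqref{prob} reduces to the eigenvalue average under \eqref{jpdf}.

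The substantive claim is that $F(z) := \mathbb{E}_n'(\prod_{j=1}^n(e^z - e^{\lambda_j}))$ equals $q_n^{(n)}(e^z)$. Since the weights $e^{kx}e^{-nV(x)}$ form an AT system, $q_n^{(n)}$ is the unique monic polynomial in $e^z$ of degree $n$ satisfying the orthogonality relations \eqref{orthoI}, so it suffices to verify these two properties for $F$. That $F$ is a polynomial in $e^z$ of degree $n$ is clear from expanding the product in powers of $e^z$ with coefficients given by elementary symmetric polynomials in $e^{\lambda_1}, \ldots, e^{\lambda_n}$; the growth condition \eqref{Vinfty} ensures absolute convergence of all relevant integrals, and the coefficient of $e^{nz}$ is $\mathbb{E}_n'[1]=1$, giving monicity.

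For orthogonality, I start from the identity
\[
\prod_{j=1}^n(e^z - e^{\lambda_j})\prod_{1\le i<j\le n}(e^{\lambda_j} - e^{\lambda_i}) = (-1)^n \Delta(e^z, e^{\lambda_1}, \ldots, e^{\lambda_n}),
\]
expand this Vandermonde by cofactors along the row indexed by $z$, and apply Andr\'eief's identity to the remaining $\lambda$-integral (using $\Delta(\lambda_1,\ldots,\lambda_n) = \det(\lambda_i^{l-1})_{1\le i,l\le n}$). This yields
\[
F(z)\,Z_n' = (-1)^n n! \sum_{j=0}^n (-1)^j e^{jz} \det M^{(j)},
\]
where $M^{(j)}$ is the $n\times n$ minor obtained by deleting row $j$ from the $(n+1)\times n$ moment matrix $N$ with entries $N_{i,l} = \int_{\mathbb{R}}\lambda^{l-1} e^{i\lambda} e^{-nV(\lambda)}\,d\lambda$ for $i\in\{0,\ldots,n\}$ and $l\in\{1,\ldots,n\}$. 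Multiplying by $x^k e^{-nV(x)}$ and integrating in $x$, the factor $\int_{\mathbb{R}} x^k e^{jx} e^{-nV(x)}\,dx$ is precisely $N_{j,k+1}$, so the alternating sum becomes the cofactor expansion along the first column of the $(n+1)\times(n+1)$ matrix obtained by prepending the column $(N_{j,k+1})_{j=0,\ldots,n}$ to $N$. For $k\in\{0,\ldots,n-1\}$ this prepended column coincides with column $k+1$ of $N$, producing a repeated column and hence a vanishing determinant, which is exactly $\int_{\mathbb{R}} x^k F(x) e^{-nV(x)}\,dx = 0$.

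The main obstacle is purely bookkeeping: tracking signs in the Vandermonde cofactor expansion and in the Andr\'eief application, and justifying Fubini via the decay \eqref{Vinfty}. Once the orthogonality integral is packaged as a single $(n+1)\times(n+1)$ determinant, the repeated-column observation closes the proof.
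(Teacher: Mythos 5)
Your proof is correct and follows essentially the same strategy as the paper's: the first equality via the spectral reduction, then verifying the orthogonality \eqref{orthoI} by reducing the integral $\int x^k F(x)e^{-nV(x)}\,dx$ to a determinant with a repeated row/column in the moment matrix $m_{jk}=\int \lambda^j e^{k\lambda}e^{-nV(\lambda)}\,d\lambda$. The only cosmetic difference is that you invoke Andr\'eief's identity and a cofactor expansion of the combined $(n+1)\times(n+1)$ Vandermonde in the $e^{\lambda_j}$'s and $e^z$, whereas the paper symmetrizes the $\lambda$-Vandermonde directly and then integrates the $(n+1)\times(n+1)$ determinant row by row; the resulting repeated-row (your repeated-column) observation is identical.
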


\medskip

The main goal of this paper is to obtain asymptotics for the average
characteristic polynomials $p_n^{(n)}$ of the random matrix ensemble
as $n\to\infty$. In addition we will also obtain asymptotics for the
dual polynomials $q_n^{(n)}$.
A key observation is that $p_{n+k}^{(n)}$ and $q_{n+k}^{(n)}$ can be characterized in terms of $1\times 2$ vector-valued Riemann-Hilbert (RH) problems.
These RH problems are different from the known $(n+k+1) \times (n+k+1)$ RH
problems characterizing the multiple orthogonal polynomials $p^{(n)}_{n+k}$ and $q^{(n)}_{n+k}$ \cite{Geronimo-Kuijlaars-Van_Assche00} and from the classical RH problem for orthogonal polynomials \cite{Fokas-Its-Kitaev92}.
Since $n$ is a large parameter in our settings, the $1\times 2$ RH
problem will be much more convenient for asymptotic analysis than a RH problem of large size. As a
drawback, our RH problem is non-standard in the sense that the
entries of the solution live in different domains. This requires a
modification of the Deift/Zhou steepest descent method
to analyze the RH problem asymptotically. The transformation
$\Jlike$ will play a crucial role here: it allows us to transform the $1\times 2$ RH problem to
a scalar shifted RH problem, and to obtain small norm estimates for the solution to this RH problem.

\medskip

 A
crucial role in the description of the asymptotic behavior of the
polynomials will be played by an equilibrium measure. By
(\ref{jpdf}), the joint probability density function of eigenvalues
is maximal for the $n$-tuples $(\lambda_1,\ldots, \lambda_n)$ for
which
\begin{equation}
  \sum_{i<j}\log|\lambda_i-\lambda_j|^{-1} + \sum_{i<j}\log|e^{\lambda_i}-e^{\lambda_j}|^{-1}+ n\sum_{j=1}^n V(\lambda_j)
\end{equation}
is minimal. As in \cite[Section 6.1]{Deift99}, one
can then expect that the limiting mean distribution of the
eigenvalues of the random matrices is given by the equilibrium
measure $\mu_V$ which minimizes the energy functional
\begin{equation} \label{energy}
  I_V(\mu) \colonequals \frac{1}{2} \iint \log \lvert t-s \rvert^{-1} d\mu(t)d\mu(s) + \frac{1}{2} \iint \log \lvert e^t-e^s \rvert^{-1}  d\mu(t)d\mu(s) + \int V(s)d\mu(s),
\end{equation}
among all Borel probability measures $\mu$ supported on $\mathbb R$.
This is in analogy to the equilibrium measure corresponding to a
matrix model of the form (\ref{prob}) without external source, which
is given as the unique minimizer of the energy
\begin{equation}\label{energy1}
\iint \log \lvert t-s \rvert^{-1} d\mu(t)d\mu(s) + \int V(s)d\mu(s).
\end{equation}
Following the proof in \cite{Deift99} of existence and uniqueness of
the minimizer of (\ref{energy1}), we will show existence and
uniqueness of the equilibrium measure minimizing (\ref{energy}).

\begin{thm}\label{theorem: exist uni}
Let $V$ be real analytic, satisfying the growth condition
\eqref{Vinfty}. Then there exists a unique measure $\mu=\mu_V$ with compact support which
minimizes the functional \eqref{energy} among all probability
measures on $\mathbb R$.
\end{thm}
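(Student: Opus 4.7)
The plan is to follow the classical existence-and-uniqueness proof for weighted logarithmic equilibrium problems as in \cite[Section 6.1]{Deift99}, treating the additional term $\frac{1}{2}\iint \log|e^t-e^s|^{-1}\,d\mu(t)\,d\mu(s)$ as an extra convex, lower semi-continuous contribution and using the pushforward under $x\mapsto e^x$ to reduce its analysis to the classical logarithmic energy on $(0,\infty)$.

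First I would verify that $I_V$ is well-defined and bounded below on probability measures with $\int V\,d\mu<\infty$. Since $|t-s|\leq (1+|t|)(1+|s|)$ and $|e^t-e^s|\leq e^{|t|+|s|}$, the two double integrals are bounded below by constants minus multiples of $\int\log(1+|s|)\,d\mu$ and $\int|s|\,d\mu$, both of which are absorbed by a fraction of $\int V\,d\mu$ thanks to \eqref{Vinfty}; the uniform measure on $[0,1]$ shows the infimum is finite. For existence I would apply the direct method: along a minimizing sequence $\{\mu_k\}$, the uniform bound on $\int V\,d\mu_k$ combined with \eqref{Vinfty} yields tightness via Chebyshev, so Prokhorov's theorem produces a weak-$*$ limit point $\mu^*$. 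Each term of $I_V$ is lower semi-continuous: the linear term by continuity and lower-boundedness of $V$; the two double integrals by truncating the log kernels at level $M$, passing to the limit using weak convergence against the continuous truncated kernel, and then sending $M\to\infty$ by monotone convergence. Hence $I_V(\mu^*)=\inf I_V$.

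For uniqueness I would establish strict convexity. If $\sigma$ is a real signed Borel measure on $\mathbb R$ with $\sigma(\mathbb R)=0$, compact support, and $\int\log(2+|s|)\,d|\sigma|<\infty$, the classical Fourier-representation argument gives $\iint\log|t-s|^{-1}\,d\sigma(t)\,d\sigma(s)\geq 0$ with equality iff $\sigma=0$. Pushing $\sigma$ forward under the continuous injection $x\mapsto e^x$ yields a zero-mass signed measure $\tilde\sigma$ on $(0,\infty)$, and applying the same inequality to $\tilde\sigma$ gives the analogous strict positivity for the second kernel. The quadratic identity $I_V\!\left(\tfrac{\mu_1+\mu_2}{2}\right)=\tfrac{1}{2}I_V(\mu_1)+\tfrac{1}{2}I_V(\mu_2)-\tfrac{1}{4}Q(\mu_1-\mu_2)$ then rules out two distinct minimizers. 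Finally, for compactness of $\supp\mu^*$, I would use the Euler--Lagrange variational condition together with the fact that $V(x)/(|x|+1)\to\infty$: the effective potential $x\mapsto \int\log|x-s|^{-1}\,d\mu^*(s) + \int\log|e^x-e^s|^{-1}\,d\mu^*(s) + V(x)$ grows at most linearly in $|x|$ from the two integrals but is eventually dominated by $V$ at $\pm\infty$, so it tends to $+\infty$, forcing $\supp\mu^*$ to be bounded.

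The main obstacle I anticipate is the careful handling of the new kernel $\log|e^t-e^s|^{-1}$ in both the lower-semi-continuity and strict-positivity steps; the pushforward under $x\mapsto e^x$ reduces these to the classical log-energy on $(0,\infty)$, provided one verifies the appropriate integrability of $\tilde\sigma$, after which the argument follows the Deift template with only cosmetic adjustments.
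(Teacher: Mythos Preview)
Your proposal is correct and follows essentially the same route as the paper: lower bound on the symmetrized kernel, existence via tightness and lower semi-continuity along a minimizing sequence, and uniqueness from positive-definiteness of both logarithmic energies on zero-mass signed measures, the second being reduced to the first by pushing forward under $x\mapsto e^x$. The paper packages the convexity step via $\mu_t=\mu_V+t(\tilde\mu-\mu_V)$ rather than your midpoint identity, and it is terser about compactness of the support, but the substance is the same.
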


\begin{rmk}
It should be noted that the growth condition \eqref{Vinfty} is
stronger than the usual logarithmic growth needed to have a unique
minimizer of the one-matrix logarithmic energy \eqref{energy1}. This
is a consequence of the second term in (\ref{energy}).
\end{rmk}
The proof of this result will be given in Section \ref{section:
exist uni} but it does not give any information about the measure
$\mu_V$ itself. For that reason, in what follows, we will restrict
ourselves to a class of external fields $V$ for which the
equilibrium measure behaves nicely and is supported on a single
interval.

We say that a real analytic external field $V$ satisfying
(\ref{Vinfty}) is {\em one-cut regular} if there exists an
absolutely continuous measure $d\mu_V(x)=\psi_V(x)dx$ satisfying the
properties
\begin{enumerate}[label=(\roman{*})]
\item \label{enu:equ_measure_intro_1}
  $\supp\,\mu_V=[a,b]$ for $a<b\in\mathbb R$, and $\int d\mu_V(x) = 1$,
\item \label{enu:equ_measure_intro_2}
  $\psi_V(x)>0$ for $x\in(a,b)$,
\item \label{enu:equ_measure_intro_3}
  $\lim_{x\to a_+}\frac{\psi_V(x)}{\sqrt{x-a}}$ and $\lim_{x\to b_-}\frac{\psi_V(x)}{\sqrt{b-x}}$ exist and are different from zero,
\item \label{enu:equ_measure_intro_4}
  for $x\in[a,b]$, there exists a constant $\ell$ depending on $V$
such that
\begin{equation}\label{var eq}
\int \log \lvert t-x \rvert^{-1} d\mu_V(t) + \int \log \lvert
e^t-e^x \rvert^{-1} d\mu_V(t) + V(x)+\ell=0,
\end{equation}
\item \label{enu:equ_measure_intro_5}
  for $x\in\mathbb R\setminus[a,b]$, we have
\begin{equation}\label{var ineq}
\int \log \lvert t-x \rvert^{-1} d\mu_V(t) + \int \log \lvert
e^t-e^x \rvert^{-1} d\mu_V(t) + V(x)+\ell>0.
\end{equation}
\end{enumerate}
Properties \ref{enu:equ_measure_intro_4} and \ref{enu:equ_measure_intro_5} are variational conditions for $\mu_V$, and it follows from standard arguments that a measure satisfying \ref{enu:equ_measure_intro_1}, \ref{enu:equ_measure_intro_2}, \ref{enu:equ_measure_intro_4} and \ref{enu:equ_measure_intro_5} minimizes the energy functional (\ref{energy}).
Under the condition that $V$ is one-cut regular, we obtain large $n$ asymptotics for
$p_n^{(n)}(z)$ and $q_n^{(n)}(e^z)$ defined by \eqref{ortho}, and state it in the following theorem. For the purpose of a subsequent paper, we give slightly more general asymptotics for $p^{(n)}_{n+k}(z)$ and $q^{(n)}_{n+k}(e^z)$, where $k$ is a constant integer.

\medskip

Suppose the
equilibrium measure $\mu_V$ associated to $V$ is supported on a single interval $[a,b]$ and the density function is $\psi_V(x)$. In order to be
able to formulate our results, let us define $c_0 \in \realR$ and
$c_1 \in \realR^+$ such that
\begin{gather}
  c_0 = {} \frac{b + a}{2}, \label{eq:first_def_of_c_0} \\
  c_1 \sqrt{\frac{1}{4} + \frac{1}{c_1}} - \log \frac{\sqrt{\frac{1}{4} + \frac{1}{c_1}} - \frac{1}{2}}{\sqrt{\frac{1}{4} + \frac{1}{c_1}} + \frac{1}{2}}=\frac{b - a}{2}. \label{eq:first_def_of_c_1}
\end{gather}
Note that $c_1$ is well defined since as $c_1$ runs from $0$ to
$+\infty$, the left-hand side of \eqref{eq:first_def_of_c_1}
increases monotonically from $0$ to $+\infty$. Then we define the
transformation
\begin{equation} \label{eq:Joukowsky_like_transform}
  \Jlike(s) = \Jlike_{c_1, c_0}(s)  \colonequals  c_1 s + c_0 - \log\frac{s - \frac{1}{2}}{s +
  \frac{1}{2}}
\end{equation}
for $s \in \compC \setminus [-\frac{1}{2}, \frac{1}{2}]$, where the logarithm corresponds to arguments between $-\pi$ and $\pi$. For $s < -\frac{1}{2}$, $\Jlike_{c_1, c_0}(s)$ has a maximum at $s_a$, and for $s > \frac{1}{2}$, $\Jlike_{c_1,
c_0}(s)$ has a minimum at $s_b$, where
\begin{equation} \label{eq:defn_of_s_a_and_s_b}
  s_a = -\sqrt{\frac{1}{4}+\frac{1}{c_1}}, \quad s_b = \sqrt{\frac{1}{4}+\frac{1}{c_1}}.
\end{equation}
The extrema $s_a$ and $s_b$ are also characterized by identities $a=\Jlike_{c_1, c_0}(s_a)$ and $b=\Jlike_{c_1, c_0}(s_b)$.

In Section \ref{subsec:g_fun_and_equ_measure}, a region $D
\subset \compC$ is defined by Proposition \ref{prop:Joukowsky_like}, and it is shown there that $\Jlike$ maps
$\compC \setminus \overline{D}$ biholomorphically into $\mathbb
C\setminus[a,b]$, and maps $D\setminus[-\frac{1}{2},\frac{1}{2}]$
biholomorphically into $\strip \setminus [a, b]$, where
\begin{equation}\label{strip}
  \strip  \colonequals  \{ z\in \compC \mid -\pi < \Im z < \pi \}.
\end{equation}
 Let the functions $\Jinv_1$ and $\Jinv_2$ be inverse functions of
$\Jlike$ for these two branches respectively: $\Jinv_1$ is the
inverse map of $\Jlike_{c_1,c_0}$ from $\mathbb C\setminus[a,b]$ to
$\mathbb C\setminus \overline D$, and $\Jinv_2$ is the inverse map
of $\Jlike_{c_1,c_0}$ from $\strip \setminus[a,b]$ to $D \setminus [-\frac{1}{2}, \frac{1}{2}]$:
\begin{align}
 \Jinv_1(\Jlike(s))= {}& s,& & \text{for $s\in \mathbb C\setminus\overline D$,} \label{eq:defn_of_I_1} \\
 \Jinv_2(\Jlike(s))= {}& s,& & \text{for $s\in D \setminus [-\frac{1}{2}, \frac{1}{2}]$}. \label{eq:defn_of_I_2}
\end{align}
 Writing, for $x \in (a, b)$,
\begin{align}
  \Jinv_+(x) \colonequals {}& \lim_{\epsilon\to 0_+} \Jinv_1(x+i\epsilon)=\lim_{\epsilon\to 0_+} \Jinv_2(x-i\epsilon), \label{eq:defn_of_I_+} \\
  \Jinv_-(x) \colonequals {}& \lim_{\epsilon\to 0_+} \Jinv_1(x-i\epsilon)=\lim_{\epsilon\to 0_+} \Jinv_2(x+i\epsilon), \label{eq:defn_of_I_-}
\end{align}
we have that $\Jinv_+(x)$ lies in the upper half plane,
$\Jinv_-(x)$ in the lower half plane, and their loci are the upper and lower boundaries of $D$ (denoted as $\gamma_1$ and $\gamma_2$ in Proposition \ref{prop:Joukowsky_like}) respectively.
The mapping $\Jlike$ outside and inside $D$ is illustrated in Figures \ref{fig:domain_D} and \ref{fig:domain_D_with_cut}. The proof of Proposition \ref{prop:Joukowsky_like} is given in Appendix \ref{subsec:Proof_of_Prop_Joukowsky_like}. In Figure \ref{fig:examples_of_gammas} we give two examples of $\gamma_1$ and $\gamma_2$ by numerical simulation.
\begin{figure}[h]
  \centering
  \includegraphics{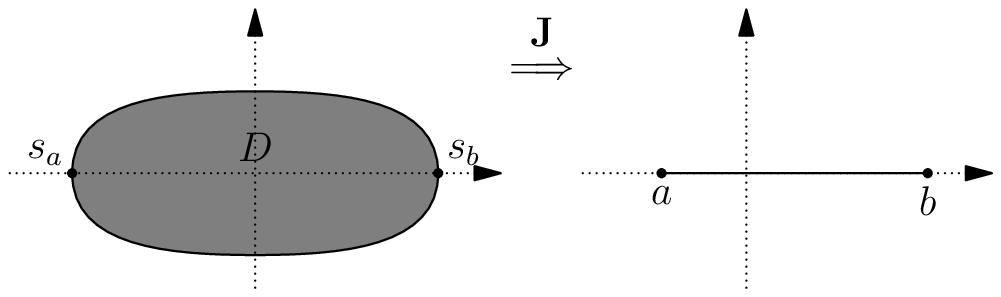}
  \caption{Mapping $\Jlike$ outside $D$.}
  \label{fig:domain_D}
\end{figure}

\begin{figure}[h]
  \centering
  \includegraphics{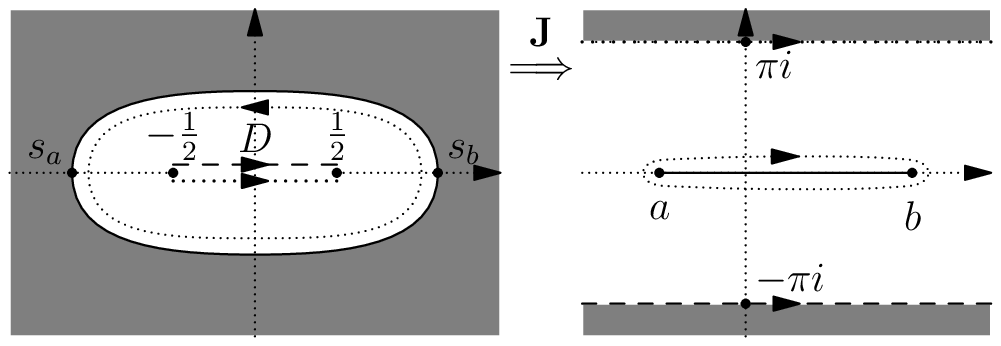}
  \caption{Mapping $\Jlike$ inside $D$.}
  \label{fig:domain_D_with_cut}
\end{figure}

\begin{figure}[ht]
  \centering
  \begin{minipage}[b]{0.35\linewidth}
    \includegraphics[width=\linewidth]{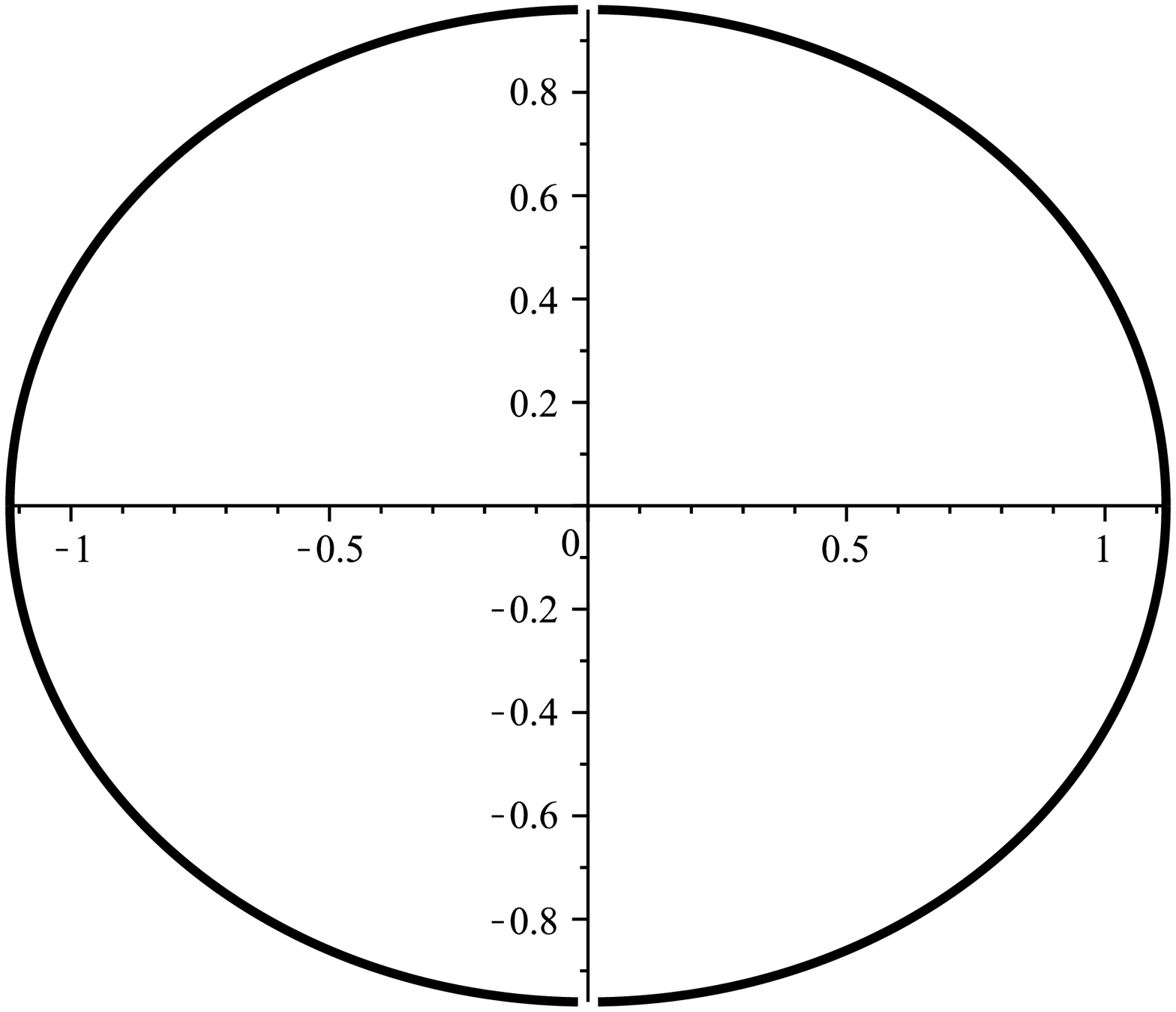}
  \end{minipage}
  \begin{minipage}[b]{0.35\linewidth}
    \includegraphics[width=\linewidth]{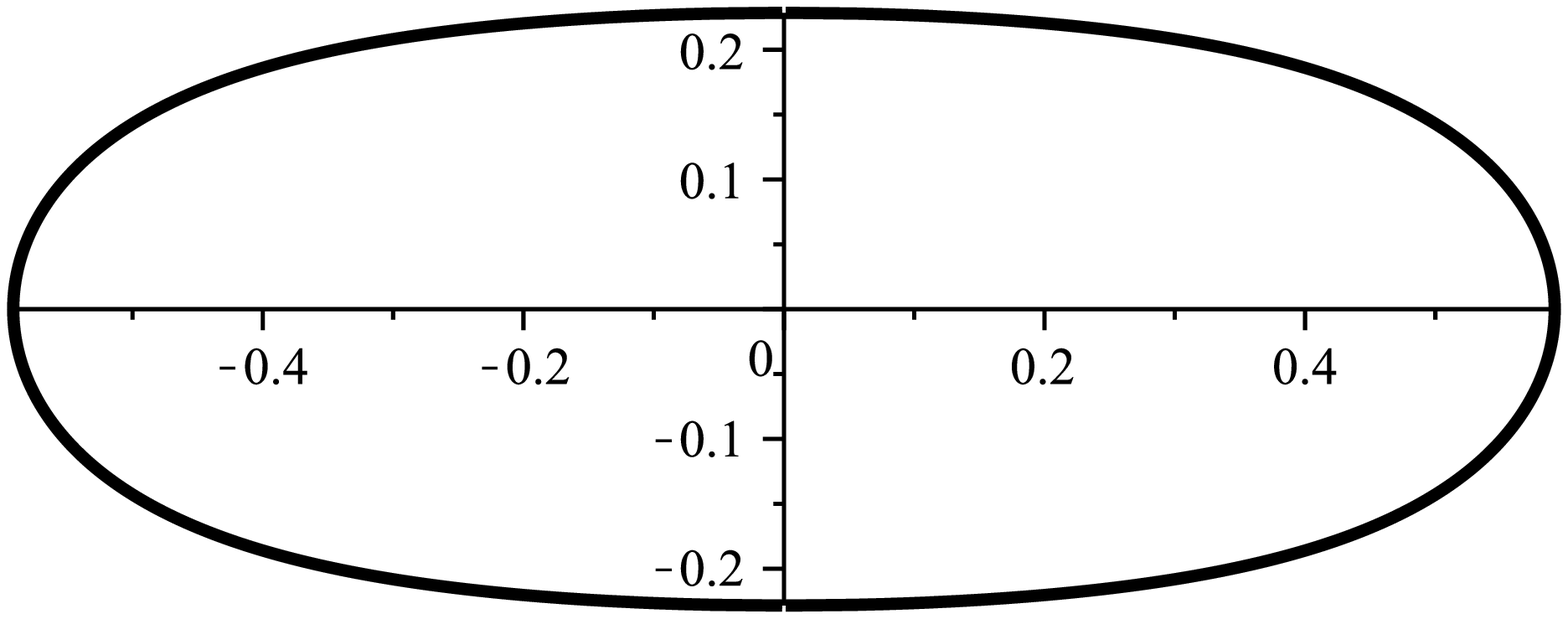}
  \end{minipage}
  \caption{The shapes of $\gamma_1$ and $\gamma_2$ when $c_1 = 1$ (left) and $c_1 = 10$ (right), where $\gamma_1$ is the upper boundary of the region and $\gamma_2$ the lower boundary.}
  \label{fig:examples_of_gammas}
\end{figure}

Let the functions $G_k(s)$ and $\hat{G}_k(s)$ be defined as
\begin{equation} \label{eq:defn_of_G_and_G_hat}
  G_k(s)  \colonequals  c^k_1 \frac{(s + \frac{1}{2})(s - \frac{1}{2})^k}{\sqrt{s^2 - \frac{1}{4} - \frac{1}{c_1}}}, \quad \hat{G}_k(s)  \colonequals  i \frac{e^{k(\frac{c_1}{2} + c_0)}}{\sqrt{c_1}} \frac{(s - \frac{1}{2})^{-k}}{\sqrt{s^2 - \frac{1}{4} - \frac{1}{c_1}}},
\end{equation}
where the square root $\sqrt{s^2 - \frac{1}{4} - \frac{1}{c_1}}$ has its branch cut along the upper edge of $D$ ($\gamma_1$ defined in Proposition \ref{prop:Joukowsky_like}) in $G_k(s)$, along the lower edge of $D$ ($\gamma_2$ defined in Proposition \ref{prop:Joukowsky_like}) in $\hat{G}_k(s)$, and $\sqrt{s^2 - \frac{1}{4} - \frac{1}{c_1}} \sim s$ as $s \to \infty$ in both cases. Further we define
\begin{align}
  r_k(x)  \colonequals  {}& 2 \lvert G_k(\Jinv_+(x)) \rvert, & \theta_k(x)  \colonequals {}& \arg(G_k(\Jinv_+(x))), \label{eq:defn_r_theta} \\
  \hat{r}_k(x)  \colonequals {}& 2 \lvert \hat{G}_k(\Jinv_-(x)) \rvert, & \hat{\theta}_k(x)  \colonequals {}& \arg(\hat{G}_k(\Jinv_-(x))), \label{eq:defn_r_theta_hat}
\end{align}
for $x$ in $(a, b)$.

We also need to define the functions
\begin{equation} \label{eq:expr_of_gfn_and_tilde_gfn-intro}
\gfn(z)  \colonequals  \int^b_a \log(z-x) \psi_V(x) dx,\quad \tilde \gfn(z)  \colonequals  \int^b_a \log(e^z-e^x) \psi_V(x) dx,
\end{equation}
with the branch cut of the logarithms for $z \in (-\infty, x)$ and $e^z \in (0, e^x)$, and $\psi_V$ is the equilibrium density. Let
\begin{equation}\label{def phi}
  \phi(z) \colonequals \gfn(z)+\tilde{\gfn}(z)-V(z)-\ell
\end{equation}
for $z \in \strip \setminus (-\infty, b)$, where $\ell$ is a constant to make $\phi(a) = \phi(b) = 0$ (see \eqref{var eq} and \eqref{var eq g}). Then we will see later on, see Section \ref{section: local par}, that
\begin{equation}
  f_b(z)  \colonequals  \left( -\frac{3}{4}\phi(z) \right)^{\frac{2}{3}}
\end{equation}
is a well defined analytic function in a certain neighborhood of $b$, with $f_b(b) = 0$, $f'_b(b) > 0$. Similarly,
\begin{equation}
  f_a(z)  \colonequals  \left( -\frac{3}{4}\phi(z) \pm \frac{3}{2} \pi i \right)^{\frac{2}{3}}
\end{equation}
(where the sign is $+$ in $\compC^+$ and $-$ in $\compC^-$,) is a well defined analytic function in a certain neighborhood of $a$, with $f_a(a) = 0$, $f'_a(a) < 0$.

Since both $p^{(n)}_j(z)$ and $q^{(n)}_j(e^z)$ are analytic functions that are real for $z\in\mathbb R$, it suffices to give their asymptotics in the upper half plane and the real line. For the ease of the statement of the theorem, we divide the upper half plane into regions $A_\delta$, $B_\delta$, $C_\delta$ and $D_\delta$ where $\delta$ is a small enough positive parameter, such that $C_\delta$ and $D_\delta$ are semicircles with radius $\delta$ and centered at $a$ and $b$ respectively, $B_\delta$ consists of complex numbers not in $C_\delta$ or $D_\delta$, with real part between $a$ and $b$ and imaginary part between $0$ and $\frac{\delta}{2}$, and $A_\delta = \compC^+ \setminus (B_\delta \cup C_\delta \cup D_\delta)$. See Figure \ref{figure:four_regions} for the shapes of the four regions.

   \begin{figure}[h]
   \begin{center}
     \includegraphics{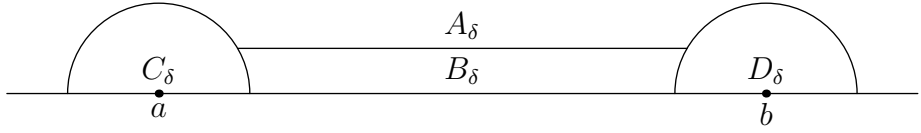}
   \caption{The four regions in the complex upper half plane where asymptotics for the multiple orthogonal polynomials $p^{(n)}_{n+k}(z)$
   and $q^{(n)}_{n+k}(e^z)$ will be given in different formulas.}
\label{figure:four_regions}
\end{center}
\end{figure}

\begin{thm}\label{theorem: asympt2}
  Let $V$ be one-cut regular.
  As $n\to\infty$, we have the following asymptotics of $p^{(n)}_{n+k}(z)$ and $q^{(n)}_{n+k}(e^z)$, $k\in\mathbb Z$, uniform for $z$ in regions $A_{\delta}$, $B_{\delta}$, $C_{\delta}$ and $D_{\delta}$, if $\delta$ is small enough.
  \begin{enumerate}[label=(\alph{*})]
  \item \label{enu:theorem: asympt2:A}
    In region $A_{\delta}$ and on its boundary,
    \begin{align}
      p^{(n)}_{n+k}(z) = {}& (1+\bigO(n^{-1})) G_k(\Jinv_1(z)) e^{n\gfn(z)},  \label{eq:asy_of_p_outside} \\
      q^{(n)}_{n+k}(e^z) = {}& (1+\bigO(n^{-1})) \hat{G}_k(\Jinv_2(z)) e^{n\tilde\gfn(z)}, \label{eq:asy_of_q_outside}
    \end{align}
    where \eqref{eq:asy_of_q_outside} is valid for $\Im z < \pi$.
  \item \label{enu:theorem: asympt2:B}
    In region $B_{\delta}$,
    \begin{align}
      p^{(n)}_{n+k}(z) ={}& (1 + \bigO(n^{-1})) G_k(\Jinv_1(z)) e^{n\gfn(z)} + (1 + \bigO(n^{-1})) G_k(\Jinv_2(z)) e^{n(V(z) - \tilde{\gfn}(z) + \ell)}, \label{eq:asy_of_p^n_n+k_bulk_region} \\
      q^{(n)}_{n+k}(e^z) ={}& (1+\bigO(n^{-1})) \hat{G}_k(\Jinv_2(z)) e^{n\tilde\gfn(z)} + (1+\bigO(n^{-1})) \hat{G}_k(\Jinv_1(z)) e^{n(V(z) -\gfn(z) + \ell)}. \label{eq:asy_of_q^n_n+k_bulk_region}
    \end{align}
    If $x \in (a,b)$ is on the boundary of region $B_{\delta}$, then
    \begin{align}
      p^{(n)}_{n+k}(x) = {}& r_k(x)e^{n\int\log|x-y|d\mu_V(y)}\left[\cos \left( n\pi \int_{x}^bd\mu_V(t) +\theta_k(x)\right) + \bigO(n^{-1}) \right], \label{eq:asy_of_p_bulk_real} \\
      q^{(n)}_{n+k}(e^x) = {}& \hat{r}_k(x) e^{n\int \log \lvert e^x - e^y \rvert d\mu_V(y)} \left[\cos \left( n\pi \int^b_x d\mu_V(t) + \hat{\theta}_k(x)\right) + \bigO(n^{-1}) \right].
    \end{align}
  \item \label{enu:theorem: asympt2:C}
    In region $C_{\delta}$, let $\Ai$ denote the Airy function \cite{Abramowitz-Stegun64}. Then
    \begin{multline}
      p^{(n)}_{n+k}(z) = \sqrt{\pi} \left[ \left( (1 + \bigO(n^{-1})) G_k(\Jinv_1(z)) - (1 + \bigO(n^{-1})) iG_k(\Jinv_2(z)) \right) n^{\frac{1}{6}} f^{\frac{1}{4}}_a(z) \Ai(n^{\frac{2}{3}}f_a(z)) \right. \\
      - \left. \left( (1 + \bigO(n^{-1})) G_k(\Jinv_1(z)) + (1 + \bigO(n^{-1})) iG_k(\Jinv_2(z)) \right) n^{-\frac{1}{6}} f^{-\frac{1}{4}}_a(z) \Ai'(n^{\frac{2}{3}}f_a(z)) \right] \\
      \times e^{\frac{n}{2} (\gfn(z) - \tilde{\gfn}(z) + V(z) + \ell)},
    \end{multline}
    \begin{multline}
      q^{(n)}_{n+k}(e^z) = \sqrt{\pi} \left[ \left( (1 + \bigO(n^{-1}))\hat{G}_k(\Jinv_2(z)) - (1 + \bigO(n^{-1})) i\hat{G}_k(\Jinv_1(z)) \right) n^{\frac{1}{6}} f^{\frac{1}{4}}_a(z) \Ai(n^{\frac{2}{3}}f_a(z)) \right. \\
      - \left. \left( (1 + \bigO(n^{-1})) \hat{G}_k(\Jinv_2(z)) + (1 + \bigO(n^{-1})) i\hat{G}_k(\Jinv_1(z)) \right) n^{-\frac{1}{6}} f^{-\frac{1}{4}}_a(z) \Ai'(n^{\frac{2}{3}}f_a(z)) \right] \\
      \times e^{\frac{n}{2} (\tilde{\gfn}(z) - \gfn(z) + V(z) + \ell)},
    \end{multline}
    where $f^{\frac{1}{4}}_a(z)$ has branch cut on $(a, b)$, and $f^{\frac{1}{4}}_a(x) > 0$ for $x < a$. In particular, if $z = a + f'_a(a)^{-1} n^{-2/3} t$ with $t$ bounded, then
    \begin{multline} \label{eq:Airy_p_C}
      e^{\frac{n}{2}(\tilde{\gfn}(z) - \gfn(z) - V(z) - \ell)} p^{(n)}_{n+k}(z) = \\
      (-1)^k \sqrt{2\pi} \left( \frac{1}{4} + \frac{1}{c_1} \right)^{-\frac{1}{8}} \left( \sqrt{\frac{1}{4} + \frac{1}{c_1}} + \frac{1}{2} \right)^{k-1} c_1^{k - \frac{1}{2}} (-f'_a(a))^{\frac{1}{4}} n^{\frac{1}{6}} \left( \Ai(t) + \bigO(n^{-\frac{1}{3}}) \right),
    \end{multline}
    \begin{multline} \label{eq:Airy_q_C}
      e^{\frac{n}{2}(\gfn(z) - \tilde{\gfn}(z) - V(z) - \ell)} q^{(n)}_{n+k}(e^z) = \\
      (-1)^k \sqrt{2\pi} \left( \frac{1}{4} + \frac{1}{c_1} \right)^{-\frac{1}{8}} \left( \sqrt{\frac{1}{4} + \frac{1}{c_1}} + \frac{1}{2} \right)^{-k} e^{k(\frac{c_1}{2} + c_0)} (-f'_a(a))^{\frac{1}{4}} n^{\frac{1}{6}} \left( \Ai(t) + \bigO(n^{-\frac{1}{3}}) \right).
    \end{multline}
  \item \label{enu:theorem: asympt2:D} In region $D_{\delta}$,
    \begin{multline} \label{eq:asy_of_p^n_n+k_right_edge_region}
      p^{(n)}_{n+k}(z) = \sqrt{\pi} \left[ \left( (1 + \bigO(n^{-1}))G_k(\Jinv_1(z)) - (1 + \bigO(n^{-1})) iG_k(\Jinv_2(z)) \right) n^{\frac{1}{6}} f^{\frac{1}{4}}_b(z) \Ai(n^{\frac{2}{3}}f_b(z)) \right. \\
      - \left. \left( (1 + \bigO(n^{-1})) G_k(\Jinv_1(z)) + (1 + \bigO(n^{-1})) iG_k(\Jinv_2(z)) \right) n^{-\frac{1}{6}} f^{-\frac{1}{4}}_b(z) \Ai'(n^{\frac{2}{3}}f_b(z)) \right] \\
      \times e^{\frac{n}{2} (\gfn(z) - \tilde{\gfn}(z) + V(z) + \ell)},
    \end{multline}
    \begin{multline}
      q^{(n)}_{n+k}(e^z) = \sqrt{\pi} \left[ \left( (1 + \bigO(n^{-1}))\hat{G}_k(\Jinv_2(z)) - (1 + \bigO(n^{-1})) i\hat{G}_k(\Jinv_1(z)) \right) n^{\frac{1}{6}} f^{\frac{1}{4}}_b(z) \Ai(n^{\frac{2}{3}}f_b(z)) \right. \\
      - \left. \left( (1 + \bigO(n^{-1})) \hat{G}_k(\Jinv_2(z)) + (1 + \bigO(n^{-1})) i\hat{G}_k(\Jinv_1(z)) \right) n^{-\frac{1}{6}} f^{-\frac{1}{4}}_b(z) \Ai'(n^{\frac{2}{3}}f_b(z)) \right] \\
      \times e^{\frac{n}{2} (\tilde{\gfn}(z) - \gfn(z) + V(z) + \ell)}.
    \end{multline}
    If $z = b + f'_b(b)^{-1} n^{-2/3} t$ with $t$ bounded, then
    \begin{multline} \label{eq:asy_of_p^n_n+k_very_near_b}
      e^{\frac{n}{2}(\tilde{\gfn}(z) - \gfn(z) - V(z) - \ell)} p^{(n)}_{n+k}(z) = \\
      \sqrt{2\pi} \left( \frac{1}{4} + \frac{1}{c_1} \right)^{-\frac{1}{8}} \left( \sqrt{\frac{1}{4} + \frac{1}{c_1}} - \frac{1}{2} \right)^{k-1} c^{k-\frac{1}{2}}_1 f'_b(b)^{\frac{1}{4}} n^{\frac{1}{6}} \left( \Ai(t) + \bigO(n^{-\frac{1}{3}}) \right),
    \end{multline}
    \begin{multline} \label{eq:Airy_q_D}
      e^{\frac{n}{2}(\gfn(z) - \tilde{\gfn}(z) - V(z) - \ell)} q^{(n)}_{n+k}(e^z) = \\
      \sqrt{2\pi} \left( \frac{1}{4} + \frac{1}{c_1} \right)^{-\frac{1}{8}} \left( \sqrt{\frac{1}{4} + \frac{1}{c_1}} - \frac{1}{2} \right)^{-k} e^{k(\frac{c_1}{2} + c_0)} f'_b(b)^{\frac{1}{4}} n^{\frac{1}{6}} \left( \Ai(t) + \bigO(n^{-\frac{1}{3}}) \right).
    \end{multline}
  \item \label{enu:theorem: asympt2:5}
    The inner product $h^{(n)}_{n+k}$ of $p^{(n)}_{n+k}(z)$ and $q^{(n)}_{n+k}(e^z)$ defined in \eqref{kappa} has the asymptotics
    \begin{equation}
      h^{(n)}_{n+k} = 2\pi c^{k+\frac{1}{2}}_1 e^{k(\frac{c_1}{2} + c_0)} e^{n\ell} (1 + \bigO(n^{-1})).
    \end{equation}
  \end{enumerate}
\end{thm}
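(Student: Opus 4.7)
The plan is to prove all parts of the theorem by a single Deift/Zhou steepest-descent analysis of the two $1\times 2$ Riemann--Hilbert (RH) problems alluded to in the introduction: one characterizing $p^{(n)}_{n+k}$, and one of the same shape characterizing $q^{(n)}_{n+k}$. Each has a first component analytic on $\compC\setminus\realR$ and a second analytic on $\strip\setminus\realR$, coupled by a jump on $\realR$ carrying the weight $e^{-nV(x)}$ (so that $p^{(n)}_{n+k}(z)\sim z^{n+k}$ at infinity, and analogously for $q^{(n)}_{n+k}(e^z)$). I would begin with a $g$-function normalization, conjugating by $\diag(e^{n\gfn},e^{n\tilde{\gfn}})$ with the $g$-functions from \eqref{eq:expr_of_gfn_and_tilde_gfn-intro}. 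Property \ref{enu:equ_measure_intro_5} makes the jumps on $\realR\setminus[a,b]$ exponentially small, while \ref{enu:equ_measure_intro_4} makes the jump on $(a,b)$ purely oscillatory with phase governed by $\phi$ as defined in \eqref{def phi}.

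Next I open a lens around $[a,b]$ in the standard way. Property \ref{enu:equ_measure_intro_2} guarantees $\Re\phi<0$ on the lens sides, so the lens jumps decay exponentially, uniformly on compacts away from $\{a,b\}$. What remains is a model problem with a constant off-diagonal jump on $[a,b]$. The key device $\Jlike$ now enters: pulling back by $\Jinv_1$ and $\Jinv_2$ converts the $1\times 2$ model problem into a scalar shifted RH problem on $\compC\setminus\overline{D}$ and $D\setminus[-\tfrac12,\tfrac12]$ respectively, whose explicit solutions are the functions $G_k\circ\Jinv_{1,2}$ (for $p$) and $\hat{G}_k\circ\Jinv_{1,2}$ (for $q$) built in \eqref{eq:defn_of_G_and_G_hat}. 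The square-root prefactor produces the correct branching across $\gamma_1,\gamma_2$, the factors $(s-\tfrac12)^{\pm k}$ absorb the degree shift by $k$, and the constants are calibrated so that the behavior at $s=\infty$ matches the required normalization of the original polynomial. For the endpoints I build local Airy parametrices in the coordinates $f_a,f_b$; these are conformal near $a,b$ by property \ref{enu:equ_measure_intro_3}, and the $\pm\tfrac32\pi i$ shift in $f_a$ reflects the $2\pi i$-periodicity of the logarithm in $\tilde{\gfn}$ across the left edge and is the only substantive departure from the one-matrix Airy construction.

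After matching the inner and outer parametrices on $\partial U_a,\partial U_b$ (the match is $I+\bigO(n^{-1})$), the resulting small-norm RH problem has solution $I+\bigO(n^{-1})$ uniformly, and unwinding the transformations produces the claimed asymptotics. In $A_\delta$ only the outer parametrix contributes, giving \eqref{eq:asy_of_p_outside}--\eqref{eq:asy_of_q_outside}; in the lens region $B_\delta$ both terms of the oscillatory factorization survive, yielding \eqref{eq:asy_of_p^n_n+k_bulk_region}--\eqref{eq:asy_of_q^n_n+k_bulk_region}, with the cosine form \eqref{eq:asy_of_p_bulk_real} coming from $\Jinv_+(x)=\overline{\Jinv_-(x)}$ for $x\in(a,b)$. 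In $C_\delta,D_\delta$ the Airy parametrix supplies the $\Ai,\Ai'$ combinations; specializing $z$ to the self-similar variable $t$ and using the identities $s_a=-\sqrt{1/4+1/c_1}$, $s_b=\sqrt{1/4+1/c_1}$ from \eqref{eq:defn_of_s_a_and_s_b} reduces $G_k(\Jinv_{1,2})$ and $\hat{G}_k(\Jinv_{1,2})$ to the explicit constants appearing in \eqref{eq:Airy_p_C}--\eqref{eq:Airy_q_C} and \eqref{eq:asy_of_p^n_n+k_very_near_b}--\eqref{eq:Airy_q_D}. Part \ref{enu:theorem: asympt2:5} is read off from the subleading coefficient at $\infty$ of the normalized RH solution, using the large-$s$ expansions of $G_k$ and $\hat{G}_k$ together with the $e^{n\ell}$ factor accumulated during the $g$-normalization.

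The principal obstacle, as flagged in the introduction, is the non-standard nature of the global parametrix: because the two components of the vector unknown live on different Riemann surfaces ($\compC$ and $\strip$), the usual $2\times 2$ outer construction built from a single square-root of a genus-zero curve is unavailable. The substantive work therefore lies in the global-parametrix step, where $\Jlike$ is used to convert the $1\times 2$ problem into a scalar shifted problem on $D$, and one must verify that $G_k$ and $\hat{G}_k$ from \eqref{eq:defn_of_G_and_G_hat} have exactly the right branch cuts on $\gamma_1,\gamma_2$, the right endpoint behavior at $s_a,s_b$, and the right normalization at $\infty$ to serve simultaneously as the global parametrix and as boundary data for the Airy local models. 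Once that construction is in place, the remainder of the proof is a careful but essentially routine implementation of the Deift/Zhou method.
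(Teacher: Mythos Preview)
Your outline matches the paper's approach quite closely, but you underestimate where the real difficulty lies. You write that after matching, ``the resulting small-norm RH problem has solution $I+\bigO(n^{-1})$'' and that ``the remainder of the proof is a careful but essentially routine implementation of the Deift/Zhou method.'' This is precisely the step that is \emph{not} routine. Because $P$ and $P^{(\infty)}$ are $1\times 2$ row vectors, there is no way to form the ratio $P(P^{(\infty)})^{-1}$ and obtain a standard $2\times 2$ small-norm problem; the symbol $I$ in your sentence has no meaning here. The paper's resolution is to apply the $\Jlike$-pullback not only to the model problem but to the full problem $P$ as well, obtaining scalar functions $\F(s)$ and $F(s)$ on the $s$-plane, and then to set $R(s)=\F(s)/F(s)$. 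This $R$ satisfies a \emph{shifted} scalar RH problem: its jump on $\Sigma'=\Jinv_1(\Sigma_P\setminus[a,b])$ involves the value of $R_-$ at the companion point $\Jinv_2(\Jlike(s))\in\Sigma''$, and vice versa. Proving $R=1+\bigO(n^{-1})$ requires setting up an operator $\Delta_R$ encoding this non-local coupling, bounding its $L^2$-norm (which needs care near $s=\pm\tfrac12$ where the derivative of $\Jinv_2\circ\Jlike$ blows up), and running a Neumann-series argument for $1-C_-\Delta_R$. This is the genuinely new ingredient, and your proposal does not account for it.

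Two smaller points. First, your sign on the lens is reversed: one has $\Re\phi>0$ on the lens lips (so that $e^{-n\phi}$ is small), not $\Re\phi<0$. Second, $h^{(n)}_{n+k}$ is not read off at $s\to\infty$: it appears as the leading coefficient of $\Clike p_{n+k}(z)=Y_2(z)$ as $\Re z\to+\infty$, which under the $\Jlike$-pullback corresponds to the behaviour of $F(\Jinv_2(z))$ as $s=\Jinv_2(z)\to\tfrac12$, and the constant $c_1^{k+1/2}e^{k(c_1/2+c_0)}$ comes from expanding the inner-branch formula for $F$ there.
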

The above result is only valid if the equilibrium measure $\mu_V$ is supported on a single interval. In the case of a multi-interval support, several non-trivial modifications are needed to make the asymptotic analysis of the polynomials work. For instance, the mapping $\Jlike$ would have to be modified.
In general it is not easy to determine whether an external field $V$ is one-cut regular or not, or to find the support $[a, b]$ of the
equilibrium measure and the density function $\psi_V$. However, if the external field is strongly convex, i.e.\ $V''(x)$ is bounded from below by a positive constant for $x\in\mathbb R$, then $V$ is one-cut regular, and we can compute the support and density function of the equilibrium measure explicitly in terms of the functions $\Jinv_\pm$ defined before.
\begin{thm}\label{theorem: convex}
  If $V$ is a real analytic strongly convex function, then $V$ is one-cut regular. Moreover, the quantities $c_0$ and $c_1$ that are related to the endpoints $a, b$ of the support of the equilibrium measure by \eqref{eq:first_def_of_c_0} and \eqref{eq:first_def_of_c_1} are obtained by solving a pair of equations \eqref{intro mod eq 1} and \eqref{intro mod eq 2} expressed in $V$, and $a, b$ are determined by $c_1$ and $c_0$ by \eqref{eq:defining_formula_of_a_b}, \eqref{eq:first_def_of_c_0} and \eqref{eq:first_def_of_c_1}. The density function $\psi_V$ is given by
  \begin{equation} \label{eq:formula_of_density_function_in_z}
    \psi_V(x)= \frac{1}{2\pi^2}\int_a^b V''(u)\log \left|\frac{\Jinv_+(u) - \Jinv_-(x)}{\Jinv_+(u)-\Jinv_+(x)}\right| d u.
  \end{equation}
\end{thm}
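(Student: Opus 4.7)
The plan is to construct the equilibrium density $\psi_V$ explicitly and verify the defining properties (i)--(v) of one-cut regularity, using the Joukowsky-like conformal map $\Jlike$ to convert the Euler-Lagrange equation into a tractable scalar Riemann-Hilbert problem, and using strong convexity of $V$ to guarantee positivity and the variational inequality.

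First, I would set up the singular integral equation. Assuming the minimizer is supported on an interval $[a,b]$ with density $\psi_V$, differentiating \eqref{var eq} in $x$ yields $\pv\int_a^b \bigl(\frac{1}{x-t} + \frac{e^x}{e^x-e^t}\bigr)\psi_V(t)\,dt = V'(x)$ for $x\in(a,b)$. Introducing the auxiliary function $R(z) := \gfn'(z) + \tilde{\gfn}'(z) - V'(z)$ on $\strip\setminus[a,b]$, this is equivalent to the scalar RH conditions $R_+(x) + R_-(x) = 0$ on $(a,b)$ together with the Plemelj jump $R_+(x) - R_-(x) = -4\pi i\,\psi_V(x)$.

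Second, I would pull back via $\Jlike$: set $M(s) := R(\Jlike(s))$ for $s\in\compC\setminus\overline D$, and $M(s) := -R(\Jlike(s))$ for $s\in D\setminus[-\tfrac{1}{2},\tfrac{1}{2}]$. By \eqref{eq:defn_of_I_+}--\eqref{eq:defn_of_I_-}, $\Jinv_1$ and $\Jinv_2$ approach $\gamma_1$ (resp.\ $\gamma_2$) from opposite sides of $(a,b)$ in the $z$-plane, so the relation $R_+ + R_- = 0$ makes $M$ continuous across $\gamma_1\cup\gamma_2$, hence analytic on $\compC\setminus[-\tfrac{1}{2},\tfrac{1}{2}]$, with density recovered by $\psi_V(x) = \tfrac{i}{2\pi}\,M(\Jinv_+(x))$. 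The jump of $M$ on $(-\tfrac{1}{2},\tfrac{1}{2})$ is inherited from $\Jlike$ sending the two sides of this interval to the top and bottom edges of the strip $\strip$; since $\tilde\gfn'$ is $2\pi i$-periodic while $\gfn'$ and $V'$ are not, the $\tilde\gfn'$ contribution cancels and the jump reduces to an explicit expression involving $\gfn'$ and $V'$ evaluated on $\Im z = \pm\pi$.

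Third, I would solve the RH problem for $M$ by a Cauchy integral and deduce the closed-form density. Controlling $M$ at infinity (its growth is dictated by that of $V'$ composed with $\Jlike$) and combining it with the computed jump on $[-\tfrac{1}{2},\tfrac{1}{2}]$ gives a Cauchy integral representation of $M$ in terms of $V'$, plus an entire correction. Pulling back through $\Jlike$ and integrating by parts once (using $\Jlike'(s_a)=\Jlike'(s_b)=0$ to annihilate the boundary terms and convert $V'$ into $V''$ under a logarithmic kernel) produces the claimed formula $\psi_V(x) = \tfrac{1}{2\pi^2}\int_a^b V''(u)\,\log\bigl|\tfrac{\Jinv_+(u)-\Jinv_-(x)}{\Jinv_+(u)-\Jinv_+(x)}\bigr|\,du$. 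The pair of equations \eqref{intro mod eq 1}--\eqref{intro mod eq 2} determining $c_0, c_1$ (equivalently $a, b$) then arises from the mass normalization $\int_a^b\psi_V = 1$ and the requirement that $\psi_V$ vanish like a square root at the endpoints, equivalent to non-singularity of the Cauchy representation at the critical points $s_a, s_b$; strong convexity of $V$ ensures the system has a unique positive solution by a monotonicity argument analogous to the one following \eqref{eq:first_def_of_c_1}. With $\psi_V$ in hand, property (ii) follows because the log-kernel is strictly positive on $(a,b)\times(a,b)$ (both $\Jinv_+(u)$ and $\Jinv_+(x)$ lie on $\gamma_1$ while $\Jinv_-(x)$ lies on the opposite boundary $\gamma_2$, so $|\Jinv_+(u)-\Jinv_-(x)|>|\Jinv_+(u)-\Jinv_+(x)|$) and $V''>0$; property (iii) follows from the local expansion $\Jinv_+(x)-\Jinv_-(x)=O(\sqrt{x-a})$ near $a$ (a consequence of $\Jlike'(s_a)=0$) and similarly near $b$; the strict inequality (v) follows by noting that $\tfrac{d}{dx}$ of the left-hand side of \eqref{var ineq} equals $-R(x)$ for $x\in\mathbb R\setminus[a,b]$, whose sign is controlled via the explicit form of $M$ together with $V''>0$. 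The main obstacle will be step three: producing the clean log-integral formula requires careful bookkeeping of the jump on $[-\tfrac{1}{2},\tfrac{1}{2}]$ together with the entire correction at infinity, and a delicate integration by parts in the Joukowsky-like parameterization with nontrivial boundary-term cancellations.
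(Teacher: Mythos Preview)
Your approach has a genuine gap at the level of domains of definition. You define $R(z) = \gfn'(z) + \tilde\gfn'(z) - V'(z)$ and then set $M(s) = \pm R(\Jlike(s))$ according to whether $s$ lies outside or inside $D$. The trouble is that $R$ is only defined where all three of $\gfn'$, $\tilde\gfn'$, and $V'$ are simultaneously defined, i.e.\ on a strip-like neighborhood of $\realR$ minus $[a,b]$. But $\Jlike$ maps $\compC\setminus\overline D$ onto the entire plane $\compC\setminus[a,b]$, so for $s$ with $\Jlike(s)$ of large imaginary part neither $\tilde\gfn'(\Jlike(s))$ (which lives on the cylinder $\cyld$) nor $V'(\Jlike(s))$ (which is only real analytic) makes sense. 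Your $M$ is therefore not globally defined, and the reduction to a scalar RH problem with jump only on $[-\tfrac12,\tfrac12]$ collapses.

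The paper's construction avoids exactly this issue by exploiting the fact that the two branches of $\Jlike$ have \emph{different} images: $\Jlike(\compC\setminus\overline D)=\compC\setminus[a,b]$ matches the domain of $G=\gfn'$, while $\Jlike(D\setminus[-\tfrac12,\tfrac12])=\strip\setminus[a,b]$ matches the domain of $\tilde G=\tilde\gfn'$. So one sets $M(s)=G(\Jlike(s))$ outside $D$ and $M(s)=\tilde G(\Jlike(s))$ inside $D$ (no sums, no $V'$). The function $M$ is then globally defined on $\compC\setminus(\gamma_1\cup\gamma_2\cup\{\pm\tfrac12\})$, and the Euler--Lagrange equation becomes the additive jump $M_+(s)+M_-(s)=V'(\Jlike(s))$ on $\gamma_1\cup\gamma_2$, where $V'$ is certainly defined since $\Jlike(\gamma_1\cup\gamma_2)=[a,b]$. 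The resulting scalar RH problem is solved by a single Cauchy integral over $\gamma$, the endpoint equations \eqref{intro mod eq 1}--\eqref{intro mod eq 2} emerge as the asymptotic conditions at $s=\infty$ and $s=\tfrac12$ (not from mass normalization and square-root vanishing as you suggest), and the integration by parts producing \eqref{eq:formula_of_density_function_in_z} is then a straightforward manipulation of this Cauchy integral. Your verification of (ii) via the geometric inequality $|\Jinv_+(u)-\Jinv_-(x)|>|\Jinv_+(u)-\Jinv_+(x)|$ is correct and matches the paper; for (iii) and (v) the paper instead analyzes the auxiliary function $H(z)=(G+\tilde G-V')^2$, showing it extends analytically through $[a,b]$ with simple zeros at $a,b$, and uses the direct derivative bound $\gfn''+\tilde\gfn''-V''<-V''<0$ off the support.
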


\begin{rmk}
The conditions of Theorem \ref{theorem: convex} are sufficient but
far from necessary to have one-cut regularity. See Example 2 in Appendix \ref{section: equi examples} for
non-convex one-cut regular external fields.
\end{rmk}

For the random matrix model without external source, it is well known that
\begin{enumerate}[label=(\arabic{*})]
\item \label{enu:3_convergence_of_equ_meansure:1} the empirical distribution of the eigenvalues of the random matrix,
\item \label{enu:3_convergence_of_equ_meansure:2} the normalized counting measure of the $n$-Fekete set,
\item \label{enu:3_convergence_of_equ_meansure:3} the normalized counting measure of the zeros of the orthogonal polynomial (which is the average characteristic polynomial of the random matrix),
\end{enumerate}
all converge to the equilibrium measure as the dimension $n \to \infty$. The counterpart of \ref{enu:3_convergence_of_equ_meansure:3} in our equispaced external source model, in case that the external field $V$ is one-cut regular, is a direct consequence of Theorem \ref{theorem: asympt2}\ref{enu:theorem: asympt2:B}.
\begin{cor} \label{cor:conv}
  Let $V$ be one-cut regular, and $p^{(n)}_n$ and $q^{(n)}_n$ be defined by \eqref{orthoIIb} and \eqref{orthoI} respectively. Suppose real numbers $z_j$ and $\z_j$ are zeros of $p^{(n)}_n(z)$ and $q^{(n)}_n(e^z)$ respectively, and $\mu_n=\frac{1}{n}\sum_{j=1}^n\delta_{z_j}$ and $\hat{\mu}_n=\frac{1}{n}\sum_{j=1}^n\delta_{\z_j}$ respectively. Then as $n\to\infty$, $\mu_n$ and $\hat\mu_n$ converge weakly to $\mu_V$.
\end{cor}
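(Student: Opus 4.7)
The plan is to extract the weak convergence directly from the oscillatory bulk asymptotics on the real line provided by Theorem \ref{theorem: asympt2}\ref{enu:theorem: asympt2:B} with $k=0$. I detail the argument for $\mu_n$; the one for $\hat\mu_n$ is identical after replacing $(r_0,\theta_0)$ by $(\hat r_0,\hat\theta_0)$.

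First I would localize the zeros. Because the weights $\{e^{jx}e^{-nV(x)}\}_{j=0}^{n-1}$ form an AT system, all $n$ zeros $z_j$ of $p^{(n)}_n$ are real and simple. The exterior asymptotic \eqref{eq:asy_of_p_outside}, combined with the fact that $G_0(\Jinv_1(z))$ is non-vanishing on $\compC\setminus[a,b]$ (its only zero $s=-\tfrac12$ lies inside $D$, hence outside the range of $\Jinv_1$), shows that for every fixed $\delta>0$ and all sufficiently large $n$, $p^{(n)}_n$ has no zeros in $\mathbb{R}\setminus[a-\delta,b+\delta]$. Hence $\{\mu_n\}$ is tight.

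Next I would count zeros in the bulk using \eqref{eq:asy_of_p_bulk_real}. Setting $F(x):=\int_x^b \psi_V(t)\,dt$, the prefactor $r_0(x)e^{n\int\log|x-y|d\mu_V(y)}$ is continuous and non-vanishing on $(a,b)$, while the phase $n\pi F(x)+\theta_0(x)$ is strictly decreasing with derivative of order $n$. A routine implicit-function argument then matches each solution $x^*_j$ of $n\pi F(x^*_j)+\theta_0(x^*_j)=(j+\tfrac12)\pi$ to a unique zero $z_j$ at distance $O(n^{-2})$, accounting for all zeros of $p^{(n)}_n$ in $[a+\delta,b-\delta]$. For any bounded continuous $f$, the points $x^*_j$ are the images under $F^{-1}$ of a regular grid of spacing $1/n$, so
\begin{equation*}
\frac{1}{n}\sum_{x^*_j\in[a+\delta,b-\delta]} f(x^*_j)\ \longrightarrow\ \int_{a+\delta}^{b-\delta} f(x)\,d\mu_V(x)
\end{equation*}
as a Riemann sum, and replacing $x^*_j$ by the true zero $z_j$ preserves this limit by uniform continuity of $f$ on compacts.

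Finally I would let $\delta\to0$. The zeros in $[a-\delta,a+\delta]\cup[b-\delta,b+\delta]$ are controlled by combining the square-root vanishing of $\psi_V$ at the endpoints (the same cosine count produces $O(n\delta^{3/2})$ zeros in each such neighborhood) with the $O(1)$ Airy-regime zeros near $a$ and $b$ from parts \ref{enu:theorem: asympt2:C}--\ref{enu:theorem: asympt2:D}; their total contribution to $\int f\,d\mu_n$ is $O(\|f\|_\infty(\delta^{3/2}+n^{-1}))$, which vanishes after sending $n\to\infty$ and then $\delta\to0$. This gives $\mu_n\weak\mu_V$. The main (purely bookkeeping) obstacle is precisely this uniform endpoint accounting; everything else is a standard oscillatory counting argument, and the proof for $\hat\mu_n$ is verbatim the same.
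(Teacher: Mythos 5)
Your argument is correct and follows the route the paper has in mind: since the paper gives no explicit proof and states the corollary as ``a direct consequence of Theorem \ref{theorem: asympt2}\ref{enu:theorem: asympt2:B},'' the expected proof is precisely this zero-counting from the cosine asymptotic \eqref{eq:asy_of_p_bulk_real}, combined with the non-vanishing of the exterior formula \eqref{eq:asy_of_p_outside} (which keeps the zeros inside $[a-\delta,b+\delta]$ for large $n$).

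One small wording issue in the endpoint accounting: the bulk cosine formula is only uniform in $B_\delta$ for fixed $\delta$, so it cannot literally be applied inside $[a-\delta,a+\delta]$ to ``produce $O(n\delta^{3/2})$ zeros'' there. The cleaner bookkeeping is to use that $p^{(n)}_n$ has exactly $n$ zeros (all real and simple, by the AT system): the cosine count gives $n\int_{a+\delta}^{b-\delta}\psi_V+\bigO(1)=n(1-\bigO(\delta^{3/2}))+\bigO(1)$ zeros in $[a+\delta,b-\delta]$, and there are none outside $[a-\delta,b+\delta]$, so the two endpoint windows together contain $\bigO(n\delta^{3/2})+\bigO(1)$ zeros --- the Airy-regime estimate is not actually needed. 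For completeness, a somewhat slicker alternative route is the logarithmic-potential argument from part \ref{enu:theorem: asympt2:A}: \eqref{eq:asy_of_p_outside} gives $\frac1n\log|p^{(n)}_n(z)|\to\Re\gfn(z)=\int\log|z-x|\,d\mu_V(x)$ locally uniformly off $\realR$, and together with tightness this forces $\mu_n\weak\mu_V$ by the uniqueness of measures with a given logarithmic potential. Either way, your proof is sound.
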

Counterparts of \ref{enu:3_convergence_of_equ_meansure:1}, \ref{enu:3_convergence_of_equ_meansure:2} and \ref{enu:3_convergence_of_equ_meansure:3} can also be proved by mimicking the arguments in \cite[Sections 6.3 and 6.4]{Deift99}. Although we are not going to pursue this approach, we remark that all the counterparts of \ref{enu:3_convergence_of_equ_meansure:1}--\ref{enu:3_convergence_of_equ_meansure:3} should not rely on the assumption of one-cut regularity.

\subsubsection*{Outline}

In Section \ref{section: exist uni}, we prove the uniqueness and existence of
the equilibrium measure, as stated in Theorem \ref{theorem: exist
uni}. In Section \ref{section:equi}, we explain in detail how one can construct the
equilibrium measure $\mu_V$ and its density in the case of a strongly convex
external field $V$, by solving a scalar RH problem and by using the
transformation $\Jlike$. This also leads to the proof
of Theorem \ref{theorem: convex}. In Section \ref{section: RH p}, we
characterize the polynomials $p^{(n)}_{n+k}$ in terms of a $1\times 2$ RH
problem, and we analyse this RH problem asymptotically for large $n$.
In Section \ref{section: RH
q}, we formulate a similar RH problem and perform a similar asymptotic analysis for the polynomials $q^{(n)}_{n+k}$. In Section \ref{section:
proof results}, we use the results obtained from the RH analysis to
prove Theorem \ref{theorem: asympt2} and Corollary \ref{cor:conv}.
In Appendix \ref{section: proofs lemmas}, we prove Proposition
\ref{propq} and several technical lemmas used in this paper. In Appendix \ref{section: equi examples} we give explicit formulas for the equilibrium measure for quadratic and quartic $V$ as examples. In Appendix \ref{sec:saddle_point_for_quadratic} we derive the asymptotics for the polynomials $p^{(n)}_n$ for quadratic $V$ using an
integral representation and the classical steepest descent method. In this derivation we show that the transformation $\Jlike$ also arises in a more direct way in the equispaced external source model.

\medskip

The main novel feature of this paper is the successful asymptotic analysis of the non-standard RH problem which characterizes the multiple orthogonal polynomials. Although the resulting large $n$ asymptotics for the polynomials resemble those for usual orthogonal polynomials relevant in the one-matrix model without external source, the RH method used to obtain those asymptotics had to be modified in a nontrivial way. We feel that the modification of the RH method, with in particular the use of the transformation $\Jlike$, is the main contribution of the present paper. We believe it is the first time that a RH analysis has been carried through for multiple orthogonal polynomials with a growing number of orthogonality weights.

\section{Proof of Theorem \ref{theorem: exist uni}}
\label{section: exist uni} Following \cite[Section 6.2]{Deift99}
(see also \cite{Johansson98}), one can prove the existence of a unique
Borel probability measure minimizing the energy $I_V(\mu)$ given in \eqref{energy}, which
can conveniently be written as
\begin{equation}\label{energy2}
 I_V(\mu)=\iint k^V(t,s)d\mu(t)d\mu(s),\end{equation}
 with
 \begin{equation}\label{kV}
 k^V(t,s)=\frac{1}{2}\log|t-s|^{-1}+\frac{1}{2}\log|e^t-e^s|^{-1}+\frac{1}{2}V(t)+\frac{1}{2}V(s).
\end{equation}
From the inequality $|v-u|\leq \sqrt{1+v^2}\sqrt{1+u^2}$ for
$v,u\in\mathbb R$, we obtain
 \begin{equation}
\frac{1}{2}\log|t-s|^{-1}+\frac{1}{2}\log|e^t-e^s|^{-1}\geq
-\frac{1}{4}\log(1+t^2)-\frac{1}{4}\log(1+s^2)-\frac{1}{4}\log(1+e^{2t})-\frac{1}{4}\log(1+e^{2s}).
\end{equation}
If $V$ satisfies the growth condition (\ref{Vinfty}), it easily
follows that there exists a constant $c_V$ such that $k^V(t,s)\geq
c_V$ for all $s,t\in\mathbb R$. Thus $I_V(\mu)\geq c_V$ for any
probability measure $\mu$, which implies that
$E_V=\inf\{I_V(\mu)\}\geq c_V$, where the infimum is taken over all
probability measures on $\mathbb R$. This is the crucial estimate
for proving the existence of a unique equilibrium measure. The
existence follows, exactly as in \cite[Section 6.2]{Deift99}, from
the construction of a vaguely convergent tight sequence $\mu_n$ of
measures with limit $\mu$ such that $I_V(\mu)=E_V$, as well as the
fact that any minimizer must have compact support.

\medskip

The uniqueness is slightly more complicated, and we need the
following lemma for it:
\begin{lem}
Let $\mu$ be a finite signed measure on $\mathbb R$ such that $\int
d\mu=0$ and with compact support. Then
\begin{align}
&\iint\log|x-y|^{-1}d\mu(x)d\mu(y)\geq 0,\\
&\iint\log|e^x-e^y|^{-1}d\mu(x)d\mu(y)\geq 0.
\end{align}
\end{lem}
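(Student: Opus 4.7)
The plan is to reduce the second inequality to the first via a change of variable, and then to prove the first inequality using a Frullani-type representation of $\log|x-y|^{-1}$ together with the positive-definiteness of the Cauchy kernel.

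For the reduction, observe that since $\mu$ has compact support, say in $[-M,M]$, the pushforward $\nu \colonequals (\exp)_*\mu$ is a finite signed measure supported in $[e^{-M},e^M] \subset (0,\infty)$ with total mass $\int d\nu = \int d\mu = 0$. Changing variables $u=e^x$, $v=e^y$ gives
\[
\iint \log|e^x-e^y|^{-1}\,d\mu(x)\,d\mu(y) = \iint \log|u-v|^{-1}\,d\nu(u)\,d\nu(v),
\]
so the second inequality follows immediately from the first applied to $\nu$.

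For the first inequality, I would start from the Frullani identity
\[
\log\tfrac{1}{|x-y|} = \int_0^\infty \frac{e^{-t|x-y|}-e^{-t}}{t}\,dt,\qquad x\neq y,
\]
integrate against $d\mu(x)d\mu(y)$, and use Fubini together with $\int d\mu = 0$ (which kills the $e^{-t}$ contribution) to obtain
\[
\iint \log|x-y|^{-1}\,d\mu(x)\,d\mu(y) = \int_0^\infty \frac{1}{t}\iint e^{-t|x-y|}\,d\mu(x)\,d\mu(y)\,dt.
\]
The inner double integral is nonnegative for every $t>0$ because the kernel $(x,y)\mapsto e^{-t|x-y|}$ is positive definite; this is seen from the Cauchy/Poisson representation
\[
e^{-t|x-y|} = \int_{\mathbb{R}} \frac{t/\pi}{t^2+\xi^2}\, e^{i(x-y)\xi}\,d\xi,
\]
which yields
\[
\iint e^{-t|x-y|}\,d\mu(x)\,d\mu(y) = \int_{\mathbb{R}} \frac{t/\pi}{t^2+\xi^2}\, \bigl|\hat\mu(\xi)\bigr|^2\,d\xi \geq 0,
\]
where $\hat\mu(\xi) \colonequals \int e^{ix\xi}\,d\mu(x)$ is continuous by the compact-support hypothesis.

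The main technical obstacle is justifying the Fubini interchange and dealing with the diagonal singularity of $\log|x-y|^{-1}$, since for a signed measure the double integral need not be absolutely convergent. I would handle this in the standard way: decompose $\mu = \mu^+ - \mu^-$ via the Jordan decomposition, observe that if either of the self-energies $\iint \log|x-y|^{-1}\,d\mu^\pm\,d\mu^\pm$ is $+\infty$ the claimed inequality holds trivially, and otherwise all integrals are finite and the computations above are rigorous. Alternatively one may regularize by convolving $\mu$ with a smooth compactly supported approximate identity, apply the argument to the mollified (absolutely continuous) measure where everything is manifestly finite, and then pass to the limit using lower semicontinuity of the logarithmic energy.
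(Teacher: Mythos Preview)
Your proposal is correct. The reduction of the second inequality to the first via the pushforward $\nu=(\exp)_*\mu$ is exactly what the paper does (it says ``the second part can be proved by replacing $x\mapsto e^x$ and $y\mapsto e^y$''). For the first inequality the paper gives no argument at all---it simply cites \cite[Lemma~6.41]{Deift99}---so you are supplying more than the paper does. Your route via the Frullani identity and the positive-definiteness of the Laplace kernel $e^{-t|x-y|}$ is a clean, self-contained alternative to Deift's argument (which works directly with the distributional Fourier transform of $\log|x|$); both ultimately reduce to $\int |\hat\mu|^2\,d\rho\ge 0$ for an appropriate positive weight $\rho$. Your handling of the Fubini issue through the Jordan decomposition is the standard and correct way to make this rigorous: when $\iint\log|x-y|^{-1}\,d|\mu|\,d|\mu|<\infty$ one checks that $\int_0^\infty t^{-1}|e^{-t|x-y|}-e^{-t}|\,dt=\bigl|\log|x-y|\bigr|$, so the triple integral is absolutely convergent and the interchange is justified; otherwise the energy is $+\infty$ and the inequality is trivial.
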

The first inequality was showed in \cite[Lemma 6.41]{Deift99}, and
the second part can be proved by replacing $x\mapsto e^x$ and
$y\mapsto e^y$ in the proof.

\medskip

Now assume that we have two measures $\mu_V$ and $\tilde\mu$ such
that $I_V(\mu_V)=I_V(\tilde\mu)=E_V$. Then, for
$\mu_t=\mu_V+t(\tilde\mu-\mu_V)$ and $t\in[0,1]$, we have
\begin{multline}I_V(\mu_t)=\frac{1}{2}I(\mu_V,\mu_V)+\frac{1}{2}\tilde I(\mu_V,\mu_V)+\int V(x)d\mu_V(x)\\
+t\left(I(\mu_V,\tilde\mu-\mu_V)+\tilde I(\mu_V,\tilde\mu-\mu_V)+ \int V(x)d(\tilde\mu-\mu_V)(x))\right)\\
+t^2\left(\frac{1}{2}I(\tilde\mu-\mu_V,\tilde\mu -\mu_V)+
\frac{1}{2}\tilde I(\tilde\mu-\mu_V,\tilde\mu -\mu_V)\right),
\end{multline}
where
\begin{align}&I(\mu,\nu)=\iint\log|x-y|^{-1}d\mu(x)d\nu(y),\\
&\tilde I(\mu,\nu)=\iint\log|e^x-e^y|^{-1}d\mu(x)d\nu(y).
\end{align} The above lemma ensures that $I_V(\mu_t)$ is a convex
function of $t$. But since $\mu_t$ is a probability measure, we have
$I_V(\mu_t)\geq I_V(\mu_0)=I_V(\mu_1)=E_V$, and hence
$I_V(\mu_t)=E_V$ for all $t\in[0,1]$. In particular this implies
\begin{equation}
  \frac{1}{2}I(\tilde\mu-\mu_V,\tilde\mu -\mu_V)+
\frac{1}{2}\tilde I(\tilde\mu-\mu_V,\tilde\mu -\mu_V)=0,
\end{equation}
and using
a similar argument as in \cite{Deift99}, this implies that
$\mu_V=\tilde\mu$, which yields the uniqueness of the equilibrium
measure.

\section{Construction of the equilibrium measure}\label{section:equi}

In this section we assume the external field $V$ is a convex real analytic function and $V''(x)$ is bounded below by a positive constant for all $x \in \realR$. We are going to show that $V$ is one-cut regular, by an explicit construction of its equilibrium measure. The strategy of our construction is as follows. First in Section \ref{subsec:support_of_equilibrium} we give the support of the equilibrium measure $[a,b]$ without proof. Then in Section \ref{subsec:g_fun_and_equ_measure} we compute the density of the equilibrium measure, based on the information of the support. The density function is expressed in terms of the so-called $\gfn$-functions $\gfn(z), \tilde{\gfn}(z)$ and their derivatives, which are characterized by a RH problem. At last in Section \ref{subsec:verification_of_eqiolibrium_measure} we verify that the measure with the support and the density obtained in the first two steps satisfy the criteria of one-cut regularity, and conclude that it is the unique equilibrium measure that we want to construct.

\begin{rmk}In what follows, it may seem that the values of the endpoints $a$ and $b$ appear out of the blue, but if the external field $V(x)$ is quadratic, the endpoints (as well as $\gfn(x)$ and $\tilde{\gfn}(x)$) can be computed by a classical steepest-descent method. This computation is shown in Appendix \ref{sec:saddle_point_for_quadratic} as our inspiration.
\end{rmk}
\begin{rmk}
  If an external field is non-convex but we know a priori that it is one-cut regular with support $[a, b]$, then the method in Section \ref{subsec:g_fun_and_equ_measure} can still be applied and allows us to obtain the expression of the density function of the equilibrium measure.
\end{rmk}

\subsection{The support of the equilibrium measure} \label{subsec:support_of_equilibrium}

Let $\Jlike_{x_1,x_0}$ be defined as before by
\begin{equation}\Jlike_{x_1, x_0}(s)=x_1 s + x_0 - \log \frac{s - \frac{1}{2}}{s + \frac{1}{2}},\label{Jx0x1}\end{equation} and let
$\gamma=\Jlike_{x_1,x_0}^{-1}([a,b])$, depending on $x_1,x_0$, be the boundary of the region $D$ defined in the Introduction, consisting of the curves $\gamma_1$ and $\gamma_2$, encircling the interval $[-\frac{1}{2},\frac{1}{2}]$ in the counterclockwise direction, see also Proposition \ref{prop:Joukowsky_like} below.
Since $\Jlike_{x_1,x_0}(s)\in[a,b]$ for $s\in\gamma$, $V'(\Jlike_{c_1,c_0}(s))$ is well defined for $s$ in a neighborhood of the curve $\gamma$, if $V$ is real analytic.
\begin{lem} \label{lem:c_0_and_c_1}
  Given any strongly convex real analytic function $V$, i.e.\ such that $V''(x)\geq c>0$ for all $x\in\mathbb R$, the system of equations with unknowns $x_0$ and $x_1$
  \begin{align}
    x^{-1}_1= {}& \frac{1}{2\pi i}\oint_\gamma V'(\Jlike_{x_1,x_0}(s))ds, \label{intro mod eq 1} \\
    1  = {}& \frac{1}{2\pi i}\oint_\gamma \frac{V'(\Jlike_{x_1,x_0}(s))}{s-\frac{1}{2}}ds, \label{intro mod eq 2}
  \end{align}
  has a solution $x_0 = c_0 \in \realR$ and $x_1 = c_1 \in \realR^+$.
\end{lem}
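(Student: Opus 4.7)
My plan is to convert the two contour integrals in \eqref{intro mod eq 1}--\eqref{intro mod eq 2} into real-valued equations in the real unknowns $(x_0, x_1)$, verify the system explicitly in a quadratic base case, and then extend to general strongly convex $V$ by a continuation argument.

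First I will reformulate. Since $\Jlike_{x_1, x_0}$ sends each boundary arc $\gamma_1, \gamma_2$ of $D$ bijectively onto $[a, b]$ and $\Jinv_-(u) = \overline{\Jinv_+(u)}$ for $u \in [a, b]$ by Schwarz reflection (the coefficients of $\Jlike$ being real), parametrizing $\gamma$ counterclockwise by $s = \Jinv_\pm(u)$ on each arc gives
\begin{equation*}
F_1(x_0, x_1) \colonequals \frac{1}{2\pi i}\oint_\gamma V'(\Jlike(s))\, ds = -\frac{1}{\pi}\int_a^b V'(u)\,\Im \Jinv_+'(u)\, du,
\end{equation*}
together with an analogous real-integral expression for $F_2(x_0, x_1) \colonequals \frac{1}{2\pi i}\oint_\gamma V'(\Jlike(s))/(s - \tfrac{1}{2})\, ds$. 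Both $F_1, F_2$ are smooth real-valued functions on $\realR \times (0, \infty)$ (the dependence on parameters entering through $[a, b]$ and $\Jinv_\pm$), and the task reduces to solving $F_1 = 1/x_1$ and $F_2 = 1$.

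For the base case I take $V_0(x) = \tfrac{c}{2}(x - m)^2$. Using the Laurent expansion $\Jlike_{x_1, x_0}(s) = x_1 s + x_0 + s^{-1} + O(s^{-3})$ at $s = \infty$ and the fact that $V_0'(\Jlike(\cdot))$ is analytic on $\compC \setminus [-\tfrac{1}{2}, \tfrac{1}{2}]$ and polynomially bounded, one deforms $\gamma$ outward to a large circle and identifies each integral with $2\pi i$ times the coefficient of $s^{-1}$. This yields $F_1 = c$ and $F_2 = c(\tfrac{x_1}{2} + x_0 - m)$, giving the explicit positive solution $x_1 = 1/c$, $x_0 = m + \tfrac{1}{2c}$. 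At this point the Jacobian of $\Phi \colonequals (F_1 - 1/x_1,\, F_2 - 1)$ has determinant $-c^3 \neq 0$, so the implicit function theorem applies.

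To deduce existence for general $V$ I would use the homotopy $V_t \colonequals (1 - t)V_0 + tV$ for $t \in [0, 1]$, which remains strongly convex uniformly in $t$. The set $\mathcal{T} \subset [0, 1]$ of parameters for which the system has a solution contains $0$ and is open by the implicit function theorem. The main obstacle will be to show $\mathcal{T}$ is also closed, which requires (i) a priori compactness keeping $x_1(t)$ bounded away from $0$ and $\infty$ (and $x_0(t)$ in a bounded set), drawing on the growth hypothesis \eqref{Vinfty} and the uniform lower bound $V_t'' \geq \delta > 0$; and (ii) nondegeneracy of the Jacobian of $\Phi$ along the entire path, for which I expect strong convexity to produce a dominant $-c^3$-type contribution analogous to the quadratic computation. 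Once both are in hand, continuity delivers a solution at $t = 1$.
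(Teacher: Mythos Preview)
Your quadratic base case and reformulation are correct, but the continuation argument has a genuine gap at step (ii). You say you ``expect strong convexity to produce a dominant $-c^3$-type contribution'' to the Jacobian, but this is exactly what would have to be proved, and it is not clear. The entries of $D\Phi$ for general $V_t$ are contour integrals of $V_t''(\Jlike(s))$ against kernels like $1$, $(s-\tfrac12)^{-1}$, and $\partial_{x_1}\Jlike$, together with the $x_1^{-2}$ term; there is no evident sign structure forcing the $2\times 2$ determinant to be nonzero along the whole path. Without that, openness of $\mathcal T$ via the implicit function theorem fails. (A degree-theoretic version that only needs the Jacobian at $t=0$ plus a priori bounds would sidestep this, but you did not invoke it, and the a priori bounds in (i) themselves require nontrivial analysis of the $x_1\to 0^+$ and $x_1\to\infty$ regimes.)

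The paper avoids the $2\times 2$ Jacobian entirely by decoupling the variables. It first proves, for each fixed $x_1>0$, that
\[
\frac{\partial}{\partial x_0}\,\frac{1}{2\pi i}\oint_\gamma \frac{V'(\Jlike_{x_1,x_0}(s))}{s-\tfrac12}\,ds
=\frac{1}{2\pi i}\oint_\gamma \frac{V''(\Jlike_{x_1,x_0}(s))}{s-\tfrac12}\,ds
\]
is bounded below by a positive constant; the argument is geometric, parametrizing $\gamma_1$ by $\arg s$ and showing $\arg(s-\tfrac12)$ is monotone along $\gamma_1$, so that strong convexity ($V''\ge c>0$) makes the integral strictly positive. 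This gives a unique $x_0=c_0(x_1)$ solving \eqref{intro mod eq 2}. The problem is then scalar, and the paper finishes with a direct intermediate value argument: as $x_1\to 0^+$ the curve $\gamma$ approaches a circle of radius $x_1^{-1/2}$ and the first integral is $o(x_1^{-1})$; as $x_1\to\infty$ a comparison using $V''\ge c$ shows the first integral is $\gg x_1^{-1}$. Continuity in $x_1$ then produces $c_1$. This is more elementary than a two-variable continuation and uses strong convexity at a single, transparent point (positivity of one derivative) rather than at the level of a full Jacobian.
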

We will prove Lemma \ref{lem:c_0_and_c_1} in Appendix A. Based on this lemma, we construct the support, and furthermore the density function, of the equilibrium measure. We do not prove the uniqueness of the solution of equations \eqref{intro mod eq 1} and \eqref{intro mod eq 2}, for this uniqueness is a consequence of the uniqueness of the equilibrium measure by Theorem \ref{theorem: exist uni}, as from different solutions we construct different equilibrium measures.

 Here and later we take the value of the parameters $c_0$ and $c_1$ as the pair of solutions of \eqref{intro mod eq 1} and \eqref{intro mod eq 2}. Then we claim that $a$ and $b$, the two edges of the support of the equilibrium measure, are given by
\begin{equation} \label{eq:defining_formula_of_a_b}
    a=\Jlike_{c_1, c_0}(s_a), \quad b=\Jlike_{c_1, c_0}(s_b),
\end{equation}
where $s_a = -\sqrt{\frac{1}{4}+\frac{1}{c_1}}$, $s_b = \sqrt{\frac{1}{4}+\frac{1}{c_1}}$.
Then it is easy to verify that equations (\ref{eq:first_def_of_c_0})--(\ref{eq:first_def_of_c_1}) are satisfied.

\subsection{The $\gfn$-functions and the density function of the equilibrium measure} \label{subsec:g_fun_and_equ_measure}

Under the assumption that the external field $V$ is one-cut regular, with equilibrium measure $d\mu_V(x)=\psi_V(x)dx$ supported on $[a,b]$ as we claimed in \eqref{eq:defining_formula_of_a_b}, we construct two functions $\gfn(z)=\int \log(z-x)d\mu_V(x)$ and $\tilde \gfn(z)=\int \log(e^z-e^x)d\mu_V(x)$ as in \eqref{eq:expr_of_gfn_and_tilde_gfn-intro}. To describe the domain of the function $e^{\tilde{\gfn}(z)}$, we introduce the notation of the cylinder $\cyld$ which is formed by identifying the two edges of the strip $\strip$. If a function $f(z)$ is defined for $z \in \strip$, the limits $f(x \pm \pi i) = \lim_{z \to x \pm \pi i, z \in \strip} f(z)$ exist point-wise, and furthermore $f(x + \pi i) = f(x - \pi i)$, we say $f$ is defined on $\cyld$. The properties \ref{enu:equ_measure_intro_1}--\ref{enu:equ_measure_intro_5} in the Introduction satisfied by $\mu_V$ are then translated into properties satisfied by $\gfn$ and $\tilde\gfn$ as follows.
\begin{enumerate}[label=(\roman{*})]
\item \label{enu:constr_g_1}
  For $x \in (-\infty, a)$,
  \begin{equation} \label{eq:relation_of_g_tilde_g_<a}
    \gfn_+(x) = \gfn_-(x) + 2\pi i, \quad \tilde{\gfn}_+(x) = \tilde{\gfn}_-(x) + 2\pi i,
  \end{equation}
  and then $e^{\gfn(z)}$ is analytic in $\mathbb C\setminus[a,b]$ and $e^{\tilde \gfn(z)}$ is analytic on the cylinder with slit $\cyld \setminus [a,b]$; $e^{\gfn(z)} \sim z$ as $z \to \infty$, $e^{\tilde{\gfn}(z)} \sim e^z$ as $\Re z \to +\infty$ and $e^{\tilde{\gfn}(z)} = \bigO(1)$ as $\Re z \to -\infty$,
\item \label{enu:constr_g_2}
  for $x\in(a,b)$, we have
\begin{equation}\label{psi g}
\psi_V(x)=-\frac{1}{2\pi i}(\gfn_+'(x)-\gfn_-'(x))=-\frac{1}{2\pi
i}(\tilde \gfn_+'(x)-\tilde \gfn_-'(x))>0,
\end{equation}
\item \label{enu:constr_g_3}
  as $z \to a$ or $z \to b$, the limits of $\gfn(z)$, $\tilde{\gfn}(z)$, $\gfn'(z)$ and $\tilde{\gfn}'(z)$ exist, and as $x\to a_+$ or $x \to b_-$ for $x \in (a, b)$,
\begin{equation} \label{square root}
  \lim_{x \to a_+}\frac{\gfn_+'(x)-\gfn_-'(x)}{\sqrt{x-a}}, \quad \lim_{x \to a_+}\frac{\tilde \gfn_+'(x)-\tilde \gfn_-'(x)}{\sqrt{x-a}}, \quad \lim_{x \to b_-}\frac{\gfn_+'(x)-\gfn_-'(x)}{\sqrt{b - x}}, \quad \lim_{x \to b_-}\frac{\tilde \gfn_+'(x)-\tilde \gfn_-'(x)}{\sqrt{b - x}}
\end{equation}
all exist and are all different from zero,
\item \label{enu:constr_g_4}
 for $x\in[a,b]$, there exists a constant $\ell$ such that
 \begin{equation}\label{var eq g}
 \gfn_\pm(x)+\tilde \gfn_\mp(x)-V(x)-\ell=0,
 \end{equation}
 \item \label{enu:constr_g_5}
   for $x\in\mathbb R\setminus [a,b]$, we have
 \begin{equation}\label{var ineq g}
 \gfn_\pm(x)+\tilde \gfn_\mp(x)-V(x)-\ell<0.
 \end{equation}
\end{enumerate}

Let us consider the derivatives
\begin{equation} \label{eq:defn_of_G_and_tilde_G}
    G(x) \colonequals \gfn'(x) \quad \text{and} \quad \tilde{G}(x) \colonequals \tilde{\gfn}'(x).
\end{equation}
The properties \ref{enu:constr_g_1}, \ref{enu:constr_g_3} and \ref{enu:constr_g_4} for $\gfn(x)$ and $\tilde{\gfn}(x)$ then imply that $G$
and $\tilde G$ need to satisfy the following RH problem:

\subsubsection*{RH problem for $G$ and $\tilde G$}

\begin{enumerate}[label=(\alph{*})]
\item \label{enu:RHP_G_1} $G$ is analytic in $\mathbb C\setminus
[a,b]$, $\tilde G$ is analytic in $\cyld \setminus[a,b]$,
\item \label{enu:RHP_G_2} for $x\in[a,b]$, we have
\begin{equation}\label{jump G}
 G_\pm(x)+\tilde G_\mp(x)-V'(x)=0,
 \end{equation}
\item \label{enu:RHP_G_3} we have the asymptotic conditions that $G_{\pm}(x)$ and $\tilde{G}_{\pm}(x)$ are bounded for all $x \in [a, b]$, and
 \begin{align}
&\label{as G}G(z)=\frac{1}{z}+\bigO(z^{-2}),&&\mbox{ as $z\to\infty$,}\\
&\label{as tilde G1}\tilde
G(z)=1+\bigO(e^{-z}),&&\mbox{ as $\Re z\to +\infty$},\\
&\label{as tilde G2}\tilde G(z)=\bigO(1),&&\mbox{ as $\Re z\to
-\infty$}.
 \end{align}
\end{enumerate}

The main technical difficulty in solving the RH problem for $G$ and
$\tilde G$ lies in the fact that the two functions live on different
domains: $G$ is defined in the complex plane with slit $[a,b]$, and
$\tilde G$ is defined in the cylinder $\cyld$ with slit $[a,b]$. In
order to resolve this problem, we will use the transformation \eqref{eq:Joukowsky_like_transform} $\Jlike(s)$ that maps $\compC \setminus [-\frac{1}{2}, \frac{1}{2}]$ to both $\compC$ and $\strip$. Recall that $s_a$ and $s_b$ are the two critical points of $\Jlike(s)$ given by \eqref{eq:defn_of_s_a_and_s_b} and that they satisfy the identity \eqref{eq:defining_formula_of_a_b}. The following property will be used in the construction of $G$ and $\tilde{G}$.
\begin{prop} \label{prop:Joukowsky_like}
  There are an arc $\gamma_1$ from $s_a$ to $s_b$ in the upper half plane, and an arc $\gamma_2$ from $s_a$ to $s_b$
  in the lower half plane, such that
  \begin{enumerate}[label=(\alph*)]
  \item \label{enu:prop:Joukowsky_like:a}
    $\Jlike(\gamma_1) = \Jlike(\gamma_2) = [a,b]$, and the mapping is homeomorphic on these two curves.
  \item \label{enu:prop:Joukowsky_like:b}
    Denote the region enclosed by $\gamma_1$ and $\gamma_2$ by $D$.
    Then $\Jlike(\compC \setminus \bar{D}) = \compC \setminus [a,b]$, and the mapping is univalent.
  \item \label{enu:prop:Joukowsky_like:c}
    $\Jlike(D \setminus [-\frac{1}{2},\frac{1}{2}]) = \strip \setminus [a,b]$, the mapping is univalent, and the upper and lower sides of $(-\frac{1}{2},\frac{1}{2})$ are mapped to $\realR - \pi i$ and $\realR + \pi i$ respectively.
  \end{enumerate}
\end{prop}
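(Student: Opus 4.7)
The plan is to prove (a), (b), (c) in succession: first identify $\gamma_1, \gamma_2$ as branches of the level set $\{\Im \Jlike = 0\}$ in the upper and lower half planes, and then establish the two univalent branches of $\Jlike^{-1}$ by the argument principle. As a preliminary, differentiating gives $\Jlike'(s) = c_1 - (s^2 - \frac{1}{4})^{-1}$, whose zeros are $\pm s_b$, so $s_a = -s_b$; these are simple critical points since $\Jlike''(\pm s_b) = \pm 2 s_b/(s_b^2 - \frac{1}{4})^2 \neq 0$, and a direct substitution using \eqref{eq:first_def_of_c_0}--\eqref{eq:first_def_of_c_1} verifies $\Jlike(s_a) = a$ and $\Jlike(s_b) = b$.

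For (a), I study the harmonic function $u(s) \colonequals \Im \Jlike(s)$ on $\overline{\compC} \cap \{\Im s \geq 0\} \setminus [-\frac{1}{2},\frac{1}{2}]$. A boundary evaluation gives $u \equiv 0$ on $\realR \setminus [-\frac{1}{2},\frac{1}{2}]$, $u \equiv -\pi$ on the upper side of the slit, and $u(s) = c_1 \Im s + O(1) \to +\infty$ as $|s| \to \infty$ in the upper half plane. For $x_0 \in \realR \setminus [-\frac{1}{2},\frac{1}{2}]$ the expansion $u(x_0 + it) = t\, \Jlike'(x_0) + O(t^2)$, combined with the sign of $\Jlike'(x_0)$ (positive for $|x_0| > s_b$, negative for $\frac{1}{2} < |x_0| < s_b$), shows that $u$ is positive just above $\realR \setminus [-s_b,s_b]$ and negative just above $(-s_b,-\frac{1}{2}) \cup (\frac{1}{2},s_b)$. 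Since $\Jlike$ has no critical points in the open upper half plane, $\{u=0\}$ there is a smooth analytic $1$-manifold; such arcs cannot escape to infinity (where $u \to +\infty$), cannot meet $\realR$ except at $\pm s_b$ (by the sign analysis), and cannot form a simple closed loop (a loop with $u \equiv 0$ would, by the maximum principle on the enclosed subregion, force $u \equiv 0$ there, hence $\Jlike$ constant, a contradiction). Therefore the level set is precisely a single simple arc $\gamma_1$ from $s_a$ to $s_b$; the conjugation symmetry $\Jlike(\bar s) = \overline{\Jlike(s)}$ then gives $\gamma_2 \colonequals \overline{\gamma_1}$ in the lower half plane. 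Since $\Jlike' \neq 0$ on $\gamma_j$ away from the endpoints and $\Jlike$ is real there, $\Jlike|_{\gamma_j}$ is strictly monotone with boundary values $a$ and $b$, proving (a).

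For (b) and (c), let $D$ denote the open region enclosed by $\gamma_1 \cup \gamma_2$, and invoke the argument principle. For (b), fix $z \in \compC \setminus [a,b]$ and count zeros of $\Jlike - z$ in $(\compC \setminus \overline{D}) \cap \{|s| < R\}$ for large $R$. The image $\Jlike(\partial D) \subset [a,b]$ is traversed once along $\gamma_1$ and once along $\gamma_2$ in opposite directions, contributing net winding $0$ around $z$; along $|s| = R$, $\Jlike(s) \sim c_1 s$ contributes winding $1$. Hence $\Jlike \colon \compC \setminus \overline{D} \to \compC \setminus [a,b]$ is a bijection. For (c), one checks that on the upper side of $(-\frac{1}{2},\frac{1}{2})$ the value $\Im \Jlike = -\pi$ is constant and $\partial_x \Re \Jlike = c_1 + (\frac{1}{2}-x)^{-1} + (\frac{1}{2}+x)^{-1} > 0$, so this side maps bijectively onto $\realR - \pi i$; symmetrically, the lower side maps bijectively onto $\realR + \pi i$. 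For $z \in \strip \setminus [a,b]$, the winding contributions around $\partial(D \setminus [-\frac{1}{2},\frac{1}{2}])$ from $\gamma_1 \cup \gamma_2$ (net zero) and from the two slit sides (which combine, via the $\pm 2\pi i$ monodromy of $-\log(s \mp \frac{1}{2})$ at $\pm \frac{1}{2}$, into a single loop winding once about $[a,b]$ inside $\strip$) sum to exactly one preimage in $D \setminus [-\frac{1}{2},\frac{1}{2}]$, proving (c).

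The main obstacle is the topological part of (a): showing that the upper-half-plane level set of $\Im \Jlike$ consists of exactly one simple arc from $s_a$ to $s_b$. The exclusion of closed loops via the maximum principle, and the identification of the attachment points via the sign of $\Jlike'$ on $\realR \setminus [-\frac{1}{2},\frac{1}{2}]$, are the key inputs; once these are in place, parts (b) and (c) reduce to routine winding-number computations.
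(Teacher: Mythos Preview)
Your approach is correct and provides a genuine alternative to the paper's proof.

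For part (a), the paper takes a direct computational route: writing $s = u + iv$ with $v>0$, the condition $\Im \Jlike(s) = 0$ becomes $u^2 = \tfrac{1}{4} + v\cot(c_1 v) - v^2$, whose right-hand side decreases monotonically in $v$ from $s_b^2$ to $-\infty$ on $(0, \pi/c_1)$. This exhibits $\gamma_1$ explicitly as the graph $v = v(u)$ over $u \in (s_a, s_b)$, and moreover shows that $v(u)$ is increasing on $(s_a,0)$ and decreasing on $(0,s_b)$ --- a structural fact used later in the paper, for instance in the proof of Lemma~\ref{lem:c_0_and_c_1}. Your level-set argument via the maximum principle and the sign of $\Jlike'$ on the real axis is conceptually cleaner and avoids computation, but does not recover this finer monotonicity; for the Proposition alone that is immaterial.

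For parts (b) and (c), the paper invokes a boundary-correspondence principle (Conway, Exercise 14.5.10): if $f$ is analytic on a Jordan region $G$, continuous on $\overline{G}$, with $f(G)\subseteq\Omega$ and $f|_{\partial G}$ a homeomorphism onto $\partial\Omega$, then $f\colon G\to\Omega$ is a conformal bijection. After reducing to bounded Jordan domains by auxiliary conformal maps ($s\mapsto -i(s-i)/(s+i)$ for (b), and a logarithmic variant for (c)), this gives both univalence results in one stroke. Your argument-principle route is essentially the content of that lemma unpacked, so the two approaches are close cousins; the paper's version has the advantage of packaging the winding-number computation into a single citation and of avoiding any discussion of the behaviour near the logarithmic singularities $\pm\tfrac12$. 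That is the one place where your write-up is thin: in (c) the claim that the two slit sides ``combine into a single loop winding once about $[a,b]$ inside $\strip$'' deserves an explicit check, e.g.\ by excising $\epsilon$-disks around $\pm\tfrac12$, tracking the image of their boundary arcs (which sweep between $\realR-\pi i$ and $\realR+\pi i$), and letting $\epsilon\to 0$. With that detail filled in, your argument is complete.
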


Let us now define the function $M(s)$ by
\begin{equation}\label{def M}
  M(s)  \colonequals
  \begin{cases}
    G(\Jlike(s)) & \textnormal{for $s \in \compC \setminus \bar{D}$,} \\
    \tilde{G}(\Jlike(s)) & \textnormal{for $s \in D \setminus [-\frac{1}{2},\frac{1}{2}]$,}
  \end{cases}
\end{equation}
so that $M$ is analytic in $\compC \setminus (\gamma_1 \cup \gamma_2 \cup [-\frac{1}{2}, \frac{1}{2}])$. Note that the domain of $\tilde{G}$ can be extended from $\strip$ to $\cyld$, so that $M(s)$ can be analytically continued to
$(-\frac{1}{2},\frac{1}{2})$ accordingly. The RH conditions for $G, \tilde G$
are now transformed to the following conditions for $M$.
\subsubsection*{RH problem for $M$}
\begin{enumerate}[label=(\alph{*})]
\item \label{enu:RHP_M_1} $M$ is analytic in $\mathbb C\setminus (\gamma_1 \cup \gamma_2 \cup \{ -\frac{1}{2}, \frac{1}{2}\})$,
\item \label{enu:RHP_M_2} $M$ satisfies the jump condition
\begin{equation}
  M_+(s) + M_-(s) = V' \left(\Jlike(s) \right), \quad
  \textnormal{for $s \in \gamma_1\cup\gamma_2$,} \label{eq:RHP_for_M(s)_easy_1}
\end{equation}
\item \label{enu:RHP_M_3} $M_{\pm}(s)$ is bounded on $\gamma_1$ and $\gamma_2$, and $M$ has the asymptotics
\begin{align}
  M(s) = {}& \frac{1}{c_1 s} + \bigO(s^{-2}), & & \textnormal{as $s \to \infty$,} \label{eq:outer_boundary_condition} \\
  M(s) = {}& 1 + \bigO(s-\frac{1}{2}), & & \textnormal{as $s \to \frac{1}{2}$,} \label{eq:inner_boundary_condition_at_1}
  \\
  M(s) = {}& \bigO(1), & & \textnormal{as $s \to -\frac{1}{2}$.} \label{eq:inner_boundary_condition_at_0}
\end{align}
\end{enumerate}
It is straightforward to solve this scalar RH problem. We write
\begin{equation}
  U(s)=V'(\Jlike(s)),
\end{equation}
and note that $U$ is analytic in a neighborhood of $\gamma_1 \cup \gamma_2$, since $V$ is real analytic. Then it is readily verified that the
unique solution $M$ to the above RH problem for $M$ is given by
\begin{equation}\label{formula M}
  M(s)=
  \begin{cases}
    {\displaystyle -\frac{1}{2\pi i}\oint_{\gamma} \frac{U(\xi)}{\xi -s}d\xi}, & \text{for $s\in\mathbb C\setminus D$},\\
    {\displaystyle \frac{1}{2\pi i}\oint_{\gamma} \frac{U(\xi)}{\xi - s}d\xi}, & \text{for $s\in D$},
\end{cases}
\end{equation}
where $\gamma$ is the closed curve which is the union of $\gamma_1$ and $\gamma_2$ and has counterclockwise orientation.
In particular, \eqref{eq:outer_boundary_condition} and
\eqref{eq:inner_boundary_condition_at_1} follow from the system
of equations \eqref{intro mod eq 1} and \eqref{intro mod eq 2} in Lemma \ref{lem:c_0_and_c_1} satisfied
by $c_0, c_1$.

Now we can give an expression for $\gfn(z)$, $\tilde{\gfn}(z)$ and the density function $\psi_V(x)$ of the equilibrium measure, under the assumption that the support of the equilibrium measure is known. Recall that $\Jinv_1$ is the inverse map of
$\Jlike$ from $\mathbb C\setminus[a,b]$ to $\mathbb C\setminus
\overline D$,  $\Jinv_2$ is the inverse map of $\Jlike$ from
$\mathbb \strip\setminus[a,b]$ to $D\setminus [-\frac{1}{2},\frac{1}{2}]$, and their boundary values define $\Jinv_\pm(x)$, see \eqref{eq:defn_of_I_1}--\eqref{eq:defn_of_I_-}. We have $\Jinv_+(x) \in \gamma_1$, $\Jinv_-(x) \in
\gamma_2$, and $\Jinv_-(x) = \overline{\Jinv_+(x)}$. To obtain a
formula for the density $\psi_V(x)$ of the equilibrium measure, note
that it follows from (\ref{psi g}) and the identities $G=\gfn', \tilde{G} = \tilde{\gfn}'$ that
\begin{equation}\label{psizeta}
\psi_V(x)=-\frac{1}{2\pi i}(G_+(x)-G_-(x)) = -\frac{1}{2\pi i}(\tilde{G}_+(x) - \tilde{G}_-(x)),\quad\text{for
$x\in[a,b]$.}
\end{equation}
 From \eqref{psizeta} and \eqref{def M}, we
obtain
\begin{equation}\label{psiM}
  \psi_V(x) = -\frac{1}{2\pi i}(M_+(\Jinv_+(x))-M_-(\Jinv_-(x))) = -\frac{1}{2\pi i}(M_+(\Jinv_-(x))-M_-(\Jinv_+(x))),\quad\text{for $x\in[a,b]$,}
\end{equation}
where the boundary values of $M$ correspond to the orientations of $\gamma_1$ and $\gamma_2$, from left to right.
Applying the first identity in \eqref{psiM} and the formula \eqref{formula M} for $M(s)$, we let $z=x+i\epsilon$, $\epsilon>0$, approach $x$ from above and have
\begin{equation} \label{eq:explicit_expr_of_psi_V}
  \begin{split}
    \psi_V(x) = {}& \lim_{\epsilon\to 0} \frac{-1}{4\pi^2} \oint_\gamma U(\xi)\left(\frac{1}{\xi-\Jinv_1(z)}-\frac{1}{\xi-\overline{\Jinv_1(z)}}\right)d\xi \\
    = {}& \lim_{\epsilon\to 0} \frac{1}{4\pi^2} \int_a^b V'(u) \left( \frac{\Jinv_+'(u)}{\Jinv_+(u)-\Jinv_1(z)}-\frac{\Jinv_+'(u)}{\Jinv_+(u) - \overline{\Jinv_1(z)}} - \frac{\Jinv_-'(u)}{\Jinv_-(u)-\Jinv_1(z)} +\frac{\Jinv_-'(u)}{\Jinv_-(u)- \overline{\Jinv_1(z)}}\right)d u \\
    = {}& \lim_{\epsilon\to 0} \frac{1}{2\pi^2}\int_a^b V'(u) \Re\left( \frac{\Jinv_+'(u)}{\Jinv_+(u)-\Jinv_1(z)}-\frac{\Jinv_+'(u)}{\Jinv_+(u) - \overline{\Jinv_1(z)}} \right) d u \\
    = {}& \lim_{\epsilon\to 0} \frac{-1}{2\pi^2}\int_a^b V'(u) \Re \frac{d}{du} \log \left(\frac{\Jinv_+(u) - \overline{\Jinv_1(z)}}{\Jinv_+(u)-\Jinv_1(z)}\right) d u \\
    = {}& \lim_{\epsilon\to 0} \frac{1}{2\pi^2}\int_a^b V''(u) \Re \log \left(\frac{\Jinv_+(u) - \overline{\Jinv_1(z)}}{\Jinv_+(u)-\Jinv_1(z)}\right) d u \\
    = {}& \frac{1}{2\pi^2}\int_a^b V''(u)\log \left|\frac{\Jinv_+(u) - \Jinv_-(x)}{\Jinv_+(u)-\Jinv_+(x)}\right| d u.
  \end{split}
\end{equation}

\subsection{Proof of Theorem \ref{theorem: convex}} \label{subsec:verification_of_eqiolibrium_measure}

We showed so far that the equilibrium measure associated to the external field $V$ has the density function $\psi_V$ as we have constructed in Section \ref{subsec:g_fun_and_equ_measure}, as long as it is supported on the single interval $[a, b]$ that is given by \eqref{eq:defining_formula_of_a_b}. However, we have not proved that $[a, b]$ is the correct support yet.
We will show that the measure with support $[a,b]$ and density function $\psi_V(x)$ satisfies the properties \ref{enu:equ_measure_intro_1}--\ref{enu:equ_measure_intro_5} stated in the Introduction for one-cut regular equilibrium measures, which implies that the constructed measure is indeed the true equilibrium measure. Note that these properties are equivalent to properties \ref{enu:constr_g_1}--\ref{enu:constr_g_5} in Section \ref{subsec:g_fun_and_equ_measure}.

From the construction of $\psi_V(x)$, it is normalized, \ie, $\int^b_a \psi_V(x) dx = 1$. This follows from the asymptotics of $G$ and $\tilde{G}$, given in \eqref{as G} and \eqref{as tilde G1}, and the definitions of $\gfn$ and $\tilde{\gfn}$, the antiderivatives of $G$ and $\tilde{G}$.

For $x \in (a,b)$, it is geometrically obvious that $\lvert \Jinv_+(u) - \Jinv_-(x) \rvert > \lvert \Jinv_+(u) - \Jinv_+(x) \rvert$, and then $\Re \log ((\Jinv_+(u) - \Jinv_-(x))/(\Jinv_+(u)-\Jinv_+(x))) > 0$ for all $u \in (a, b)$. Substituting this inequality into \eqref{eq:explicit_expr_of_psi_V} and noting that $V''$ is positive, we have that $\psi_V(x) > 0$ for all $x \in (a, b)$. Similarly we have $\psi_V(x) \to 0$ for $x \to a_+$ and $x \to b_-$.

The identity \eqref{var eq} that gives condition \ref{enu:equ_measure_intro_4} in the Introduction, or equivalently the identity \eqref{var eq g} that gives condition \ref{enu:constr_g_4} in Section \ref{subsec:g_fun_and_equ_measure}, is obvious from the construction of $\psi_V$. Thus we only need to prove the remaining two properties for the equilibrium measure hold, \ie, $\psi_V(x)$ vanishes like a square root as $x \to a_+$ or $x \to b_-$, and $G_+(x) + \tilde{G}_-(x) - V(x) < \ell$ for $x < a$ or $x > b$.

Let the function $H$ be defined by
\begin{equation} \label{eq:defn_of_H}
H(z)=\left(G(z)+\tilde G(z)-V'(z)\right)^2.
\end{equation}
It is well defined where $G, \tilde{G}, V$ are defined, and it can only be discontinuous on $[a, b]$. However, by \eqref{psi g} and \eqref{var eq g},
\begin{equation} \label{eq:H_in_(a,b)}
  H_+(x) = (\tilde{G}_+ - \tilde{G}_-)^2 = -4\pi^2 \psi_V(x)^2 = (G_- - G_+)^2 = H_-(x).
\end{equation}
Hence $H(z)$ can be defined on $(a, b)$ so that $a, b$ become isolated singularities. If we express $G(z)$ and $\tilde{G}(z)$ in terms of $M(s)$ and then by the contour integral as in \eqref{def M} and \eqref{formula M}, we find that $G(z)$ and $\tilde{G}(z)$ grows at most logarithmically at $a$ and $b$. Thus $a$ and $b$ are removable singularities of $H(z)$, and $H(z)$ can be defined analytically in $\strip$ where $V$ is defined, \ie, an open region containing the real line. Furthermore, by \eqref{eq:H_in_(a,b)} and the fact that $\psi_V(x) \to 0$ as $x \to a_+$ or $x \to b_-$, we have that $H(a) = H(b) = 0$.

To show that $\psi_V(x)$ vanishes like a square root at $a$ and $b$, by \eqref{eq:H_in_(a,b)} it suffices to show that $a, b$ are simple zeros of $H(z)$. We consider $b$ first. From \eqref{eq:H_in_(a,b)} and \eqref{eq:defn_of_H}, we see that $H(x)$ changes sign as the real variable $x$ increases around $b$, so if $b$ is not a simple zero, it has multiplicity at least $3$, and then $\frac{d}{dx}\sqrt{H(x)}$, which is well defined for $x \in (b, \infty)$, would tend to $0$ as $x \to b_+$. But we have for all $x > b$
\begin{equation} \label{eq:negative_derivative_sqare_root_H}
  \begin{split}
     \frac{d}{dx} \left( G(x) + \tilde{G}(x) - V'(x) \right) = {}& \gfn''(x) + \tilde{\gfn}''(x) - V''(x) \\
    = {}& -\int^b_a \psi_V(s) \left( \frac{1}{(x - s)^2} + \frac{e^x e^s}{(e^x - e^s)^2} \right) ds - V''(x) < -V''(x).
  \end{split}
\end{equation}
Since $V''(x)$ is bounded below by a positive constant, $\frac{d}{dx} \sqrt{H(x)}$ cannot approach $0$. Thus $b$ is a simple zero of $H(z)$. Similarly $a$ is a simple zero.

To show that $G_+(x) + \tilde{G}_-(x) - V(x) < \ell$ for $x > b$, we need only that $G_+(x) + \tilde{G}_-(x) - V(x)$ is decreasing, since at $x = b$ the identity $G_+(x) + \tilde{G}_-(x) - V(x) = \ell$ holds. The decreasing property is given by the negative derivative shown in \eqref{eq:negative_derivative_sqare_root_H}. Similarly we can show that $G_+(x) + \tilde{G}_-(x) - V(x) < \ell$ for $x < a$.

Now we have proved that the measure $\psi_V(x)$ on $[a, b]$ satisfies all the properties for one-cut regular equilibrium measures, so it is the unique equilibrium measure associated to $V$. Combining the results we have obtained in this section, we prove Theorem \ref{theorem: convex}.

\section{Asymptotic analysis for the type II multiple orthogonal polynomials}\label{section: RH p}

In this section, we write $p_j^{(n)}(x)$, the monic multiple orthogonal polynomials of type II satisfying orthogonality relations \eqref{orthoIIb}, as $p_j(x)$ if there is no confusion.

\subsection{RH problem characterizing the polynomials} \label{subsec:RH_of_Y}

Recall that the $j$-th degree monic polynomial $p_j(x) = p^{(n)}_j(x)$ is characterized by the orthogonality \eqref{orthoIIb}.
Consider the following modified Cauchy transform of $p_j$:
\begin{equation} \label{eq:defn_of_Clike_p_j}
  \Clike p_j(z)  \colonequals  \frac{1}{2\pi i} \int_{\realR} \frac{p_j(x)}{e^x - e^z} e^{-nV(x)} dx,
\end{equation}
which is well-defined for $z\in\strip\setminus\mathbb R$. Since $e^{-nV(x)}$ is real analytic and vanishes rapidly as $x \to \pm \infty$, for any polynomial $p(x)$, we have the following asymptotic expansion for $\Clike p(z)$ as $z \in \strip$ and $\Re z \to +\infty$:
\begin{equation} \label{eq:series_expansion_of_Cauchy_like}
  \begin{split}
    \Clike p(z) = {}& \frac{-1}{2\pi i e^z} \int_{\realR} \frac{p(x)}{1 - e^x/e^z} e^{-nV(x)} dx \\
    = {}& \frac{-1}{2\pi i} \sum^{M}_{k = 0} \left( \int_{\realR} p(x)e^{kx} e^{-nV(x)} dx \right) e^{-(k+1)z}+\bigO(e^{-(M+2)z}),
  \end{split}
\end{equation}
for any $M\in\mathbb N$, uniformly in $\Im z$. Thus due to the orthogonality,
\begin{equation} \label{eq:kappa_is_leading_coeff_of_Clike_p}
  \Clike p_j(z) = \frac{-h^{(n)}_j}{2\pi i} e^{-(j+1)z} +\bigO(e^{-(j+2)z}),
\end{equation}
where $h^{(n)}_j$ is given by \eqref{kappa}. For $x \in \realR$, a residue argument shows that
\begin{equation}
    (\Clike p_j)_+(x) - (\Clike p_j)_-(x)
    =  p_j(x)e^{-nV(x)}e^{-x}.
\end{equation}
Hence we conclude that if we consider $p_j(x)$ and $\Clike p_j(x)$ together and write them in vector form
\begin{equation}\label{def Y}
  Y(z) =Y^{(j,n)}(z) \colonequals  (p_j(z), \Clike p_j(z)),
\end{equation}
they satisfy the conditions

\subsubsection*{RH problem for $Y$}
\begin{enumerate}[label=(\alph{*})]
\item \label{enu:RH_of_Y:1}
  $Y = (Y_1, Y_2)$, where $Y_1$ is an analytic function defined on $\mathbb C$, and $Y_2$ is an analytic function on $\cyld \setminus \mathbb R$,
\item \label{enu:RH_of_Y:2}
  $Y$ has continuous boundary values $Y_\pm$ when
  approaching the real line from above and below, and we have
  \begin{equation}\label{RHP_Y:jump1}
    Y_+(x) = Y_-(x)
    \begin{pmatrix}
      1 & e^{-x}e^{-nV(x)} \\
      0 & 1
    \end{pmatrix},
    \quad \textnormal{for $x \in \realR$,}
  \end{equation}
  \addtocounter{enumi}{1}
  \begin{enumerate}[label=(\alph{enumi}\arabic{*}), leftmargin=0pt, itemindent=0pt]
  \item \label{enu2:c1_of_Y}
    as $z\to\infty$, $Y_1$ behaves as  $Y_1(z)=z^j+\bigO(z^{j-1})$,
  \item \label{enu2:c2_of_Y}
    as $e^z\to\infty$ (\ie, $\Re z\to +\infty$), $Y_2$ behaves as
    $Y_2(z)=\bigO(e^{-(j+1)z})$; as $e^z\to 0$ (\ie, $\Re z\to -\infty$),
    $Y_2(z)$ remains bounded.
  \end{enumerate}
\end{enumerate}

Conversely, the RH problem for $Y$ has a unique solution given by \eqref{def Y}. We give a proof of the uniqueness of the RH problem for $Y$ based on the uniqueness of the multiple orthogonal polynomials $p_j$.

\begin{thm}
  The solution to the RH problem for $Y$ above has a unique solution, given by $Y_1(z) = p_j(z)$ and $Y_2(z) = \Clike p_j(z)$, where $p_j(z)$ is the monic multiple orthogonal polynomial of type II defined by \eqref{orthoIIb}, and $\Clike p_j(z)$ is given in \eqref{eq:defn_of_Clike_p_j}.
\end{thm}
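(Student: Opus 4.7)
The existence half is a direct verification: $p_j$ is entire and monic of degree $j$ by construction, the jump relation \eqref{RHP_Y:jump1} is the Plemelj-type identity already noted just above the definition \eqref{def Y}, and the decay $\Clike p_j(z) = \bigO(e^{-(j+1)z})$ as $\Re z \to +\infty$ is exactly \eqref{eq:kappa_is_leading_coeff_of_Clike_p} (which used the orthogonality \eqref{orthoIIb}); boundedness at the opposite end of the cylinder follows because $1/(e^x-e^z) \to e^{-x}$ as $\Re z \to -\infty$ and $e^{-nV(x)}$ provides the integrability needed to apply dominated convergence. I focus on uniqueness.

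Let $(\tilde Y_1, \tilde Y_2)$ be any solution. First, \ref{enu:RH_of_Y:1} says $\tilde Y_1$ is entire and \ref{enu2:c1_of_Y} gives $\tilde Y_1(z) = z^j + \bigO(z^{j-1})$ at infinity, so $\tilde Y_1$ is a monic polynomial of degree exactly $j$ by the usual Liouville argument. Next, I would set
\[
  F(z) \colonequals \tilde Y_2(z) - \Clike \tilde Y_1(z), \qquad z \in \cyld \setminus \realR,
\]
and observe that both summands carry the same additive jump $\tilde Y_1(x) e^{-x} e^{-nV(x)}$ across $\realR$, so $F$ extends analytically to the whole cylinder $\cyld$. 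Moreover $F$ is bounded: on compact parts of $\cyld$ by continuity of boundary values from \ref{enu:RH_of_Y:2}; as $\Re z \to +\infty$ both $\tilde Y_2$ (by \ref{enu2:c2_of_Y}) and $\Clike \tilde Y_1$ (by the expansion \eqref{eq:series_expansion_of_Cauchy_like}) are $\bigO(e^{-z})$, so in particular $F(z) \to 0$; and as $\Re z \to -\infty$ both are $\bigO(1)$. Under the conformal map $w = e^z$ identifying $\cyld$ with $\compC^*$, $F$ becomes a bounded holomorphic function on $\compC^*$ whose singularity at $w = 0$ (corresponding to $\Re z \to -\infty$) is removable, so $F$ extends to a bounded entire function in $w$, which is constant by Liouville. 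Since this constant vanishes at $w = \infty$, we get $F \equiv 0$, i.e., $\tilde Y_2 = \Clike \tilde Y_1$.

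It remains to identify $\tilde Y_1$. Comparing the sharper decay $\tilde Y_2(z) = \bigO(e^{-(j+1)z})$ with the expansion \eqref{eq:series_expansion_of_Cauchy_like} applied to $\Clike \tilde Y_1$ forces the coefficients of $e^{-(k+1)z}$ for $k = 0, 1, \ldots, j-1$ to vanish, yielding
\[
  \int_{\realR} \tilde Y_1(x)\, e^{kx} e^{-nV(x)}\, dx = 0, \qquad k = 0, 1, \ldots, j-1.
\]
These are exactly the orthogonality conditions \eqref{orthoIIb}, and the AT-system property of the weights $\{e^{kx} e^{-nV(x)}\}_{k=0}^{j-1}$ already invoked in the introduction guarantees that a monic polynomial of degree $j$ satisfying them is unique; hence $\tilde Y_1 = p_j$ and, by Step 2, $\tilde Y_2 = \Clike p_j$. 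The main technical point of the argument is the middle step, specifically verifying that $F$ is genuinely bounded across $\realR$ and then applying Liouville on the cylinder via the $w = e^z$ compactification; the asymptotic bookkeeping at either end reduces to coefficient matching in \eqref{eq:series_expansion_of_Cauchy_like}.
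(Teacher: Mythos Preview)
Your proof is correct and follows essentially the same approach as the paper: the paper's function $U(u)=Y_2(\log u)-\Clike Y_1(\log u)$ is exactly your $F$ viewed in the coordinate $u=e^z$, and both arguments conclude via Liouville after checking boundedness at $u=0$ and $u=\infty$. The only cosmetic difference is that the paper passes to the $u$-plane before defining the auxiliary function, while you define $F$ on the cylinder first and then invoke the map $w=e^z$.
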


\begin{proof}
  First, \eqref{RHP_Y:jump1} in the jump condition \ref{enu:RH_of_Y:2} implies that $Y_1$ is an entire function, and condition \ref{enu2:c1_of_Y} implies that $Y_1$ grows like $z^j$ as $z \to \infty$. So $Y_1=:p$ is a monic polynomial of degree $j$.

  Now we show that if $Y = (Y_1, Y_2)$ satisfies all the conditions \ref{enu:RH_of_Y:1}--\ref{enu2:c2_of_Y} of the RH problem, then $Y_2$ is given in terms of $Y_1=p$ by
  \begin{equation} \label{eq:from_Y_1_to_Y_2}
    Y_2(z) = \frac{1}{2\pi i}\int_{\mathbb R}\frac{p(s)}{e^s-e^z}e^{-nV(s)}ds.
  \end{equation}
  By condition \ref{enu:RH_of_Y:2}, $Y_2$ satisfies
  \begin{equation}
    Y_{2,+}(x)-Y_{2,-}(x)=p(x)e^{-nV(x)-x}.
  \end{equation}
Consider the function
\begin{equation}
  U(u)=Y_2(\log u)-\frac{1}{2\pi i}\int_{\mathbb R}\frac{Y_1(s)}{e^s-u}e^{-nV(s)}ds,
\end{equation}
where we take the principal branch of the logarithm with branch cut on $\mathbb R^-$. Obviously $U(u)$ is analytic for $u \in \compC \setminus \realR$. By the jump condition $Y_2$ on the real line given by \eqref{RHP_Y:jump1} and the property that $Y_2(x + \pi i) = Y_2(x - \pi i)$, we verify that $U_+(u)=U_-(u)$ for $u \in (0, \infty)$ or $u \in (-\infty, 0)$, so that $U$ is an analytic function for $u \in \compC \setminus \{ 0 \}$. Note that since $p$ is a polynomial and $e^{-nV(s)}$ vanishes rapidly as $s \to \pm\infty$, we have
\begin{align}
\frac{1}{2\pi i}\int_{\mathbb R}\frac{p(s)}{e^s-u}e^{-nV(s)}ds = {}& \bigO(1) & & \text{as $u \to 0$,} \label{eq:U(u)_at_0} \\
\frac{1}{2\pi i}\int_{\mathbb R}\frac{p(s)}{e^s-u}e^{-nV(s)}ds = {}& \bigO(u^{-1}) & & \text{as $u \to \infty$.} \label{eq:U(u)_at_infty}
\end{align}
From \eqref{eq:U(u)_at_0} we find that $0$ is a removable singularity of $U(u)$ and then $U(u)$ is an entire function. Then from \eqref{eq:U(u)_at_infty} we have $U(u) = 0$ by Liouville's theorem. Therefore \eqref{eq:from_Y_1_to_Y_2} is proved.

At last we apply the expansion \eqref{eq:series_expansion_of_Cauchy_like} for $M=j-1$ to $Y_2$ given in \eqref{eq:from_Y_1_to_Y_2}. We see that the asymptotic condition $Y_2 = \bigO(e^{-(j+1)z})$ implies that
\begin{equation}
  \int_{\realR} p(x)e^{kx} e^{-nV(x)} dx = 0, \quad k = 0, \dotsc, j-1.
\end{equation}
Comparing this with \eqref{orthoIIb}, we see that $p=Y_1$ is indeed the monic multiple orthogonal polynomial $p_j$.
\end{proof}

Below we take $j = n+k$ where $k$ a constant integer, and our goal is to obtain the asymptotics for $Y = Y^{(n+k, n)}$ as $n \to \infty$.

\subsection{First transformation $Y\mapsto T$} \label{subsec:RH_of_T}

Recall $\gfn(z)$ and $\tilde{\gfn}(z)$ defined in \eqref{eq:expr_of_gfn_and_tilde_gfn-intro} on $\compC \setminus (-\infty,
b]$ and $\strip \setminus (-\infty, b]$. Denote $Y=Y^{(n+k,n)}$ and define $T$ as follows:
\begin{equation}\label{def T}
  T(z)  \colonequals
    e^{-\frac{n\ell}{2}} Y(z)\begin{pmatrix}e^{-n\gfn(z)}&0\\0&e^{n\tilde\gfn(z)}\end{pmatrix}
    e^{\frac{n\ell}{2}\sigma_3},
\end{equation}
where $\ell$ is the constant appearing in \eqref{var eq} and \eqref{var eq g}, and $\sigma_3=\left( \begin{smallmatrix}1&0\\0&-1\end{smallmatrix} \right)$. Then $T$ satisfies a RH problem with the same domain of analyticity as $Y$, but with a different asymptotic behavior and a different jump relation.

\subsubsection*{RH problem for $T$}
\begin{enumerate}[label=(\alph{*})]
\item $T = (T_1, T_2)$, where $T_1$ is analytic in $\mathbb C\setminus\mathbb R$, and $T_2$ is analytic in $\cyld \setminus \mathbb R$,
\item $T$ satisfies the jump relation
\begin{equation}\label{RHP_T:jump}
  T_+(x) = T_-(x)J_T(x),
  \quad \textnormal{for $x \in \realR$,}
\end{equation}
with
\begin{equation} \label{JT}
  J_T(x)=
  \begin{pmatrix}
    e^{n(\gfn_-(x) - \gfn_+(x))} & e^{n(\gfn_-(x) + \tilde{\gfn}_+(x) - V(z) - \ell) - x} \\
    0 & e^{n(\tilde{\gfn}_+(x) - \tilde{\gfn}_-(x))}
  \end{pmatrix},
\end{equation}
\addtocounter{enumi}{1}
\begin{enumerate}[label=(\alph{enumi}\arabic{*}), leftmargin=0pt, itemindent=0pt]
\item as $z\to\infty$, $T_1$ behaves as
$T_1(z)= z^{k}+\bigO(z^{k-1})$,
\item as $e^z\to\infty$, $T_2$ behaves as  $T_2(z)=\bigO(e^{-(k+1)z})$, and as $e^z \to 0$, $T_2$ behaves as $T_2 = \bigO(1)$.
\end{enumerate}
\end{enumerate}

\subsection{Second transformation $T\mapsto S$} \label{subsec:RH_of_S}

 For $x \in\mathbb R\setminus[a,b]$, it follows from the analyticity of $e^{\gfn}$ and \eqref{var ineq g} that
the jump matrix $J_T(x)$ tends to the identity matrix exponentially
fast in the limit $n\to\infty$. For $x \in (a,b)$, we decompose
the jump matrix into
\begin{multline}\label{facto}
  J_T(x)
  =
  \begin{pmatrix}
    1 & 0 \\
    e^{-n\phi_-(x) + x} & 1
  \end{pmatrix}
  \begin{pmatrix}
    0 & e^{n(\gfn_-(x) + \tilde{\gfn}_+(x) - V(x) - \ell) - x} \\
    -e^{n(-\gfn_+(x) - \tilde{\gfn}_-(x) + V(x) + \ell) + x} & 0
  \end{pmatrix} \\
  \times
  \begin{pmatrix}
    1 & 0 \\
    e^{-n\phi_+(x) + x} & 1
  \end{pmatrix},
\end{multline}
where the function $\phi(z) = \gfn(z) + \tilde{\gfn}(z) - V(z) - \ell$ is defined as in \eqref{def phi}. The function $\phi(x)$ has discontinuity on $(-\infty, b])$, and by \eqref{eq:relation_of_g_tilde_g_<a} and \eqref{var eq g} it satisfies
\begin{align}
  \phi_+(x) = {}& \phi_-(x) + 4\pi i & & \text{for $x < a$,} \label{eq:phi_across_real_line_left} \\
  \phi_+(x) = {}& -\phi_-(x) & & \text{for $x \in (a, b)$.} \label{eq:phi_across_real_line_middle}
\end{align}

\begin{figure}[htb]
  \centering
  \includegraphics{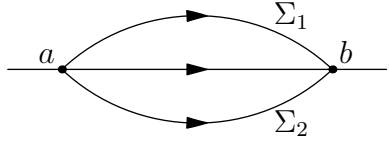}
  \caption{The lens $\Sigma_S$.}
  \label{fig:the_lens}
\end{figure}
Then we ``open the lens'', where the lens $\Sigma_S$ is a contour consisting of the real axis and two arcs from $a$ to $b$. We assume that one of the two arcs lies in the upper half plane and denote it by $\Sigma_1$, the other lies in the lower half plane and denote it by $\Sigma_2$, see Figure \ref{fig:the_lens}. We do not fix the shape of $\Sigma_S$ at this stage, but only require that $\Sigma_S$ is in $\strip$ and $V$ is analytic in a simply-connected region containing $\Sigma_S$.

Define
\begin{equation} \label{eq:def S}
  S(z)  \colonequals
  \begin{cases}
    T(z) & \textnormal{outside of  the lens,} \\
    T(z)
    \begin{pmatrix}
    1 & 0 \\
    e^{-n\phi(z) + z} & 1
  \end{pmatrix}
  & \textnormal{in the lower part of the lens,} \\
  T(z)
    \begin{pmatrix}
    1 & 0 \\
    -e^{-n\phi(z) + z} & 1
  \end{pmatrix}
  & \textnormal{in the upper part of the lens.}
  \end{cases}
\end{equation}
From the definition of $S$, we see that $S$ is discontinuous on the upper and lower arcs with jump matrix $\left( \begin{smallmatrix} 1 & 0 \\ e^{-n\phi(z) + z} & 1 \end{smallmatrix} \right)$. On $(a,b)$, it follows from \eqref{var eq g} and (\ref{facto}) that the jump matrix for $S$ takes the form $\left( \begin{smallmatrix} 0 & e^{-x} \\ -e^x & 0 \end{smallmatrix} \right)$. Summarizing, we have the following RH problem for $S$.

\subsubsection*{RH problem for $S$}
\begin{enumerate}[label=(\alph{*})]
\item $S= (S_1, S_2)$, where $S_1$ is analytic in $\mathbb C\setminus \Sigma_S$, and $S_2$ is analytic in $\cyld \setminus \Sigma_S$, and $\Sigma_S=\mathbb
R\cup\Sigma_1\cup\Sigma_2$ is the contour depicted in Figure \ref{fig:the_lens},
\item we have
\begin{equation}\label{RHP_S:jump}
  S_+(z) = S_-(z)J_S(z),\qquad\mbox{ for $z\in\Sigma_S$,}
  \end{equation}
  where (note that $e^{\phi(z)}$ is well defined for $z \in (-\infty, a)$ by \eqref{eq:phi_across_real_line_left})
  \begin{equation} \label{eq:defn_of_jump_matrix_J_S}
    J_S(z)=\begin{cases}
      \begin{pmatrix}
        1 & 0 \\
        e^{-n\phi(z) + z} & 1
      \end{pmatrix},&\textnormal{for $z\in\Sigma_1\cup\Sigma_2$,}\\
      \begin{pmatrix}
        0 & e^{-z} \\
        -e^z & 0
      \end{pmatrix},&\textnormal{for $z\in (a,b)$,}\\
      \begin{pmatrix}
        1 & e^{n\phi(z) - z} \\
        0 & 1
      \end{pmatrix},&\textnormal{for $z\in \mathbb R\setminus[a,b]$.}
    \end{cases}
  \end{equation}
  \addtocounter{enumi}{1}
  \begin{enumerate}[label=(\alph{enumi}\arabic{*}), leftmargin=0pt, itemindent=0pt]
  \item as $z\to\infty$, $S_1(z) = z^k + \bigO(z^{k-1})$,
  \item as $e^z\to\infty$, $S_2$ behaves as $S_2(z) = \bigO(e^{-(k+1)z})$, and as  $e^z \to 0$, $S_2$ behaves as $S_2(z) = \bigO(1)$.
  \end{enumerate}
\end{enumerate}

By (\ref{var eq g}), we have, for $x\in(a,b)$,
\begin{equation} \label{phi psi}
\phi_\pm'(x)=\gfn_\pm'(x) + \tilde{\gfn}_{\pm}'(x) - V'(x) = \gfn_\pm'(x)-\gfn_\mp'(x) = \mp 2\pi i\psi_V(x).
\end{equation}
Since $\psi_V(x) > 0$ for all $x \in (a, b)$, it follows from the Cauchy-Riemann conditions that
\begin{equation} \label{eq:inequality_for_phi}
\Re\phi(z)=\Re\left(\gfn(x) + \tilde{\gfn}(x) - V(x) -\ell\right)>0
\end{equation}
on both the upper arc and the lower arc, if these arcs are
chosen sufficiently close to $(a,b)$. As a consequence, the jump
matrices for $S$ on the lenses tend to the identity matrix as
$n\to\infty$. Uniform convergence breaks down when $x$ approaches
the endpoints $a$ and $b$. We need to construct local parametrices
near those points.

\subsection{Construction of local parametrices near $a$ and
$b$}\label{section: local par}

 Define
 \begin{equation}
   y_j  \colonequals  y_j(\zeta)=\omega^j\Ai(\omega^j\zeta),\quad \text{for $j=0,1,2$},
 \end{equation}
where $\omega=e^{\frac{2\pi i}{3}}$ and $\Ai$ is the Airy function.
\begin{figure}[h]
  \begin{center}
  \includegraphics{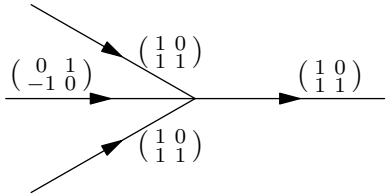}
  \end{center}
  \caption{The contour $\Gamma$ and the jump matrices for $A$.}
  \label{fig:Airy}
\end{figure}

Let
\begin{equation}
  \Gamma  \colonequals  e^{-\frac{2\pi i}{3}}\mathbb R^+\cup e^{\frac{2\pi i}{3}}\mathbb R^+\cup\realR
\end{equation}
be the contour consisting of four rays oriented each from the left to the right shown in Figure \ref{fig:Airy}, and define the $2 \times
2$ matrix-valued function $A$ in $\compC \setminus \Gamma$ as
\begin{equation}\label{def A}
  A(\zeta)  \colonequals
  \begin{cases}
    \sqrt{2\pi}e^{-\frac{\pi i}{4}}
    \begin{pmatrix}
      y_0 & -y_2\\
      y_0' & -y_2'
    \end{pmatrix},
    &\textnormal{for $0<\arg \zeta<\frac{2\pi}{3}$,} \\
    \sqrt{2\pi}e^{-\frac{\pi i}{4}}
    \begin{pmatrix}
      -y_1 & -y_2\\
      -y_1' & -y_2'
    \end{pmatrix},
    &\textnormal{for $\frac{2\pi}{3}<\arg \zeta<\pi$,} \\
    \sqrt{2\pi}e^{-\frac{\pi i}{4}}
    \begin{pmatrix}
      -y_2 & y_1\\
      -y_2' & y_1'
    \end{pmatrix},
    &\textnormal{for $-\pi<\arg \zeta<-\frac{2\pi}{3}$,} \\
    \sqrt{2\pi}e^{-\frac{\pi i}{4}}
    \begin{pmatrix}
      y_0 & y_1\\
      y_0' & y_1'
    \end{pmatrix},
    &\textnormal{for $-\frac{2\pi}{3}<\arg \zeta<0$}.
  \end{cases}
\end{equation}
Using the identity $y_0+y_1+y_2=0$, the fact that the Airy function
is an entire function, and the asymptotics as $\zeta\to\infty$ of
the Airy function, one verifies that $A$ satisfies the following
model RH problem. This RH problem (and equivalent forms of it) appeared many times in the literature and is often referred to as ``the Airy RH problem'', see for example \cite{Deift-Kriecherbauer-McLaughlin-Venakides-Zhou99, Deift99}.

\subsubsection*{RH problem for $A$}

\begin{enumerate}[label=(\alph{*})]
\item $A$ is a $2 \times 2$ matrix-valued function analytic in $\mathbb{C}\setminus\Gamma$.
\item $A$ satisfies the following jump relations on $\Gamma$,
  \begin{align}
    A_+(\zeta) = {}& A_-(\zeta)
    \begin{pmatrix}
      1 & 1\\
      0 & 1
    \end{pmatrix},
    & & \textnormal{for $\arg\zeta=0$,} \label{RHP A: b1} \\
    A_+(\zeta) = {}& A_-(\zeta)
    \begin{pmatrix}
      1 & 0\\
      1 & 1
    \end{pmatrix},
    & &\textnormal{for $\arg\zeta=\frac{2\pi}{3}$ or $\arg\zeta=\frac{-2\pi}{3}$,} \label{RHP A: b2}\\
   A_+(\zeta) = {}& A_-(\zeta)
    \begin{pmatrix}
      0 & 1\\
      -1 & 0
    \end{pmatrix},
    & & \textnormal{for $\arg\zeta=\pi$.} \label{RHP A: b3}
  \end{align}
\item $A$ has the following behavior at infinity,
  \begin{equation}\label{RHP A: c}
    A(\zeta)=\frac{1}{\sqrt{2}}\zeta^{-\frac{1}{4}\sigma_3}
    \begin{pmatrix}
      1 & 1 \\
      -1 & 1
    \end{pmatrix}
    e^{-\frac{\pi i}{4}\sigma_3}(I+\bigO(\zeta^{-3/2}))e^{-\frac{2}{3}\zeta^{3/2}\sigma_3}, \quad
    \textnormal{as $\zeta\to\infty$,}
  \end{equation}
  uniformly for $\zeta\in\mathbb{C}\setminus\Gamma$.
\end{enumerate}

\medskip

Using the regularity condition which says that $\lim_{x\to b_-}\frac{\psi_V(x)}{\sqrt{b-x}}$ exists and is positive, and the formulas of $\gfn(z)$ and $\tilde{\gfn}(z)$, and noting in addition that $\phi(b)=0$, we obtain the following local behavior for $\phi$ near $b$,
\begin{equation}
  \phi(z) = -c(z-b)^{3/2}+\bigO(\lvert z-b \rvert^{5/2}), \quad \textnormal{as $z\to
b$, where $c>0$.}
\end{equation}
Then in a neighborhood $U_b$ of $b$, there is a unique analytic
function $f_b$ satisfying $f_b(b) = 0$, $f_b'(b) > 0$ and
\begin{equation}\label{def fb}
\frac{2}{3}f_b(z)^{3/2}=-\frac{1}{2}\phi(z).
\end{equation}

Now we choose the lens $\Sigma_S$ in such a way that $f_b(z)$
maps the jump contour $U_b \cap \Sigma_S$ for $S$ on the jump
contour $\Gamma$ for $A$, and we define the $2 \times 2$ matrix-valued
function $P^{(b)}(z)$ on $U_b \setminus \Sigma_S$ as
\begin{equation}\label{Pb}
  P^{(b)}(z)  \colonequals  A(n^{2/3} f_b(z)) e^{-\frac{1}{2} (n \phi(z) - z)\sigma_3}.
\end{equation}
Using the jump relations \eqref{RHP A: b1}--\eqref{RHP A: b3} for $A$ and \eqref{eq:phi_across_real_line_left} and \eqref{eq:phi_across_real_line_middle} for $\phi(z)$, one verifies that
\begin{equation} \label{eq:jump_of_P^2}
  P^{(b)}_+(z) = P^{(b)}_-(z) J_S(z), \quad \textnormal{for $z \in U_b \cap \Sigma_S$,}
\end{equation}
where $J_S$ is given in \eqref{eq:defn_of_jump_matrix_J_S}. Since the determinant of $A$ is identically equal to $1$, $A$ is invertible, and so is
$P^{(b)}(z)$ for $z\in U_b \cap \Sigma_S$. By \eqref{eq:jump_of_P^2} and \eqref{RHP_S:jump}, we have
\begin{equation} \label{eq:jump_of_S_P^2}
  S_+(z)P^{(b)}_+(z)^{-1} = S_-(z)P^{(b)}_-(z)^{-1}, \quad \textnormal{for $z \in U_b \cap \Sigma_S$.}
\end{equation}

Similarly, near $a$,
\begin{equation}
  \phi(z)=-\tilde{c}(a-z)^{3/2}+\bigO(\lvert a-z \rvert^{5/2}) \pm 2\pi i, \quad \textnormal{as $z\to
a$, $\tilde{c}>0$,}
\end{equation}
where the sign of $\pm 2\pi i$ depends on whether $z$ is in the
upper or lower half plane. In a neighborhood $U_a$ of $a$, there is
a unique analytic function $f_a$ satisfying $f_a(a)=0$, $f_a'(a) < 0$, and
\begin{equation}\label{def fa}
  \frac{2}{3}f_a(z)^{3/2}=-\frac{1}{2}\phi(z) \pm \pi i.
\end{equation}
Again we can choose the lens $\Sigma_S$ in such a way that
$f_a(z)$ maps the jump contour $U_a \cap \Sigma_S$ for $S$ on the
jump contour $\Gamma$ for $A$. Then define the $2 \times 2$
matrix-valued function $P^{(a)}(z)$ on $U_a \setminus \Sigma_S$ as
\begin{equation}\label{Pa}
  P^{(a)}(z)  \colonequals  \sigma_3A(n^{2/3} f_a(z)) e^{-\frac{1}{2} (n \phi(z) - z)\sigma_3}\sigma_3.
\end{equation}
Similarly to \eqref{eq:jump_of_P^2} and \eqref{eq:jump_of_S_P^2}, we
have
\begin{equation}
  S_+(z)P^{(a)}_+(z)^{-1} = S_-(z)P^{(a)}_-(z)^{-1}, \quad \textnormal{for $z \in U_a \cap \Sigma_S$.}
\end{equation}

\begin{rmk}
Usually, a local parametrix serves as a local approximation to the solution of the RH problem. Since $S$ is vector-valued and our local parametrices $P^{(a)}$ and $P^{(b)}$ are $2\times 2$-valued, this is not quite true in our situation, but it will turn out later that large $n$ asymptotics for $S$ near $a$ and $b$ can be expressed in terms of $P^{(b)}$ and $P^{(a)}$, and thus in terms of the Airy function. Later in Section \ref{subsec:RH_of_P^infty_and_F}, we will build a vector-valued ``global parametrix'' $P^{(\infty)}$, which approximates $S$ away from the endpoints $a$ and $b$. Before introducing $P^{(\infty)}$, we perform one more transformation of the RH problem for $S$ in the next subsection.
\end{rmk}

\subsection{Third transformation $S\mapsto P$}

The following transformation will modify the jumps in the vicinity of $a$ and $b$: the jumps on $\Sigma_1$ and $\Sigma_2$ will be removed in $U_a$ and $U_b$. As a drawback, a discontinuity will appear on $\partial U_a$ and $\partial U_b$, but the jump matrices on these boundaries will be close to the identity matrix for large $n$.

Define
\begin{equation}\label{def P}
  P(z)  \colonequals
  \begin{cases}
    S(z) & \textnormal{for $z \in \compC \setminus (\overline{U_a} \cup \overline{U_b} \cup \Sigma_S)$,} \\
    S(z) P^{(a)}(z)^{-1} \frac{1}{\sqrt{2}} (n^{2/3} f_a(z))^{-\frac{1}{4}\sigma_3}
    \begin{pmatrix}
      1 & 1 \\
      -1 & 1
    \end{pmatrix}
    e^{-(\frac{\pi i}{4} - \frac{z}{2})\sigma_3} & \textnormal{for $z \in U_a \setminus \Sigma_S$,} \\
    S(z) P^{(b)}(z)^{-1} \frac{1}{\sqrt{2}} (n^{2/3} f_b(z))^{-\frac{1}{4}\sigma_3}
    \begin{pmatrix}
      1 & 1 \\
      -1 & 1
    \end{pmatrix}
    e^{-(\frac{\pi i}{4} - \frac{z}{2})\sigma_3} & \textnormal{for $z \in U_b \setminus \Sigma_S$.}
  \end{cases}
\end{equation}
Then $P$ is constructed in such a way that it has jumps on a contour

\begin{equation}\label{SigmaP}
  \Sigma_P  \colonequals  (\Sigma_S \setminus (U_a \cup U_b)) \cup \partial U_a \cup \partial U_b \cup [a, b]
\end{equation}
  as shown in Figure \ref{fig:Sigma_P}. We define $(n^{2/3} f_b(z))^{-\frac{1}{4}\sigma_3}$ and $(n^{2/3} f_a(z))^{-\frac{1}{4}\sigma_3}$ in such a way that they have branch cuts on $[a, b]$ and they are positive on $(b, \infty)$ and $(-\infty, a)$ respectively. The jumps inside the disks on $\realR \setminus [a, b]$ and the lips $\Sigma_1, \Sigma_2$ are equal to the identity matrix since $S(z) P^{(b)}(z)^{-1}$ and $S(z) P^{(a)}(z)^{-1}$ are analytic there, but there is a jump on $(a,b)$ due to the branch cuts of $(n^{2/3} f_b(z))^{-\frac{1}{4}\sigma_3}$ and $(n^{2/3} f_a(z))^{-\frac{1}{4}\sigma_3}$. Also note that, unlike $Y, T, S$ whose entries are all bounded in any bounded region of their domains, $P(z)$ has inverse fourth root singularities at $a$ and $b$.

\subsubsection*{RH problem for $P$}

\begin{enumerate}[label=(\alph{*})]
\item $P = (P_1, P_2)$, where $P_1$ is analytic in $\mathbb C\setminus \Sigma_P$, and
$P_2$ is analytic in $\cyld \setminus \Sigma_P$,
\item we have
  \begin{equation}\label{jump P}
    P_+(z) = P_-(z)J_P(z),\quad \textnormal{for $z \in \Sigma_P$,}
  \end{equation}
  where
  \begin{equation} \label{eq:last_RHP_for_P0}
    J_P(z)= 
    \begin{cases}
      J_S(z) & \textnormal{for $z \in \Sigma_S \setminus (\overline{U_a} \cup \overline{U_b})$,} \\
      \frac{1}{\sqrt{2}} e^{(\frac{\pi i}{4} - \frac{z}{2}) \sigma_3}
      \begin{pmatrix}
        1 & -1 \\
        1 & 1
      \end{pmatrix}
      (n^{2/3} f_a(z))^{\frac{1}{4} \sigma_3} P^{(a)}(z) &\textnormal{for $z \in \partial U_a$,} \\
      \frac{1}{\sqrt{2}} e^{(\frac{\pi i}{4} - \frac{z}{2}) \sigma_3}
      \begin{pmatrix}
        1 & -1 \\
        1 & 1
      \end{pmatrix}
      (n^{2/3} f_b(z))^{\frac{1}{4} \sigma_3} P^{(b)}(z) &\textnormal{for $z \in \partial U_b$,} \\
      \begin{pmatrix}
        0 & e^{-z} \\
        -e^z & 0
      \end{pmatrix}
      & \textnormal{for $z \in (a,b)$,} \\
    \end{cases}
  \end{equation}
  \addtocounter{enumi}{1}
  \begin{enumerate}[label=(\alph{enumi}\arabic{*}), leftmargin=0pt, itemindent=0pt]
  \item as $z \to \infty$, $P_1(z) = z^k + \bigO(z^{k-1})$,
  \item as $e^z \to +\infty$, $P_2$ behaves as $P_2(z) = \bigO(e^{-(k+1)z})$, and as $e^z \to 0$, $P_2$ behaves as $P_2(z) = \bigO(1)$.
  \item
 \begin{align}
    P(z) = {}& (\bigO(|z-a|^{-1/4}), \bigO(|z-a|^{-1/4})) & & \text{ as $z\to a$,} \\ 
    P(z) = {}& (\bigO(|z-b|^{-1/4}), \bigO(|z-b|^{-1/4})) & & \text{ as $z\to b$.} 
  \end{align}
  \end{enumerate}
\end{enumerate}

\begin{figure}[htp]
  \centering
  \includegraphics{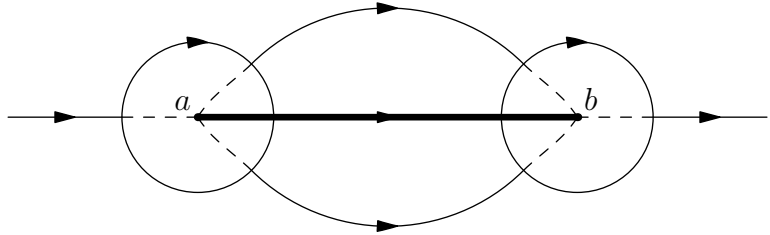}
  \caption[]{The contour $\Sigma_P$. On the boldface part of the contour, $J_P =
  \left( \begin{smallmatrix} 0 & e^{-z} \\ -e^z &0 \end{smallmatrix} \right)$ and on the other parts, $J_P \to I$ uniformly. Note that $\Sigma_P$ divides the complex plane into six regions: the two ``edge regions'' $U_a$ and $U_b$, the two ``bulk regions'' in the upper and lower parts of the lens and not in $U_a$ or $U_b$, and the two ``outside regions''. The dashed lines that belong to $\Sigma_S$ but not to $\Sigma_P$, together with the interval $(a, b)$, divide each edge region into four subregions, two inside the lens and two out of the lens.}
  \label{fig:Sigma_P}
\end{figure}

\subsection{Construction of the outer parametrix} \label{subsec:RH_of_P^infty_and_F}

For $z\in\partial U_a\cup\partial U_b$, the definition of the local parametrices (\ref{Pb}) and
(\ref{Pa}) together with the asymptotics (\ref{RHP A: c}) for $A$
imply that $J_P(z)=I+\bigO(n^{-1})$ as $n\to\infty$. For $z \in
\Sigma_S \setminus [a, b]$ and not included in $U_a$ or $U_b$, by the asymptotics of $\phi(z)$ given in \eqref{eq:inequality_for_phi} and \eqref{var ineq g}, we have that $J_P(z)$ decays exponentially as $n \to \infty$. Thus, in some sense, it is expected that
\begin{equation} \label{eq:derivation_of_outer_parametrix}
  P(z) \to P^{(\infty)}(z),
\end{equation}
where $P^{(\infty)}(z)$ has the same analyticity, asymptotic, and
periodicity properties, and has the jump condition
\begin{equation} \label{eq:RHP_for_outer_parametrix}
  P^{(\infty)}_+(x) = P^{(\infty)}_-(x)
  \begin{pmatrix}
    0 & e^{-x} \\
    -e^x & 0
  \end{pmatrix},
  \quad \textnormal{for $x\in(a,b)$}.
\end{equation}
We would like to construct a solution to the following RH problem:

\subsubsection*{RH problem for $P^{(\infty)}$}

\begin{enumerate}[label=(\alph{*})]
\item $P^{(\infty)} = (P^{(\infty)}_1, P^{(\infty)}_2)$, where $P^{(\infty)}_1$ is an analytic function
  in $\compC \setminus [a,b]$, and $P^{(\infty)}_2$ is an analytic
  function in $\cyld \setminus [a,b]$,
\item $P^{(\infty)}$ satisfies the jump relation
  \eqref{eq:RHP_for_outer_parametrix},
  \addtocounter{enumi}{1}
  \begin{enumerate}[label=(\alph{enumi}\arabic{*}), leftmargin=0pt, itemindent=0pt]
  \item \label{enu:RHP_of_P^infty:c1}
    as $z \to \infty$, $P^{(\infty)}_1(z) = z^k + \bigO(z^{k-1})$,
  \item \label{enu:RHP_of_P^infty:c2}
    as $e^z \to +\infty$, $P^{(\infty)}_2$ behaves as $P^{(\infty)}_2(z) = \bigO(e^{-(k+1)z})$, and as $e^z \to 0$, $P^{(\infty)}_2$ behaves as $P^{(\infty)}_2(z) = \bigO(1)$,
  \item \label{enu:RHP_of_P^infty:c3}
    \begin{align}
      P^{(\infty)}(z) = {}& (\bigO(|z-a|^{-1/4}), \bigO(|z-a|^{-1/4})) & & \text{ as $z\to a$,} \\ 
      P^{(\infty)}(z) = {}& (\bigO(|z-b|^{-1/4}), \bigO(|z-b|^{-1/4})) & & \text{ as $z\to b$.} 
    \end{align}
  \end{enumerate}
\end{enumerate}
After the construction of $P^{(\infty)}$, we will prove the
convergence \eqref{eq:derivation_of_outer_parametrix}.

We use the transformation $\Jlike_{c_1,c_0}$ defined before in
\eqref{eq:Joukowsky_like_transform}, where the parameters $c_1$ and $c_0$ depend on $a$ and $b$, the endpoints of the support of the equilibrium measure. Recall $s_a$ and $s_b$ defined
in \eqref{eq:defn_of_s_a_and_s_b} and the relation
\eqref{eq:defining_formula_of_a_b} between $s_a, s_b$ and
$a,b$. Below we write $\Jlike_{c_1, c_0}$ as $\Jlike$ if there is no
confusion.

By Proposition \ref{prop:Joukowsky_like}, $\Jlike$ maps $\compC \setminus \bar{D}$ conformally to
$\compC \setminus [\Jlike(s_a), \Jlike(s_b)]$, and maps $D \setminus
[-\frac{1}{2},\frac{1}{2}]$ conformally to $\strip \setminus
[\Jlike(s_a), \Jlike(s_b)]$, so that we can define the function $F(s)$ on
$\compC \setminus (\gamma_1 \cup \gamma_2 \cup
[-\frac{1}{2},\frac{1}{2}])$ by
\begin{equation} \label{eq:defn_of_F(s)_by_f_and_g0}
  F(s)  \colonequals
  \begin{cases}
    P^{(\infty)}_1(\Jlike(s)) & \textnormal{for $s \in \compC \setminus \bar{D}$,} \\
    P^{(\infty)}_2(\Jlike(s)) & \textnormal{for $s \in D \setminus [-\frac{1}{2},\frac{1}{2}]$.}
  \end{cases}
\end{equation}
Since $P_2^{(\infty)}$ is defined on $\cyld$, that is, it satisfies a periodic boundary condition on $\strip$,
we have that the definition of $F(s)$ can be extended to
$(-\frac{1}{2},\frac{1}{2})$.
In this way the transformation from $P^{(\infty)}$ to $F$ is invertible:
we can recover the outside parametrix $P^{(\infty)}$ by the
formula
\begin{align}\label{def Pinfty}
  P^{(\infty)}_1(z)= {}& F(\Jinv_1(z)),& & \text{for $z\in\compC\setminus [a,b]$,}\\
  P^{(\infty)}_2(z)= {}& F(\Jinv_2(z)),& & \text{for $z\in\strip\setminus [a,b],$}
\end{align}
where $\Jinv_1$ and $\Jinv_2$ are, as defined in \eqref{eq:defn_of_I_1} and \eqref{eq:defn_of_I_2}, the inverses of
$\Jlike$ mapping $\mathbb C\setminus[a,b]$ to $\mathbb C\setminus
\overline D$ and to $D\setminus[-\frac{1}{2},\frac{1}{2}]$
respectively.
All information about the vector-valued function $P^{(\infty)}$ is now carried by the single complex-valued function $F$, which is discontinuous on $\gamma_1 \cup \gamma_2$ by definition.

From condition \ref{enu:RHP_of_P^infty:c2} of the RH problem for $P^{(\infty)}$ and the definition of $\Jlike$,
it follows that $F$ has a removable
singularity at $-\frac{1}{2}$, and that $F$ has a zero of multiplicity $k+1$ at $\frac{1}{2}$ if $k \geq 0$, a removable singularity if $k=-1$, and a pole of order $-k-1$ if $k < -1$. The inverse fourth root singularities of $P^{(\infty)}$ at $a, b$ are transformed into inverse square root singularities of $F$ at $s_a, s_b$, because $\Jlike'(s)$ has simple zeros at $s_a$ and $s_b$.
In order to compute the jump relation satisfied by $F$,
note that
\begin{equation}
  e^{\Jlike(s)} = e^{c_1 s + c_0} \frac{s + \frac{1}{2}}{s -
  \frac{1}{2}},
\end{equation}
and
\begin{align}
  F_+(s) = {}& P^{(\infty)}_{1,+}(\Jlike(s)), & F_-(s) = {}& P^{(\infty)}_{2,-}(\Jlike(s)),&& \text{for $s\in\gamma_1$,} \\
  F_+(s) = {}& P^{(\infty)}_{2,+}(\Jlike(s)), & F_-(s) = {}& P^{(\infty)}_{1,-}(\Jlike(s)),&& \text{for $s\in\gamma_2$.}
\end{align}
It is now straightforward to verify the following RH conditions for $F$.

\subsubsection*{RH problem for $F$}

\begin{enumerate}[label=(\alph{*})]
\item $F$ is analytic in
$\compC\setminus(\gamma_1\cup\gamma_2)$ if $k \geq -1$, and analytic in $\compC \setminus (\gamma_1 \cup \gamma_2 \cup \{ \frac{1}{2} \})$ if $k < -1$,
\item for $s\in\gamma_1\cup\gamma_2$, we have
\begin{align}
  F_+(s) = & -e^{c_1 s + c_0} \frac{s + \frac{1}{2}}{s - \frac{1}{2}}F_-(s), & \textnormal{for $s \in \gamma_1$,}
  \label{eq:1st_jump_condition_of_RHP_for_F(s)0} \\
  F_+(s) = & e^{-c_1 s - c_0} \frac{s - \frac{1}{2}}{s + \frac{1}{2}}F_-(s), & \textnormal{for $s \in \gamma_2$,}
  \label{eq:2nd_jump_condition_of_RHP_for_F(s)0}
\end{align}
\item we have the asymptotic conditions
\begin{align}
  F(s) = {}& c^k_1 s^k + \bigO(s^{k-1}), & & \textnormal{as $s \to \infty$,} \label{eq:boundary_condition_of_F(s)_at_infty}  \\
  F(s) = {}& \bigO((s-\frac{1}{2})^{k+1}), & & \textnormal{as $s \to \frac{1}{2}$,} \label{eq:boundary_condition_of_F(s)_at_1} \\
  F(s) = {}& \bigO(\lvert s - s_a \rvert^{-\frac{1}{2}}) & & \text{as $s \to s_a$,} \\
  F(s) = {}& \bigO(\lvert s - s_b \rvert^{-\frac{1}{2}}) & & \text{as $s \to s_b$.} \label{eq:boundary_condition_of_F(s)_at_s_b}
\end{align}
\end{enumerate}
One can explicitly construct a solution $F$ to the above RH problem:
\begin{equation}\label{def F1}
  F(s) =
  \begin{cases}
     c^k_1 \frac{(s + \frac{1}{2})(s - \frac{1}{2})^k}{\sqrt{(s-s_a)(s-s_b)}} & \textnormal{for $s \in \compC
    \setminus \bar{D}$,} \\
     c^k_1 (s-\frac{1}{2})^{k+1} \frac{e^{-c_1 s - c_0}}{\sqrt{(s-s_a)(s-s_b)}} & \textnormal{for $s \in D$.}
  \end{cases}
\end{equation}
where $\sqrt{(s-s_a)(s-s_b)}$ is taken to be continuous in
$\compC \setminus \gamma_1$ and $\sqrt{(s-s_a)(s-s_b)} \sim s$ as $s \to \infty$.

Note that $F(s)$ and the function $G_k(z)$ defined in \eqref{eq:defn_of_G_and_G_hat} are related by (upon expressing $s_a$ and $s_b$ by \eqref{eq:defn_of_s_a_and_s_b})
\begin{equation} \label{eq:relation_between_G_and_F}
  G_k(s) =
  \begin{cases}
    F(s) & \text{if $s \in \compC \setminus \overline{D}$,} \\
    e^{\Jlike(s)} F(s) & \text{if $s \in D$.}
  \end{cases}
\end{equation}

\subsection{The convergence of $P \to P^{(\infty)}$}
\label{sec:convergence-p-to}

\begin{figure}[htp]
  \centering
  \includegraphics[scale=0.7]{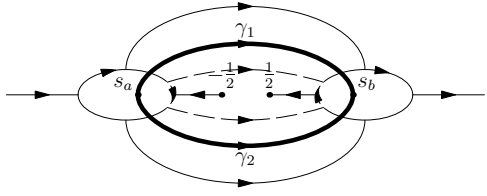}
  \caption{The contour $\Jlike^{-1}(\Sigma_P)$. The boldface part consists of $\gamma_1$ and $\gamma_2$, the solid part is $\Sigma'$ and the dashed part is $\Sigma''$.}
  \label{fig:Jlike_inverse_Sigma_P}
\end{figure}

We will now apply the same idea as in the construction of the outer parametrix to $P$, and want to transform the RH problem
to the $s$-plane using the transformation $z=\Jlike(s)$, in such a way that $P=(P_1, P_2)$ is transformed to a single complex-valued function $\F$. Therefore
we define $\F$ on $\compC \setminus
\Jlike^{-1}(\Sigma_P)$ analogous to
\eqref{eq:defn_of_F(s)_by_f_and_g0}:
\begin{equation}\label{def F2}
  \F(s)  \colonequals
  \begin{cases}
    P_1(\Jlike(s)) & \textnormal{if $s \in \compC \setminus \bar{D}$ and $\Jlike(s) \notin \Sigma_P$,} \\
    P_2(\Jlike(s)) & \textnormal{if $s \in D \setminus [-\frac{1}{2}, \frac{1}{2}]$ and $\Jlike(s) \notin \Sigma_P$.}
  \end{cases}
\end{equation}
The inverse of this transformation is given by
\begin{align}\label{def P F}
  P_1(z) = {}& \F(\Jinv_1(z)), & & \text{for $z\in\compC\setminus\Sigma_P$,} \\
  P_2(z) = {}& \F(\Jinv_2(z)), & & \text{for $z\in\strip\setminus\Sigma_P$.}
\end{align}
The jump contour of $\F$ will consist of the inverse image of $\Sigma_P$ under $\Jlike$.
We can decompose this jump contour $\Jlike^{-1}(\Sigma_P)$
into three different parts: $\gamma_1 \cup \gamma_2$, the part in $D$ and the part in $\compC \setminus \overline{D}$ as follows, see Figure \ref{fig:Jlike_inverse_Sigma_P}:
\begin{equation} \label{eq:defn_of_Sigma'}
  \Jlike^{-1}(\Sigma_P)=\Sigma'\cup\Sigma''\cup(\gamma_1\cup\gamma_2),\quad \text{where} \quad
\Sigma'=\Jinv_1(\Sigma_P\setminus[a,b]),\quad
\Sigma''=\Jinv_2(\Sigma_P\setminus[a,b]).
\end{equation}
 Similar to $F(s)$, the definition of $\F(s)$ can be
extended to $[-\frac{1}{2}, \frac{1}{2})$ because of the periodicity of $P_2$ and its behavior as $\Re z\to -\infty$. The RH problem for $P$, however,
no longer transforms to a scalar RH problem for $\F(s)$. For
$s\in \gamma_1\cup\gamma_2$, we still have
the scalar jump conditions
\begin{align}
 \label{jump F1} \F_+(s) = {}& -e^{\Jlike(s)} \F_-(s), & & \text{for $s\in\gamma_1$}, \\
  \label{jump F2}  \F_+(s) = {}& e^{-\Jlike(s)} \F_-(s), & & \text{for $s\in\gamma_2$},
\end{align}
but on the other parts of the jump contour, the jump conditions become non-local.
Since $\F_\pm(s) = P_{1,\pm}(\Jlike(s))$ for $s\in\Sigma'$ and $\F_\pm(s) = P_{2,\pm}(\Jlike(s))$ for $s\in\Sigma''$, where the orientation for $\Sigma'$ and $\Sigma''$ is that inherited from the orientation on $\Sigma_P$ through $\Jinv_1$ and $\Jinv_2$,
the jump conditions (\ref{jump P}) for $P$ transform into
\begin{align}
\F_+(s) = {}& J_{P, 11}(\Jlike(s)) \F_-(s) + J_{P,21}(\Jlike(s)) \F_-(\Jinv_2(\Jlike(s))), &&\text{for $s\in\Sigma'$}, \label{jump F3} \\
  \F_+(s) = {}& J_{P, 12}(\Jlike(s)) \F_-(\Jinv_1(\Jlike(s))) + J_{P,22}(\Jlike(s)) \F_-(s), &&\text{for $s\in\Sigma''$}, \label{jump F4}
\end{align}
where $J_P$ is the $2 \times 2$ jump matrix defined in \eqref{eq:last_RHP_for_P0}.
In other words, the boundary value $\F_+(\Jinv_1(z))$ depends not only on $\F_-(\Jinv_1(z))$, but also on $\F_-(\Jinv_2(z))$, and vice versa for $\F_+(\Jinv_2(z))$. For this reason, we will call the jump relations \eqref{jump F3}--\eqref{jump F4} ``shifted'' jump relations, and the RH problem for $\F$ a shifted RH problem, following the terminology of \cite{Gakhov90}.
The asymptotic conditions for $\F$ are the same as the ones for $F$. By conditions \ref{enu:RHP_of_P^infty:c1}--\ref{enu:RHP_of_P^infty:c3} of the RH problem for $P$, we have analogous to \eqref{eq:boundary_condition_of_F(s)_at_infty}--\eqref{eq:boundary_condition_of_F(s)_at_s_b} that
\begin{align}
  \F(s) = {}& c^k_1 s^k + \bigO(s^{k-1}), & & \textnormal{as $s \to \infty$,} \label{eq:boundary_condition_of_F_at_infty}  \\
  \F(s) = {}& \bigO((s-\frac{1}{2})^{k+1}), & & \textnormal{as $s \to \frac{1}{2}$.}  \label{eq:boundary_condition_of_F_at_1} \\
  \F(s) = {}& \bigO(\lvert s-s_a \rvert^{-1/2}), & & \text{as $s\to s_a$,} \label{F loc a} \\
  \F(s) = {}& \bigO(\lvert s-s_b \rvert^{-1/2}), & & \text{as $s\to s_b$.} \label{F loc b}
\end{align}
Since $F(s) \neq 0$ for all $s \in \compC \setminus
(\gamma_1 \cup \gamma_2 \cup \{ \frac{1}{2} \})$, while at $\frac{1}{2}$ the order of the pole of $\F(s)$ is at most equal to that of $F(s)$, we can define the analytic function
\begin{equation}\label{def R}
  R(s)  \colonequals  \frac{\F(s)}{F(s)}, \quad \textnormal{for $s \in \compC \setminus \Jlike^{-1}(\Sigma_P)$.}
\end{equation}
By \eqref{jump F1}, \eqref{jump F2} and
\eqref{eq:1st_jump_condition_of_RHP_for_F(s)0}, \eqref{eq:2nd_jump_condition_of_RHP_for_F(s)0},
it follows that $R$ is analytic across $(\gamma_1\cup\gamma_2)$. Furthermore, the RH problem for $F(s)$ and the shifted RH problem for $\F(s)$ yield the following shifted RH conditions satisfied by $R$.

\subsubsection*{Shifted RH problem for $R$}

\begin{enumerate}[label=(\alph{*})]
\item $R$ is analytic in $\compC\setminus
(\Sigma'\cup\Sigma'')$,
\item $R$ has the jump conditions
\begin{align}
 R_+(s) = {}& J_{R, 11}(s) R_-(s) + J_{R,21}(s) R_-(\Jinv_2(\Jlike(s))), && \text{for $s\in\Sigma'$}, \label{jump R1} \\
 R_+(s) = {}& J_{R, 12}(s) R_-(\Jinv_1(\Jlike(s))) + J_{R,22}(s) R_-(s), && \text{for $s\in\Sigma''$}, \label{jump R2}
\end{align}
where
\begin{align} 
  J_{R, 11}(s) = {}& J_{P, 11}(\Jlike(s)), & J_{R, 21}(s) = {}& J_{P,21}(\Jlike(s)) \frac{F(\Jinv_2(\Jlike(s)))}{F(s)}, \label{eq:J_R_upper} \\
  J_{R, 12}(s) = {}& J_{P,12}(\Jlike(s)) \frac{F(\Jinv_1(\Jlike(s)))}{F(s)}, & J_{R, 22}(s) = {}& J_{P, 22}(\Jlike(s)), \label{eq:J_R_lower}
\end{align}
\item $R$ is bounded, and $R(s)=1+\bigO(s^{-1})$ as $s\to\infty$.
\end{enumerate}
Substituting the asymptotic properties of $J_P$ stated in the beginning of Section \ref{subsec:RH_of_P^infty_and_F} and the formula \eqref{def F1} of $F(s)$ into \eqref{eq:J_R_upper} and \eqref{eq:J_R_lower}, as $n \to \infty$, we have the uniform asymptotic estimates
\begin{align}
   J_{R, 11}(s)= {}& 1+\bigO(n^{-1}), & J_{R, 21}(s)= {}& \bigO(n^{-1}), &\text{for $s$}\in {}& \Sigma', \label{eq:J_R:1} \\
   J_{R, 12}(s)= {}& \bigO(n^{-1}), & J_{R, 22}(s)= {}& 1+\bigO(n^{-1}), &\text{for $s$}\in {}& \Sigma''. \label{eq:J_R:2}
\end{align}
Moreover, for $s$ on the real parts of $\Sigma'$ and $\Sigma''$, $J_{R,21}$ vanishes identically: by (\ref{eq:last_RHP_for_P0}) and (\ref{eq:defn_of_jump_matrix_J_S}), we have
\begin{equation}
J_{R,21}(s)=0,\quad \text{for $s\in(\Sigma' \cup \Sigma'') \cap \mathbb R$.}\label{JR0}
\end{equation}

To obtain asymptotics for $R(s)$, we introduce an operator
$\Delta_R$ that acts on functions defined on
$\Sigma_R=\Sigma'\cup\Sigma''$. Let $f$ be a complex-valued function
defined on $\Sigma_R$. Then we define $g=\Delta_R f$ by
\begin{align}
  g(s) = {}& [J_{R, 11}(s) - 1]f(s) + J_{R, 21}(s) f(\Jinv_2(\Jlike(s))),&&\mbox{ for $s\in\Sigma'$,} \label{Delta1}\\
  g(s) = {}& J_{R, 12}(s) f(\Jinv_1(\Jlike(s)))+ [J_{R, 22}(s) - 1]f(s),&&\mbox{ for $s\in\Sigma''$}.\label{Delta2}
\end{align}
For bounded function $f(s)$, $g(s)$ is also bounded and decays rapidly as $\lvert s \rvert \to \infty$. If we regard $\Delta_R$ as a linear
operator from $L^2(\Sigma_R)$ to itself, we will see that it is bounded and that its operator norm
is $\bigO(n^{-1})$ as $n\to\infty$.
For that purpose, note first that, by (\ref{Delta1}) and (\ref{Delta2}),
\begin{multline}
\|\Delta_R f\|_{L^2(\Sigma_R)}
\leq \|[J_{R, 11} - 1]f\|_{L^2(\Sigma')}+\|J_{R, 21}f(\Jinv_2(\Jlike))\|_{L^2(\Sigma')}\\
+\|J_{R, 12}f(\Jinv_1(\Jlike))\|_{L^2(\Sigma'')}+\|[J_{R, 22} - 1]f\|_{L^2(\Sigma')}.
\end{multline}
Using the fact that $J_{R,11}-1$ and $J_{R,22}-1$ are uniformly $\bigO(n^{-1})$ on $\Sigma'$ and $\Sigma''$ as $n\to\infty$, see
(\ref{eq:J_R:1})--(\ref{eq:J_R:2}), we obtain that there exists a constant $c>0$ such that
\begin{equation}\label{norm 2}\|\Delta_R f\|_{L^2(\Sigma_R)}\leq \frac{c}{n}\|f\|_{L^2(\Sigma_R)}+\|J_{R, 21}f(\Jinv_2(\Jlike))\|_{L^2(\Sigma')}
+\|J_{R, 12}f(\Jinv_1(\Jlike))\|_{L^2(\Sigma'')}.\end{equation}
For the second term on the right-hand side, we have
\begin{equation}
  \begin{split}
    \|J_{R, 21}f(\Jinv_2(\Jlike))\|_{L^2(\Sigma')}^2= {}&\int_{\Sigma'} |f(\Jinv_2(\Jlike(s)))|^2 |J_{R,21}(s)|^2 ds\\ 
    = {}&\int_{\Sigma''} |f(u)|^2 |J_{R,21}(\Jinv_1(\Jlike(u)))|^2 |(\Jinv_1(\Jlike))'(u)| du \\ 
    \leq {}& \sup_{u\in\Sigma''}\left(|J_{R,21}(\Jinv_1(\Jlike(u)))|^2 |(\Jinv_1(\Jlike))'(u)|\right)\cdot \|f\|_{L^2(\Sigma_R)}^2 .
  \end{split}
\end{equation}
For $u\in\Sigma''$ bounded away from $\pm 1/2$, it is straightforward to verify by (\ref{eq:J_R:1}) and properties of the transformation $\Jlike$ that  $|J_{R,21}(\Jinv_1(\Jlike(u)))|^2 |(\Jinv_1(\Jlike))'(u)|$ is $\bigO(n^{-2})$ as $n\to\infty$, uniformly in $u$. For $u\in\Sigma''$ close to $\pm 1/2$, we observe by (\ref{JR0}) that $J_{R,21}(\Jinv_1(\Jlike(u)))=0$, which implies the existence of a constant $c'$ such that
\begin{equation}
  \|J_{R, 21}f(\Jinv_2(\Jlike))\|_{L^2(\Sigma')}\leq \frac{c'}{n}\|f\|_{L^2(\Sigma_R)}.
\end{equation}
Regarding the last term in (\ref{norm 2}),
\begin{equation}
  \begin{split}
    \|J_{R, 12}f(\Jinv_1(\Jlike))\|_{L^2(\Sigma'')}^2 = {}&\int_{\Sigma''} |f(\Jinv_1(\Jlike(s)))|^2 |J_{R,21}(s)|^2 ds \\ 
    = {}&\int_{\Sigma'} |f(u)|^2 |J_{R,12}(\Jinv_2(\Jlike(u)))|^2 |(\Jinv_2(\Jlike))'(u)| du \\ 
    \leq {}& \sup_{u\in\Sigma'}\left(|J_{R,12}(\Jinv_2(\Jlike(u)))|^2 |(\Jinv_2(\Jlike))'(u)|\right)\cdot \|f\|_{L^2(\Sigma_R)}^2,
  \end{split}
\end{equation}
 and it follows from (\ref{eq:J_R:2}) that
\begin{equation}
  \|J_{R, 12}f(\Jinv_1(\Jlike))\|_{L^2(\Sigma'')}\leq \frac{c''}{n}\|f\|_{L^2(\Sigma_R)}.
\end{equation}
From the above estimates, it follows that there exists a constant $M>0$ such that
\begin{equation}
\|\Delta_R f\|_{L^2(\Sigma_R)}\leq \frac{M}{n}\|f\|_{L^2(\Sigma_R)},\qquad \|\Delta_R\|_{L^2(\Sigma_R)}\leq \frac{M}{n} .
\end{equation}

Next, we define another bounded linear operator
$C_{\Delta_R}$ from $L^2(\Sigma_R)$ to itself, by
\begin{equation}
  C_{\Delta_R}(f)  \colonequals  C_-(\Delta_R(f)), \quad \text{where} \quad C_- g(s)=\frac{1}{2\pi i}\lim_{s'\to s_-}\int_{\Sigma_R} \frac{g(\xi)}{\xi
  -s'}d\xi,
\end{equation}
and the limit $s'\to s_-$ is taken when approaching the contour from
the minus side. The operator norm of $C_{\Delta_R}$ is also uniformly
$\bigO(n^{-1})$ as $n\to\infty$ since the Cauchy operator $C_-$ is bounded. Thus $(1
- C_{\Delta_R})$ can be inverted by a Neumann series for $n$
sufficiently large. We claim now that $R$ satisfies the integral equation
\begin{equation}\label{Plemelj}
R(s)=1+C(\Delta_R R_-)(s),\quad \text{where} \quad C g(s)=\frac{1}{2\pi
i}\int_{\Sigma_R} \frac{g(\xi)}{\xi
  -s}d\xi.
\end{equation}
To prove this claim, note that the solution to the RH problem for $R$
is unique because it is equivalent to the uniquely solvable RH problem for
$Y$. This means that it is sufficient to prove that the right-hand side of (\ref{Plemelj}), which we will denote by $\tilde R$ for simplicity, satisfies the RH conditions for $R$. Obviously $\tilde R(z)$ is bounded and tends
to $1$ as $z\to\infty$, and it suffices to prove that the solution $\tilde{R}$ satisfies the jump relations \eqref{jump R1} and \eqref{jump R2}. Using the Cauchy
operator identity $C_+-C_-=1$, it follows that
\begin{equation}
  \tilde{R}_+ - \tilde{R}_- = (1 + C_+(\Delta_R \tilde{R}_-)) - (1 + C_-(\Delta_R \tilde{R}_-)) = (C_+ - C_-) (\Delta_R \tilde{R}_-) = \Delta_R \tilde{R}_-,
\end{equation}
which implies indeed that $\tilde{R}$ satisfies the jump relations \eqref{jump R1} and \eqref{jump R2} for $R$. Hence we conclude that $R=\tilde R$, and \eqref{Plemelj} is proved.
Since $R$ satisfies \eqref{Plemelj}, we have, taking the limit where $s$ approaches the minus side of $\Sigma_R$,
\begin{equation} \label{eq:functional_equation_of_R_--1}
  R_- - 1 = C_-(\Delta_R R_-) = C_{\Delta_R}(R_- - 1) + C_-(\Delta_R(1)).
\end{equation}
By the invertibility of $(1 - C_{\Delta_R})$, \eqref{eq:functional_equation_of_R_--1} implies
\begin{equation} \label{eq:expression_of_R_-}
  R_- = 1 + (1 - C_{\Delta_R})^{-1} C_-(\Delta_R(1)).
\end{equation}
This further implies that
\begin{equation}\label{L2R}
\lVert R_- - 1 \rVert_{L^2(\Sigma_R)}=\bigO(n^{-1}),\quad \text{as $n\to\infty$}.
\end{equation} Substituting
\eqref{eq:expression_of_R_-} into \eqref{Plemelj}, we
obtain an expression for $R$:
\begin{equation}
  R = 1 + C(\Delta_R(1 + (1 - C_{\Delta_R})^{-1} C_-(\Delta_R(1)))).
\end{equation}
For $s$ at a small distance $\delta>0$ away from the contour $\Sigma_R$, \eqref{Plemelj} reads
\begin{equation}
  R(s)-1 =  \frac{1}{2\pi i}\int_{\Sigma_R} \frac{\Delta_R(R_--1)(\xi)}{\xi
  -s}d\xi+ \frac{1}{2\pi i}\int_{\Sigma_R} \frac{\Delta_R(1)(\xi)}{\xi
  -s}d\xi.
\end{equation}
The second term at the right-hand side of the above equation can be estimated by $\bigO(\delta^{-1} n^{-1})$, using the definition of the operator $\Delta_R$ and asymptotic properties of $J_R$.
Using in addition the Cauchy-Schwarz inequality applied on the first term on the right-hand side of the above equation, by (\ref{L2R}) we obtain
\begin{equation} \label{eq:R_away_from_jump_contour}
  \begin{split}
    \lvert R(s)-1 \rvert \leq {}&   \frac{1}{2\pi }\|\Delta_R(R_--1)\|_{L^2(\Sigma_R)}\cdot \|\frac{1}{\xi -s}\|_{L^2(\Sigma_R)}+\bigO(\delta^{-1} n^{-1}) \\
    \leq {}& \frac{1}{2\pi}\|\Delta_R\|_{L^2(\Sigma_R)}\cdot \|R_--1\|_{L^2(\Sigma_R)}\cdot \|\frac{1}{\xi -s}\|_{L^2(\Sigma_R)}+\bigO(\delta^{-1} n^{-1}) \\
    = {}& \bigO(\delta^{-1} n^{-1}) + \bigO(\delta^{-1} n^{-1}).
  \end{split}
\end{equation}
Although the estimate \eqref{eq:R_away_from_jump_contour} does not work well for $s$ in a $\delta$-neighborhood of $\Sigma_R$, we note that for such $s$, given that $\delta$ is small enough, the jump contour $\Sigma_R$ can always be deformed in such a way that $s$ lies at a distance $\delta$ away from it. After this deformation, the above argument can be applied to obtain the uniform estimate
\begin{equation}\label{asymp R}R(s)-1=\bigO(n^{-1}),\qquad \mbox{ as $n\to\infty$, $s\in\mathbb C\setminus\Sigma_R$}.\end{equation}


\medskip

Through \eqref{def R}, (\ref{def Pinfty}), and (\ref{def P F}), the uniform estimate \eqref{asymp R} yields
\begin{align}
  P_1(z) = {}& (1 + \bigO(n^{-1})) P^{(\infty)}_1(z), \quad \text{as $n \to \infty$, for $z \in \compC \setminus \Sigma_P$,} \label{eq:uniform_conv_P_1} \\
  P_2(z) = {}& (1 + \bigO(n^{-1})) P^{(\infty)}_2(z), \quad \text{as $n \to \infty$, for $z \in \cyld \setminus \Sigma_P$.} \label{eq:uniform_conv_P_2}
\end{align}
The asymptotics for $R$ can be used to obtain asymptotics for the
polynomials $p_n^{(n)}$ by inverting the transformations
\begin{equation}
  Y\mapsto T\mapsto S\mapsto P\mapsto R.
\end{equation}
We will do this in
Section \ref{section: proof results}.

\section{Asymptotic analysis for the type I multiple orthogonal polynomials}\label{section: RH q}

In a similar way as for the type II multiple orthogonal polynomials
$p_j^{(n)}(z)$, in this section we construct a RH problem for
$q_j^{(n)}(e^z)$, and we analyze this RH problem asymptotically when
$j=n+k$. Both the RH problem and the asymptotic analysis show many
similarities with the ones for the type II polynomials, and once
again the use of the transformation $\Jlike$ will turn out to be
crucial.

In this section, we write $q_j^{(n)}(x)$, the monic polynomials that define the multiple orthogonal polynomials of type I and satisfy the orthogonality relations \eqref{orthoI}, as $q_j(x)$ if there is no confusion.
\subsection{RH problem characterizing the polynomials}

Consider the Cauchy transform of
$q_j(e^z)$,
\begin{equation}
  Cq_j(z)  \colonequals  \frac{1}{2\pi i} \int_{\realR} \frac{q_j(e^s)}{s-z} e^{-nV(s)} ds.
\end{equation}
Due to the orthogonality \eqref{orthoI}, as $z \to \infty$,
\begin{equation}
  \begin{split}
    Cq_j(z) = {}&  \frac{-1}{2\pi i z} \int_{\realR} \frac{q_j(e^s)}{1-s/z} e^{-nV(s)} ds \\
    = {}& \frac{-1}{2\pi i z} \int_{\realR} \left( 1 + \frac{s}{z} + \frac{s^2}{z^2} + \cdots \right) q_j(e^s)
    e^{-nV(s)} ds \\
    = {}& \bigO(z^{-j-1}).
  \end{split}
\end{equation}
For $x \in \realR$, Cauchy's theorem implies
\begin{equation} \label{eq:jump+property_of_CQ_j}
  (Cq_j)_+(x) - (Cq_j)_-(x) = \frac{1}{2\pi i} \int_{\realR} \frac{q_j(e^s) e^{-nV(s)}}{s-x_+} ds -
  \frac{1}{2\pi i} \int_{\realR} \frac{q_j(e^s) e^{-nV(s)}}{s-x_-} ds
  = q_j(e^x) e^{-nV(x)}.
\end{equation}
Similar to \eqref{def Y}, let
\begin{equation} \label{def X}
  X(z) = X^{(j,n)}(z)  \colonequals  (q_j(e^z), Cq_j(z)).
\end{equation}
One verifies that $X$ satisfies the following RH problem.

\subsubsection*{RH problem for $X$}
\begin{enumerate}[label=(\alph{*})]
\item $X = (X_1, X_2)$, where $X_2$ is an analytic function defined on $\compC \setminus \realR$ and $X_1$ is an analytic function on $\cyld$,
\item $X$ has continuous boundary values $X_{\pm}$ when approaching the real line from above and below, and we have
  \begin{equation} \label{eq:RHP_of_X}
    X_+(x) = X_-(x)
    \begin{pmatrix}
      1 & e^{-nV(x)} \\
      0 & 1
    \end{pmatrix},\qquad
    \text{for $x \in \realR$,}
  \end{equation}
  \addtocounter{enumi}{1}
  \begin{enumerate}[label=(\alph{enumi}\arabic{*}), leftmargin=0pt, itemindent=0pt]
  \item as $z\to\infty$, $X_2$ behaves as $X_2(z)=\bigO(z^{-j-1})$,
  \item as $e^z\to\infty$ (\ie, $\Re z\to +\infty$), $X_1$ behaves as $X_1(z)=e^{jz}+\bigO(e^{(j-1)z})$; as $e^z\to 0$ (\ie, $\Re z\to -\infty$), $X_1$ remains bounded.
  \end{enumerate}
\end{enumerate}
In an analogous way as for the RH problem for $Y$ in Section \ref{subsec:RH_of_Y}, it can be shown that
$X = X^{(j,n)}$ given by \eqref{def X} is the unique solution to this RH
problem.

We will now perform an asymptotic analysis of the RH problem for
$X=X^{(n+k, n)}$ as $n\to\infty$, with $k$ a constant integer. This method will be to a large extent
analogous to the nonlinear steepest descent method done in the previous
section. Again we will construct a series of transformations of $X$
and end up with a shifted small-norm RH problem. In order to emphasize the
analogies with the previous section, we will use notations $\hat T, \hat S, \hat P, \hat R, \dotsc$  for the counterparts of the
functions $T, S, P, R,\dotsc$ used before.

\subsection{First transformation $X\mapsto \hat T$}

Recall the functions $\gfn(z)$ and $\tilde{\gfn}(z)$ defined in
\eqref{eq:expr_of_gfn_and_tilde_gfn-intro}, and define
\begin{equation} \label{def hatT}
  \hat T(z)  \colonequals  e^{-\frac{n\ell}{2}}X(z)
  \begin{pmatrix}
    e^{-n\tilde\gfn(z)}&0\\
    0&e^{n\gfn(z)}
  \end{pmatrix}
  e^{\frac{n\ell}{2}\sigma_3}.
\end{equation}
Analogously to $T$ in Section \ref{subsec:RH_of_T}, $\T$ satisfies the RH problem

\subsubsection*{RH problem for $\hat T$}
\begin{enumerate}[label=(\alph{*})]
\item $\T = (\T_1, \T_2)$, where $\hat T_2$ is analytic on $\compC \setminus \realR$ and
  $\hat T_1$ is defined and analytic in $\cyld \setminus \realR$,
\item $\T$ satisfies the jump relation
  \begin{equation} \label{eq:RHP_of_T}
    \hat T_+(x) = \hat T_-(x) J_{\T}(x),
    \quad \text{for $x \in \realR$.}
  \end{equation}
with
\begin{equation}
  J_{\T}(x) =
  \begin{pmatrix}
    e^{n(\tilde\gfn_-(x)-\tilde\gfn_+(x))} & e^{n(\tilde\gfn_-(x)+\gfn_+(x)-V(x)-\ell)} \\
    0 & e^{n(\gfn_+(x)-\gfn_-(x))}
  \end{pmatrix},
\end{equation}
  \addtocounter{enumi}{1}
  \begin{enumerate}[label=(\alph{enumi}\arabic{*}), leftmargin=0pt, itemindent=0pt]
  \item as $z\to\infty$, $\hat T_2$ behaves as
    $\hat T_2(z)=\bigO(z^{-(k+1)})$,
  \item as $e^z \to \infty$, $\hat T_1$
    behaves as $\hat T_1(z)= e^{kz}+\bigO(e^{(k-1)z})$, and as $e^z \to 0$, $\T_1$ behaves as $\T_1(z) = \bigO(1)$.
  \end{enumerate}
\end{enumerate}

\subsection{Second transformation $\hat T\mapsto \hat S$}

By \eqref{var eq g}, we have, like \eqref{facto}, the following factorization on $[a,b]$:
\begin{equation}
  J_{\T}(x) =
  \begin{pmatrix}
    1 & 0 \\
    e^{-n\phi_-(x)} & 1
  \end{pmatrix}
  \begin{pmatrix}
    0 & 1 \\
    - 1 & 0
  \end{pmatrix}
  \begin{pmatrix}
    1 & 0 \\
    e^{-n\phi_+(x)} & 1
  \end{pmatrix},
\end{equation}
where $\phi$ is defined in \eqref{def phi},
Recall the lens $\Sigma_S$ defined in Section \ref{subsec:RH_of_S} and shown in Figure \ref{fig:the_lens}. Similarly as in \eqref{eq:def S} for $S$, let us define $\hat S$ by
\begin{equation} \label{def hatS}
  \hat S(z)  \colonequals
  \begin{cases}
    \hat T(z) & \textnormal{outside of the lens,} \\
    \hat T(z)
    \begin{pmatrix}
      1 & 0 \\
      e^{-n \phi(z)} & 1
    \end{pmatrix}
    & \textnormal{in the lower part of the lens,} \\
    \hat T(z)
    \begin{pmatrix}
      1 & 0 \\
      -e^{-n\phi(z)} & 1
    \end{pmatrix}
    & \textnormal{in the upper part of the lens,}
  \end{cases}
\end{equation}
where $\phi(z)$ is defined in \eqref{def phi}. Then similar to the RH conditions satisfied by $S$ in Section \ref{subsec:RH_of_S}, we have the RH problem for $\S$ as follows.

\subsubsection*{RH problem for $\hat S$}
\begin{enumerate}[label=(\alph{*})]
\item $\S = (\S_1, \S_2)$, where $\hat S_2$ is analytic in $\mathbb C\setminus \Sigma_S$, and $\hat S_1$ is analytic in
$\cyld \setminus \Sigma_S$,
\item we have
\begin{equation}\label{RHP_hatS:jump}
  \hat S_+(z) = \hat S_-(z) J_{\S}(z),\qquad\mbox{ for $z\in\Sigma_S$,}
  \end{equation}
  where
  \begin{equation} \label{eq:defn_of_jump_matrix_hatJ_S}
    J_{\S}(z)=
    \begin{cases}
      \begin{pmatrix}
        1 & 0 \\
        e^{-n\phi(z)} & 1
      \end{pmatrix},&\textnormal{for $z\in\Sigma_1\cup\Sigma_2$,}\\
      \begin{pmatrix}
        0 & 1 \\
        -1 & 0
      \end{pmatrix},&\textnormal{for $z\in (a,b)$,}\\
      \begin{pmatrix}
        1 & e^{n\phi(z)} \\
        0 & 1
      \end{pmatrix},&\textnormal{for $z\in \mathbb R\setminus[a,b]$,}
    \end{cases}
  \end{equation}
  \addtocounter{enumi}{1}
  \begin{enumerate}[label=(\alph{enumi}\arabic{*}), leftmargin=0pt, itemindent=0pt]
  \item as $z\to\infty$, $\S_2$ behaves as $\hat S_2(z) = \bigO(z^{-(k+1)})$,
  \item as $e^z \to \infty$, $\S_1$ behaves as $\hat S_1(z) = e^{kz} + \bigO(e^{(k-1)z})$, and as $e^z \to 0$, $\S_1$ behaves as $\S_1(z) = \bigO(1)$.
  \end{enumerate}
\end{enumerate}

\subsection{Construction of local parametrices near $a$ and $b$}

In a similar way as for the construction of $P^{(a)}$ and $P^{(b)}$ in Section \ref{section: local par}, we can construct
local parametrices $\hat P^{(a)}$ and $\hat P^{(b)}$ in sufficiently
small neighborhoods $U_a$ and $U_b$ of the endpoints $a$ and $b$ in
such a way that they satisfy exactly the jump conditions
\begin{align}
\hat P^{(a)}_+(z)= {}& \hat P^{(a)}_-(z) J_{\S}(z),&&\text{$z\in \Sigma_S\cap U_a$}, \label{Pab1} \\
\hat P^{(b)}_+(z)= {}& \hat P^{(b)}_-(z) J_{\S}(z),&&\text{$z\in \Sigma_S\cap U_b$.} \label{Pab2}
\end{align}
Similar to the $P^{(a)}(z)$ and $P^{(b)}(z)$ defined in \eqref{Pa} and \eqref{Pb} respectively, the local parametrices $\hat P^{(a)}(z)$ and $\hat P^{(b)}(z)$ are expressed by
\begin{align}
  \hat P^{(a)}(z)  \colonequals  {}&\sigma_3 A(n^{2/3} f_a(z)) e^{-\frac{n}{2} \phi(z)\sigma_3}\sigma_3,\\
  \hat P^{(b)}(z)  \colonequals  {}& A(n^{2/3} f_b(z)) e^{-\frac{n}{2}
 \phi(z)\sigma_3},
\end{align}
where the functions $f_a$ and $f_b$ are as in \eqref{def fb} and \eqref{def fa}, $A$ is as in \eqref{def A}, and the neighborhoods $U_a$ and $U_b$ as well as the contour $\Sigma_S$ can be taken the same as in \eqref{Pa} and \eqref{Pb}. We omit the details of the verification of \eqref{Pab1} and \eqref{Pab2} here, since almost identical arguments were used in Section \ref{section: local par}.

\subsection{Third transformation $\hat S\mapsto \hat P$}

Define analogously to $P(z)$ in \eqref{def P},
\begin{equation}\label{def hatP}
  \hat P(z) =
  \begin{cases}
    \hat S(z) & \textnormal{for $z \in \compC \setminus (\overline{U_a} \cup \overline{U_b} \cup \Sigma_S)$,} \\
    \hat S(z) \hat P^{(a)}(z)^{-1} \frac{1}{\sqrt{2}} (n^{2/3}f_a(z))^{-\frac{1}{4}\sigma_3}
    \begin{pmatrix}
      1 & 1 \\
      -1 & 1
    \end{pmatrix}
    e^{-\frac{\pi i}{4}\sigma_3} & \textnormal{for $z \in U_a \setminus \Sigma_S$,} \\
    \hat S(z) \hat P^{(b)}(z)^{-1} \frac{1}{\sqrt{2}} (n^{2/3} f_b(z))^{-\frac{1}{4}\sigma_3}
    \begin{pmatrix}
      1 & 1 \\
      -1 & 1
    \end{pmatrix}
    e^{-\frac{\pi i}{4}\sigma_3} & \textnormal{for $z \in U_b \setminus \Sigma_S$.}
  \end{cases}
\end{equation}
Then like the RH conditions satisfied by $P$, $\P$ satisfies the following RH conditions.

\subsubsection*{RH problem for $\hat P$}

\begin{enumerate}[label=(\alph{*})]
\item $\P = (\P_1, \P_2)$, where $\hat P_2$ is analytic in $\mathbb C\setminus \Sigma_P$, and
$\hat P_1$ is analytic in $\cyld \setminus \Sigma_P$,
\item we have
  \begin{equation}
    \hat P_+(z) = \hat P_-(z) J_{\P}(z),\quad \textnormal{for $z \in \Sigma_P$,}
  \end{equation}
  where $\Sigma_P$ is the same as in \eqref{SigmaP}, and
  \begin{equation} \label{eq:last_RHP_for_P}
    J_{\P}(z)=
    \begin{cases}
      J_{\S}(z) & \textnormal{for $z \in \Sigma_S \setminus (\overline{U_a} \cup \overline{U_b})$,} \\
      \frac{1}{\sqrt{2}} e^{\frac{\pi i}{4} \sigma_3}
      \begin{pmatrix}
        1 & -1 \\
        1 & 1
      \end{pmatrix}
      (n^{2/3} f_a(z))^{\frac{1}{4} \sigma_3} P^{(a)}(z) &\textnormal{for $z \in \partial U_a$,} \\
      \frac{1}{\sqrt{2}} e^{\frac{\pi i}{4}  \sigma_3}
      \begin{pmatrix}
        1 & -1 \\
        1 & 1
      \end{pmatrix}
      (n^{2/3} f_b(z))^{\frac{1}{4} \sigma_3} P^{(b)}(z) &\textnormal{for $z \in \partial U_b$,} \\
      \begin{pmatrix}
        0 & 1 \\
        -1 & 0
      \end{pmatrix}
      & \textnormal{for $z \in (a,b)$,}
    \end{cases}
  \end{equation}
  \addtocounter{enumi}{1}
  \begin{enumerate}[label=(\alph{enumi}\arabic{*}), leftmargin=0pt, itemindent=0pt]
  \item as $z\to\infty$, $\hat P_2(z) = \bigO(z^{-(k+1)})$,
  \item as $e^z \to \infty$, $\P_1$ behaves as $\hat P_1(z) = e^{kz} + \bigO(e^{(k-1)z})$, and as $e^z \to 0$, $\P_1$ behaves as $\P_1(z) = \bigO(1)$.
  \item
 \begin{align}
    \hat{P}(z) = {}& (\bigO(|z-a|^{-1/4}), \bigO(|z-a|^{-1/4})) & & \text{ as $z\to a$,} \\ 
    \hat{P}(z) = {}& (\bigO(|z-b|^{-1/4}), \bigO(|z-b|^{-1/4})) & & \text{ as $z\to b$.} 
  \end{align}

  \end{enumerate}
\end{enumerate}

\subsection{Construction of the outer parametrix}

The RH problem for $\P$ has, as $P$, the property that its jump matrix tends to the
identity matrix uniformly as $n\to\infty$, except on $[a,b]$. We will first construct a
solution to the following RH problem for $\P^{(\infty)}$, which is the
limiting RH problem (formally, ignoring small neighborhoods of $a$ and $b$) for $\hat P$ as $n\to\infty$.

\subsubsection*{RH problem for $\hat P^{(\infty)}$}

\begin{enumerate}[label=(\alph{*})]
\item $\hat P^{(\infty)} = (\hat P^{(\infty)}_1, \hat P^{(\infty)}_2)$, where $\hat P^{(\infty)}_2$
  is an analytic function in $\compC \setminus [a,b]$, and $\hat
P^{(\infty)}_1$ is an analytic function in $\cyld \setminus [a,b]$,
\item $\hat P^{(\infty)}$ satisfies the jump relation
  \begin{equation}
    \label{jump hatPinfty} \hat P_+^{(\infty)}(x)=\hat
    P_-^{(\infty)}(x)
    \begin{pmatrix}
      0&1\\
      -1&0
    \end{pmatrix},
    \quad \mbox{ for
      $x\in(a,b)$,}
  \end{equation}
  \addtocounter{enumi}{1}
  \begin{enumerate}[label=(\alph{enumi}\arabic{*}), leftmargin=0pt, itemindent=0pt]
  \item \label{enu:RHP_of_hatP^infty:c1}
    as $z \to \infty$, $\P^{(\infty)}_2(z) = \bigO(z^{-(k+1)})$,
  \item \label{enu:RHP_of_hatP^infty:c2}
    as $e^z \to +\infty$, $\P^{(\infty)}_1$ behaves as $\P^{(\infty)}_1(z) = e^{kz} + \bigO(e^{(k-1)z})$, and as $e^z \to 0$, $\P^{(\infty)}_1$ behaves as $\P^{(\infty)}_1(z) = \bigO(1)$.
  \item
    \begin{align}
      \hat{P}^{(\infty)}(z) = {}& (\bigO(|z-a|^{-1/4}), \bigO(|z-a|^{-1/4})) & & \text{ as $z\to a$,} \\ 
      \hat{P}^{(\infty)}(z) = {}& (\bigO(|z-b|^{-1/4}), \bigO(|z-b|^{-1/4})) & & \text{ as $z\to b$.} 
    \end{align}
  \end{enumerate}
\end{enumerate}

Inspired by the construction of $P^{(\infty)}$ in Section \ref{subsec:RH_of_P^infty_and_F}, we search for $\hat P^{(\infty)}$ in the form  $\hat P^{(\infty)}(z) =  (\hat F(\Jinv_2(z)), \hat F(\Jinv_1(z)))$,
where $\Jinv_1$ and $\Jinv_2$ are, as before, the two inverses of
the map $\Jlike$ defined in \eqref{eq:defn_of_I_1} and \eqref{eq:defn_of_I_2}. Hence
\begin{equation} \label{def hatPinfty}
  \hat F(s)  \colonequals
  \begin{cases}
    \hat P^{(\infty)}_2(\Jlike(s)) & \textnormal{for $s \in \compC \setminus \bar{D}$,} \\
    \hat P^{(\infty)}_1(\Jlike(s)) & \textnormal{for $s \in D \setminus [-\frac{1}{2},\frac{1}{2}]$,}
  \end{cases}
\end{equation}
and like $F(s)$ in Section \ref{subsec:RH_of_P^infty_and_F}, $\hat{F}$ can be analytically continued to $[-\frac{1}{2}, \frac{1}{2})$. At $\frac{1}{2}$, $\hat F$ has a pole of order $k$ if $k>0$, a removable singularity if $k=0$ and a zero of multiplicity $-k$ if $k<0$. From the RH conditions for $\hat P^{(\infty)}$, we deduce the
following RH problem for $\hat F$.

\subsubsection*{RH problem for $\hat F$}

\begin{enumerate}[label=(\alph{*})]
\item $\hat F$ is analytic in
$\compC\setminus(\gamma_1\cup\gamma_2)$ if $k \leq 0$, and analytic in $\compC \setminus (\gamma_1 \cup \gamma_2 \cup \{ \frac{1}{2} \})$ if $k > 0$,
\item for $s\in\gamma_1\cup\gamma_2$, we have
\begin{align}
  \hat F_+(s) = {}& \hat F_-(s), & & \textnormal{for $s \in \gamma_1$,}
  \label{eq:1st_jump_condition_of_RHP_for_F(s)} \\
  \hat F_+(s) = {}& -\hat F_-(s), & & \textnormal{for $s \in \gamma_2$,}
  \label{eq:2nd_jump_condition_of_RHP_for_F(s)}
\end{align}
\item we have the asymptotic conditions
  \begin{align}
    \hat F(s) = {}& \bigO(s^{-(k+1)}), & &\textnormal{as $s \to \infty$,} \\
    \hat F(s) = {}& e^{k(\frac{1}{2}c_1 + c_0)} (s-\frac{1}{2})^{-k} + \bigO((s-\frac{1}{2})^{-k+1}), & & \textnormal{as $s \to \frac{1}{2}$}, \\
    \hat{F}(s) = {}& \bigO(\lvert s - s_a \rvert^{-\frac{1}{2}}), & & \text{as $s \to s_a$,} \\
    \hat{F}(s) = {}& \bigO(\lvert s - s_b \rvert^{-\frac{1}{2}}), & & \text{as $s \to s_b$.}
  \end{align}
\end{enumerate}
It is verified directly that
\begin{equation}\label{def hatF1}
  \hat F(s) =\frac{\sqrt{(\frac{1}{2}-s_a)(\frac{1}{2}-s_b)}}{\sqrt{(s-s_a)(s-s_b)}} e^{k(\frac{1}{2}c_1 + c_0)} (s-\frac{1}{2})^{-k},
  \quad \textnormal{for $s \in \compC
    \setminus \gamma_2$,}
\end{equation}
solves the above RH problem. Here we choose the branch of the square root $\sqrt{(z - s_a)(z - s_b)}$ that is analytic except on $\gamma_2$ and close to $z$ as $z\to\infty$,

\subsection{The convergence of $\hat P \to \hat P^{(\infty)}$} \label{sec:convergence-hatp-to}

Define analogous to $\F(s)$ in \eqref{def F2}
\begin{equation}\label{def hatF2}
  \hat\F(s)  \colonequals
  \begin{cases}
    \hat P_2(\Jlike(s)) & \textnormal{if $s \in \compC \setminus \bar{D}$ and $\Jlike(s) \notin \Sigma_P$,} \\
    \hat P_1(\Jlike(s)) & \textnormal{if $s \in D \setminus [-\frac{1}{2}, \frac{1}{2}]$ and $\Jlike(s) \notin \Sigma_P$.}
  \end{cases}
\end{equation}
We have the scalar jump conditions
\begin{align}
  \hat\F_+(s) = {}&  \hat\F_-(s),&& \text{for $s\in\gamma_1$}, \label{jump hatF1} \\
  \hat\F_+(s) = {}& -\hat\F_-(s),&&\text{for $s\in\gamma_2$}, \label{jump hatF2}
\end{align}
and the shifted jump conditions
\begin{align}
 \hat\F_+(s) = {}& J_{\P, 11}(\Jlike(s))
\hat\F_-(s) + J_{\P,21}(\Jlike(s)) \hat\F_-(\Jinv_1(\Jlike(s))),
&& \text{for $s\in\Sigma''$}, \label{jump hatF3} \\
  \hat\F_+(s) = {}& J_{\P, 12}(\Jlike(s)) \hat\F_-((\Jinv_2(\Jlike(s)))) +
  J_{\P,22}(\Jlike(s)) \hat\F_-(s),
  &&\text{for $s\in\Sigma'$}. \label{jump hatF4}
\end{align}
The asymptotic conditions are the same as those for $\hat{F}(s)$
\begin{align}
  \hat \F(s) = {}& \bigO(s^{-(k+1)}), & &\textnormal{as $s \to \infty$,} \\
  \hat \F(s) = {}& e^{k(\frac{1}{2}c_1 + c_0)} (s-\frac{1}{2})^{-k} + \bigO((s-\frac{1}{2})^{-k+1}), & & \textnormal{as $s \to \frac{1}{2}$}, \\
  \hat{\F}(s) = {}& \bigO(\lvert s-s_a \rvert^{-1/2}), & & \text{as $s\to s_a$,} \\
  \hat{\F}(s) = {}& \bigO(\lvert s-s_b \rvert^{-1/2}), & & \text{as $s\to s_b$.}
\end{align}
Next we define, analogous to $R(s)$ in \eqref{def R},
\begin{equation}\label{def hatR}
  \hat R(s)  \colonequals  \frac{\hat\F(s)}{\hat F(s)} , \quad \textnormal{for $s \in \compC \setminus \Jlike^{-1}(\Sigma_P)$.}
\end{equation}
Then like $R$, $\hat R$ is analytic at $\frac{1}{2}$ and across $(\gamma_1\cup\gamma_2)$, and satisfies the following shifted RH problem.

\subsubsection*{Shifted RH problem for $\hat R$}

\begin{enumerate}[label=(\alph{*})]
\item $\hat R$ is analytic in $\compC\setminus (\Sigma'\cup\Sigma'')$, where $\Sigma'$ and $\Sigma''$ are defined in \eqref{eq:defn_of_Sigma'},
\item $\hat R$ has the jump conditions
\begin{align}
 \hat R_+(s) = {}&  J_{\hat{R}, 11}(s) \hat R_-(s) +
 J_{\hat{R},21}(s) \hat R_-(\Jinv_1(\Jlike(s))),
&& \text{for $s\in\Sigma''$}, \label{jump  hatR1} \\
  \hat R_+(s) = {}& J_{\hat{R}, 12}(s) \hat R_-(\Jinv_2(\Jlike(s))) + J_{\hat{R},22}(s) \hat R_-(s),
  &&\text{for $s\in\Sigma'$}, \label{jump hatR2}
\end{align}
where
\begin{align}
  J_{\hat{R}, 11}(s) = {}& J_{\P, 11}(\Jlike(s)), & J_{\hat{R}, 21}(s) = {}& J_{\P,21}(\Jlike(s)) \frac{\hat F(\Jinv_1(\Jlike(s)))}{\hat F(s)}, \\
  J_{\hat{R}, 12}(s) = {}&  J_{\P,12}(\Jlike(s)) \frac{\hat F(\Jinv_2(\Jlike(s)))}{\hat F(s)}, & J_{\hat{R}, 22}(s) = {}& J_{\P, 22}(\Jlike(s)).
\end{align}
\item $\hat{R}$ is bounded, and $\hat R(s)= 1+ \bigO(s^{-1})$ as $s\to\infty$.
\end{enumerate}
As $n\to\infty$, we have the uniform asymptotic estimates analogous to \eqref{eq:J_R:1} and \eqref{eq:J_R:2}
\begin{align}
  J_{\hat{R}, 11}(s)= {}& 1+\bigO(n^{-1}), &  J_{\hat{R}, 21}(s)= {}& \bigO(n^{-1}), &&\text{for $s\in \Sigma''$}, \\
  J_{\hat{R}, 12}(s)= {}& \bigO(n^{-1}), &  J_{\hat{R}, 22}(s)= {}& 1+\bigO(n^{-1}), &&\text{for $s\in \Sigma'$},
\end{align}
These estimates imply, in a similar way as \eqref{eq:J_R:1} and \eqref{eq:J_R:2} do in Section
\ref{sec:convergence-p-to}, the uniform convergence of
$\hat R$ to $1$:
\begin{equation}\label{asymp hatR}
  \hat R(s)=1+\bigO(n^{-1}),\quad \text{as $n\to\infty$, for
    $s\in\mathbb C\setminus\left(\Sigma'\cup\Sigma''\right)$.}
\end{equation}
Hence, by \eqref{def hatR}, \eqref{def hatPinfty}, and \eqref{def hatF2}, we have, like \eqref{eq:uniform_conv_P_1} and \eqref{eq:uniform_conv_P_2},
\begin{align}
  \hat P_1(z) = {}& (1 + \bigO(n^{-1})) \hat{P}^{(\infty)}_1(z), \quad \text{as $n \to \infty$, for $z \in \cyld \setminus \Sigma_P$,} \label{eq:uniform_conv_hatP_1} \\
  \hat P_2(z) = {}& (1 + \bigO(n^{-1})) \hat{P}^{(\infty)}_2(z), \quad \text{as $n \to \infty$, for $z \in \compC \setminus \Sigma_P$.} \label{eq:uniform_conv_hatP_2}
\end{align}

\section{Proof of main results}\label{section: proof results}

In this section we collect the asymptotics of $p^{(n)}_{n+k}(z)$ and $q^{(n)}_{n+k}(e^z)$, from the analysis in Sections \ref{section: RH p} and \ref{section: RH q}. The goal is to prove Theorem \ref{theorem: asympt2}.

 \subsection{The asymptotics of $p^{(n)}_{n+k}(z)$}

The main task in the computation of the asymptotics for $p^{(n)}_{n+k}$ consists of the inversion of the transformations  $Y\mapsto T\mapsto S\mapsto P$. By \eqref{def Y}, \eqref{def T}, \eqref{eq:def S}, \eqref{def P} and the asymptotics obtained in Section \ref{subsec:RH_of_P^infty_and_F}, we will find the asymptotics of $p^{(n)}_{n+k}$. In Figure \ref{fig:Sigma_P} it is shown that the complex plane is divided into the outside region, upper and lower bulk regions and two edge regions by $\Sigma_P$. We restrict ourselves to the upper half plane because of symmetry, and do the computation in each of the four regions.

\paragraph{Outside region}

For $z$ in the outside region,
we have
\begin{equation} \label{eq:inverse_transformation_of_p_outside}
  p^{(n)}_{n+k}(z) = Y_{1}^{(n+k,n)}(z)=T_{1}(z)e^{n\gfn(z)} = S_{1}(z)e^{n\gfn(z)}=P_{1}(z)e^{n\gfn(z)}.
\end{equation}
By (\ref{eq:uniform_conv_P_1}) and (\ref{def Pinfty}),
\begin{equation} \label{eq:p^n_n+k_outside}
  p^{(n)}_{n+k}(z)= (1 + \bigO(n^{-1}))F(\Jinv_1(z))e^{n\gfn(z)},\quad \text{as $n\to\infty$},
\end{equation}
where $F$ is defined in \eqref{eq:defn_of_F(s)_by_f_and_g0}. Substituting the identity (\ref{eq:relation_between_G_and_F}) for $F$ into \eqref{eq:p^n_n+k_outside}, we have
\begin{equation}\label{asymp pn out}
  p^{(n)}_{n+k}(z)=(1+\bigO(n^{-1}))G_k(\Jinv_1(z))e^{n\gfn(z)},\quad
\text{as $n\to\infty$.}
\end{equation}
This proves \eqref{eq:asy_of_p_outside} for $z$ in the outside region.

\paragraph{Bulk region}

Similar to \eqref{eq:inverse_transformation_of_p_outside}--\eqref{asymp pn out}, for $z$ in the upper part of the lens but not in $\overline{U_a}$ and  $\overline{U_b}$, we obtain
\begin{equation} \label{eq:p_in_upper_lens}
  \begin{split}
    p^{(n)}_{n+k}(z) = {}& Y_{1}^{(n+k,n)}(z)=T_{1}(z)e^{n\gfn(z)} \\
    = {}& (S_1(z) + S_2(z) e^{-n\phi(z) + z}) e^{n\gfn(z)} \\
    = {}& P_1(z) e^{n\gfn(z)} + P_2(z) e^z e^{n(V(z) - \tilde{\gfn}(z) + \ell)} \\
    = {}& (1 + \bigO(n^{-1})) F(\Jinv_1(z))e^{n\gfn(z)} \\
    & + (1 + \bigO(n^{-1})) F(\Jinv_2(z))  e^z e^{n(V(z) - \tilde{\gfn}(z) + \ell)} \\
    = {}& (1 + \bigO(n^{-1})) G_k(\Jinv_1(z))e^{n\gfn(z)} \\
    & + (1 + \bigO(n^{-1})) G_k(\Jinv_2(z)) e^{n(V(z) - \tilde{\gfn}(z) + \ell)},
  \end{split}
\end{equation}
as $n\to\infty$.
In the last identity of \eqref{eq:p_in_upper_lens} we use (\ref{eq:relation_between_G_and_F}) and the identity $z =\Jlike(\Jinv_2(z))$. We obtain the formula \eqref{eq:asy_of_p^n_n+k_bulk_region} for $z$ in the upper bulk region.

In particular, if $x \in (a, b)$ and $z \to x$ from above, we have by \eqref{var eq g} that $V(x) - \tilde{\gfn}_+(x) + \ell = \gfn_-(x)$, and further from the definition \eqref{eq:expr_of_gfn_and_tilde_gfn-intro} of $\gfn(z)$, we have $\gfn_{\pm}(x) = \int \log \lvert x-y \rvert d\mu_V(y) \pm \pi i \int^b_x d\mu_V$. On the other hand, as $z \to x$ from above, by \eqref{eq:defn_of_I_+} and \eqref{eq:defn_of_I_-}, $\Jinv_1(z)$ and $\Jinv_2(z)$ converge to $\Jinv_+(x)$ and $\Jinv_-(x)$ respectively. Noting that $\Jinv_-(x) = \bar{\Jinv}_+(x)$, we have from \eqref{eq:p_in_upper_lens} and \eqref{eq:asy_of_p^n_n+k_bulk_region}
\begin{equation}\label{asymp pn in}
  p^{(n)}_{n+k}(x)=r_k(x)e^{n\int\log|x-y|d\mu_V(y)}\left[\cos \left( n\pi
    \int_{x}^bd\mu_V(t) +\theta_k(x)\right) + \bigO(n^{-1}) \right],
\end{equation}
where $r_k(x)$ and $\theta_k(x)$, as defined in \eqref{eq:defn_r_theta}, are the modulus and argument of $2G_k(\Jinv_+(x)) = 2c^k_1 \frac{(\Jinv_+(x) + \frac{1}{2}) (\Jinv_+(x) - \frac{1}{2})^k}{\sqrt{(\Jinv_+(x)-s_a)(\Jinv_+(x)-s_b)}}$.

\paragraph{Edge region}

For brevity we only consider the case $z \in U_b$, the case $z \in U_a$ can be treated similarly. As shown in Figure \ref{fig:Sigma_P}, the part of $U_b$ in the upper half plane is divided by the lens $\Sigma_S$ into two parts, one in the lens and one out of the lens. If $z \in U_b \cap \compC^+$ is outside the lens, we obtain
\begin{equation} \label{eq:in_U_b_out_lens}
  p^{(n)}_{n+k}(z) = Y_{1}^{(n+k,n)}(z)=T_{1}(z)e^{n\gfn(z)} = S_{1}(z) e^{n\gfn(z)},
\end{equation}
and by (\ref{def P}),
\begin{equation} \label{eq:S_1_and_S_2_in_P}
  \begin{split}
   (S_1, S_2) = {}& \sqrt{2} (P_1, P_2) e^{(\frac{\pi i}{4} - \frac{z}{2}) \sigma_3}
    \begin{pmatrix}
      1 & 1 \\
      -1 & 1
    \end{pmatrix}^{-1}
    (n^{2/3} f_b(z))^{\frac{1}{4} \sigma_3} P^{(b)}(z) \\
    = {}& \frac{1}{\sqrt{2}} (P_1, P_2)
    \begin{pmatrix}
      e^{\frac{\pi i}{4} - \frac{z}{2}} n^{\frac{1}{6}} f_b(z)^{\frac{1}{4}} & -e^{\frac{\pi i}{4} - \frac{z}{2}} n^{-\frac{1}{6}} f_b(z)^{-\frac{1}{4}} \\
      e^{-\frac{\pi i}{4} + \frac{z}{2}} n^{\frac{1}{6}} f_b(z)^{\frac{1}{4}} & e^{-\frac{\pi i}{4} + \frac{z}{2}} n^{-\frac{1}{6}} f_b(z)^{-\frac{1}{4}} \\
    \end{pmatrix}
    P^{(b)}(z).
  \end{split}
\end{equation}
By \eqref{Pb}, \eqref{def Pinfty} and \eqref{eq:uniform_conv_P_1}--\eqref{eq:uniform_conv_P_2}, we further obtain
\begin{equation} \label{eq:final_result_in_U_b_out_lens}
  \begin{split}
    p^{(n)}_{n+k}(z) = {}& \sqrt{\pi} \left[ \left(P_1(z) - iP_2(z) e^z \right) n^{\frac{1}{6}} f_b(z)^{\frac{1}{4}} \Ai(n^{\frac{2}{3}}f_b(z)) \right. \\
    &- \left. \left( P_1(z) + iP_2(z) e^z \right) n^{-\frac{1}{6}} f_b(z)^{-\frac{1}{4}} \Ai'(n^{\frac{2}{3}}f_b(z)) \right] e^{\frac{n}{2} (\gfn(z) - \tilde{\gfn}(z) + V(z) + \ell)} \\
 = {}& \sqrt{\pi} \left[ \left( \F(\Jinv_1(z)) - i\F(\Jinv_2(z)) e^{\Jlike(\Jinv_2(z))} \right)n^{\frac{1}{6}} f_b(z)^{\frac{1}{4}} \Ai(n^{\frac{2}{3}}f_b(z)) \right. \\
    & - \left. \left( \F(\Jinv_1(z)) + i\F(\Jinv_2(z)) e^{\Jlike(\Jinv_2(z))} \right) n^{-\frac{1}{6}} f_b(z)^{-\frac{1}{4}} \Ai'(n^{\frac{2}{3}}f_b(z)) \right] e^{\frac{n}{2} (\gfn(z) - \tilde{\gfn}(z) + V(z) + \ell)} \\
    = {}& \sqrt{\pi} \left[ \left( \G_k(\Jinv_1(z)) - i\G_k(\Jinv_2(z)) \right) n^{\frac{1}{6}} f_b(z)^{\frac{1}{4}} \Ai(n^{\frac{2}{3}}f_b(z)) \right. \\
    & - \left. \left( \G_k(\Jinv_1(z)) + i\G_k(\Jinv_2(z)) \right) n^{-\frac{1}{6}} f_b(z)^{-\frac{1}{4}} \Ai'(n^{\frac{2}{3}}f_b(z)) \right] e^{\frac{n}{2} (\gfn(z) - \tilde{\gfn}(z) + V(z) + \ell)},
  \end{split}
\end{equation}
where $\G_k$ is defined, analogous to the formula \eqref{eq:relation_between_G_and_F} for $G_k$, as
\begin{equation} \label{eq:defn_of_cal_G}
  \G_k(s)  \colonequals
  \begin{cases}
    \F(s) & \text{if $s \in \compC \setminus \bar{D}$ and $\Jlike(s) \notin \Sigma_P$,} \\
    e^{\Jlike(s)} \F(s) & \textnormal{if $s \in D \setminus [-\frac{1}{2}, \frac{1}{2}]$ and $\Jlike(s) \notin \Sigma_P$.}
  \end{cases}
\end{equation}
From \eqref{def R} and \eqref{asymp R}, we have that
\begin{equation} \label{eq:relation_between_G_and_G}
  \G_k(s) = G_k(s) (1 + \bigO(n^{-1})),\qquad\mbox{ as $n\to\infty$}.
\end{equation}
Hence we obtain \eqref{eq:asy_of_p^n_n+k_right_edge_region} for $z$ in the edge region $U_b$, upper half plane, and outside of the lens.

Let us now focus on the asymptotics of $p^{(n)}_{n+k}(z)$ for $z = b + f'_b(b)^{-1} n^{-2/3} t$ which is in the upper half plane and outside of the lens, where $t$ is bounded. Then
\begin{equation}
  \Ai(n^{2/3}f_b(z)) = \Ai(t) + \bigO(n^{-2/3}),\quad \text{as $n\to\infty$,}
\end{equation}
Direct computation yields, as $n\to\infty$, by \eqref{eq:Joukowsky_like_transform}--\eqref{eq:defn_of_s_a_and_s_b} and \eqref{eq:defn_of_I_1}--\eqref{eq:defn_of_I_2},
\begin{align}
  \Jinv_1(z) = {}& s_b + \frac{(s_b + \frac{1}{2})(s_b - \frac{1}{2})}{\sqrt{s_b}} f'_b(b)^{-1/2} n^{-1/3} \sqrt{t} (1 + \bigO(n^{-2/3})t), \label{Jinv_1_at_b} \\
  \Jinv_2(z) = {}& s_b - \frac{(s_b + \frac{1}{2})(s_b - \frac{1}{2})}{\sqrt{s_b}} f'_b(b)^{-1/2} n^{-1/3} \sqrt{t} (1 + \bigO(n^{-2/3})t), \label{Jinv_2_at_b}
\end{align}
and that as $s \to s_b$, by \eqref{eq:relation_between_G_and_G} and \eqref{eq:defn_of_G_and_G_hat},
\begin{equation}
  \G_k(s) = (1 + \bigO(n^{-1})) \left( 2^{-\frac{1}{2}} (\frac{1}{4} + \frac{1}{c_1})^{-\frac{1}{4}} c^{k - 1}_1 (\sqrt{\frac{1}{4} + \frac{1}{c_1}} - \frac{1}{2})^{k - 1} + \bigO(s - s_b) \right) \frac{1}{\sqrt{s - s_b}},
\end{equation}
where all square roots take the principal value. Hence when $t$ is bounded
\begin{align}
  \G_k(\Jinv_1(z)) - i\G_k(\Jinv_2(z)) = {}& \left( \frac{1}{2} \left( \frac{1}{4} + \frac{1}{c_1} \right)^{-\frac{1}{8}} \left( \sqrt{\frac{1}{4} + \frac{1}{c_1}} - \frac{1}{2} \right)^{k-1} c^{k-\frac{1}{2}}_1 + \bigO(n^{-\frac{1}{3}}) \right) n^{\frac{1}{6}} t^{-\frac{1}{4}}, \label{eq:G-iG} \\
  \G_k(\Jinv_1(z)) + i\G_k(\Jinv_2(z)) = {}& \bigO(n^{-\frac{1}{6}}) t^{\frac{1}{4}}. \label{eq:G+iG}
\end{align}
Substituting \eqref{eq:G-iG} and \eqref{eq:G+iG} into \eqref{eq:final_result_in_U_b_out_lens} and noting that $f_b(b) = 0$ and $f'_b(b) > 0$, we obtain \eqref{eq:asy_of_p^n_n+k_very_near_b} for $z$ outside of the lens.

If $z \in U_b \cap \compC^+$ and inside the lens, then like \eqref{eq:p_in_upper_lens},
\begin{equation}
  p^{(n)}_{n+k}(z) = (S_1(z) + S_2(z) e^{-n\phi(z) + z}) e^{n\gfn(z)},
\end{equation}
and like \eqref{eq:S_1_and_S_2_in_P},
\begin{multline}
  (S_1 + S_2 e^{-n\phi(z) + z}, S_2) = \\
  \sqrt{2} (P_1, P_2) e^{(\frac{\pi i}{4} - \frac{z}{2}) \sigma_3}
  \begin{pmatrix}
    1 & 1 \\
    -1 & 1
  \end{pmatrix}^{-1}
  (n^{2/3} f_b(z))^{\frac{1}{4} \sigma_3} P^{(b)}(z)
  \begin{pmatrix}
    1 & 0 \\
    e^{-n\phi(z) + z} & 1
  \end{pmatrix}.
\end{multline}
Hence by \eqref{Pb}, \eqref{def Pinfty} and \eqref{eq:uniform_conv_P_1}--\eqref{eq:uniform_conv_P_2}, and using the identity $\Ai(x) + \omega\Ai(\omega x) + \omega^2\Ai(\omega^2 x) = 0$, we find that the result in \eqref{eq:final_result_in_U_b_out_lens} still holds, and so do the subsequent asymptotic formulas \eqref{eq:defn_of_cal_G}--\eqref{eq:G+iG}. Thus we can still prove \eqref{eq:asy_of_p^n_n+k_right_edge_region} and \eqref{eq:asy_of_p^n_n+k_very_near_b}.

 \subsection{The asymptotics of $q^{(n)}_{n+k}(e^z)$}

The derivation of the asymptotics for $q^{(n)}_{n+k}(e^z)$ is similar, and we
need to invert the transformations $X\mapsto \hat T\mapsto \hat
S\mapsto \hat P$ using (\ref{def X}), (\ref{def hatT}), (\ref{def
hatS}), and (\ref{def hatP}). For brevity, we only consider the outside region and the bulk region.

\paragraph{Outside region}

If $z$ is in the upper half plane and not in the lens or $U_a$, $U_b$, we have
\begin{equation}
  q^{(n)}_{n+k}(e^z) = X_{1}^{(n+k,n)}(z) = \hat T_{1}(z)e^{n\tilde\gfn(z)} = \hat S_{1}(z)e^{n\tilde\gfn(z)}=\hat P_{1}(z)e^{n\tilde\gfn(z)}.
\end{equation}
By (\ref{def hatF2}) and (\ref{def hatR}), we find similar to \eqref{eq:inverse_transformation_of_p_outside}
\begin{equation}
  q^{(n)}_{n+k}(e^z)=\hat\F(\Jinv_2(z))e^{n\tilde\gfn(z)}=\hat R(\Jinv_2(z))\hat
F(\Jinv_2(z))e^{n\tilde\gfn(z)}.
\end{equation}
By the formula (\ref{def hatF1}) for $\hat{F}$ and the asymptotic formula \eqref{asymp hatR} for $\hat{R}$,
this yields
\begin{equation}\label{asymp qn out}
  q^{(n)}_{n+k}(e^z) = (1+\bigO(n^{-1})) \frac{\sqrt{(\frac{1}{2}-s_a)(\frac{1}{2} - s_b)}}
  {\sqrt{(\Jinv_2(z)-s_a)(\Jinv_2(z)-s_b)}} e^{k(\frac{1}{2}c_1 + c_0)} (\Jinv_2(z) - \frac{1}{2})^{-k} e^{n\tilde\gfn(z)},\quad\text{
    as $n\to\infty$.}
\end{equation}
In \eqref{asymp qn out} and later in \eqref{eq:bulk_asy_of_q}, $\sqrt{(z - s_a)(z - s_b)}$ is chosen to be close to $z$ as $z \to \infty$ and has branch cut along $\gamma_2$. Substituting $s_a$ and $s_b$ by \eqref{eq:defn_of_s_a_and_s_b}, we prove \eqref{eq:asy_of_q_outside} for $z$ in the outside region.

\paragraph{Bulk region}

Similar to \eqref{eq:p_in_upper_lens},
\begin{equation}
  q^{(n)}_{n+k}(e^z) = \hat T_{1}(z)e^{n\tilde\gfn(z)} = \hat S_{1}(z)e^{n\tilde\gfn(z)} + \hat{S}_2(z)e^{n(V(z) -\gfn(z) + \ell)} = \hat P_{1}(z)e^{n\tilde\gfn(z)} + \hat{P}_2(z)e^{n(V(z) -\gfn(z) + \ell)}.
\end{equation}
By \eqref{def hatF2}, \eqref{def hatR}, \eqref{def hatF1} and \eqref{asymp hatR}, we find that as $n \to  \infty$,
\begin{equation} \label{eq:bulk_asy_of_q}
  \begin{split}
  q^{(n)}_{n+k}(e^z) ={}& \hat\F(\Jinv_2(z))e^{n\tilde\gfn(z)} + \hat\F(\Jinv_1(z))e^{n(V(z) -\gfn(z) + \ell)} \\
  ={}& \hat R(\Jinv_2(z))\hat
  F(\Jinv_2(z))e^{n\tilde\gfn(z)} + \hat R(\Jinv_1(z)) \hat F(\Jinv_1(z))e^{n(V(z) -\gfn(z) + \ell)} \\
  ={}& (1+\bigO(n^{-1})) \frac{\sqrt{(\frac{1}{2}-s_a)(\frac{1}{2}-s_b)}}
    {\sqrt{(\Jinv_2(z)-s_a)(\Jinv_2(z)-s_b)}} e^{k(\frac{1}{2}c_1 + c_0)} (\Jinv_2(z) - \frac{1}{2})^{-k} e^{n\tilde\gfn(z)} \\
  & + (1+\bigO(n^{-1})) \frac{\sqrt{(\frac{1}{2}-s_a)(\frac{1}{2}-s_b)}}
    {\sqrt{(\Jinv_1(z)-s_a)(\Jinv_1(z)-s_b)}} e^{k(\frac{1}{2}c_1 + c_0)} (\Jinv_1(z) - \frac{1}{2})^{-k} e^{n(V(z) -\gfn(z) + \ell)}.
  \end{split}
\end{equation} Substituting $s_a$ and $s_b$ by \eqref{eq:defn_of_s_a_and_s_b}, we prove \eqref{eq:asy_of_q^n_n+k_bulk_region} for $z$ in the upper bulk region.

As $z \to x \in \realR$ from above, noting that $V(x) - {\gfn}_-(x) + \ell = \tilde\gfn_+(x)$ by \eqref{var eq g}, $\Jinv_2(z) \to \Jinv_-(x)$, and $\Jinv_1(z) \to \Jinv_+(x)$, and using the identities $\Jinv_-(x) = \bar{\Jinv}_+(x)$ and $\tilde\gfn_{\pm}(x) = \int \log \lvert e^x-e^y \rvert d\mu_V(y) \pm \pi i \int^b_x d\mu_V$, we have like \eqref{asymp pn in},
\begin{equation}
  q^{(n)}_{n+k}(e^x) = \hat{r}_k(x) e^{n\int \log \lvert e^x - e^y \rvert d\mu_V(y)} \left[\cos \left( n\pi \int^b_x d\mu_V(t) + \hat{\theta}_k(x)\right) + \bigO(n^{-1}) \right],
\end{equation}
where $\hat{r}_k(x)$ and $\hat{\theta}_k(x)$, as defined in \eqref{eq:defn_r_theta_hat}, are the modulus and argument of $2 \hat{G}_k(\Jinv_-(x)) = 2\sqrt{\frac{(\frac{1}{2}-s_a)(\frac{1}{2}-s_b)}{(\Jinv_-(x)-s_a)(\Jinv_-(x)-s_b)}} e^{k(\frac{1}{2}c_1 + c_0)} (\Jinv_-(x) - \frac{1}{2})^{-k}$.

\subsection{Proof of Theorem \ref{theorem: asympt2}}

The asymptotic results obtained in the last two subsections nearly prove items \ref{enu:theorem: asympt2:A}, \ref{enu:theorem: asympt2:B} and part of \ref{enu:theorem: asympt2:C} and \ref{enu:theorem: asympt2:D} of Theorem \ref{theorem: asympt2}. However, in the statement of the theorem, the regions where asymptotic formulas are given, are $A_{\delta}$, $B_{\delta}$, $C_{\delta}$, and $D_{\delta}$, which are similar but not exactly equal to the outside, upper bulk, left edge and right edge regions that depend on $\Sigma_P$. We observe that if $\delta$ is a fixed small enough number, we can take the radius of $U_a$ and $U_b$ large enough so that they cover $C_{\delta}$ and $D_{\delta}$. On the other hand, we can also take the radius of $U_a$ and $U_b$ small enough and the shape of the lens thick enough to let the upper bulk region cover $B_{\delta}$, and we can take the radius of $U_a$ and $U_b$ small enough and the shape of the lens thin enough to let the outside region cover $A_{\delta}$. In this way, by using different contours $\Sigma_P$, the asymptotic results in the outside, upper bulk, left edge and right edge regions are translated into results in regions $A_{\delta}$, $B_{\delta}$, $C_{\delta}$, and $D_{\delta}$ respectively.

Although we have not proved all the asymptotic formulas in items \ref{enu:theorem: asympt2:C} and \ref{enu:theorem: asympt2:D} of Theorem \ref{theorem: asympt2}, the remainders can be proved using the method presented in the previous two subsections, and we omit the details.

To compute $h^{(n)}_{n+k}$ and prove item \ref{enu:theorem: asympt2:5} of Theorem \ref{theorem: asympt2}, we note that it appears in the leading coefficient of $\Clike p_{n + k}(z)$, see \eqref{eq:kappa_is_leading_coeff_of_Clike_p}. Using \eqref{def Y}, \eqref{def T}, \eqref{eq:def S} and \eqref{def P}, we have, for $z$ in $\strip$, outside of the lens and away from $a$ and $b$,
\begin{equation} \label{eq:leading_term_in_Cp}
  \Clike p_{n + k}(z)=Y_{2}^{(n+k,n)}(z)=e^{n\ell} e^{-n\tilde{\gfn}(z)}T_2(z)=e^{n\ell} e^{-n\tilde{\gfn}(z)}S_2(z)=e^{n\ell} e^{-n\tilde{\gfn}(z)}P_2(z).
\end{equation}
Since $\tilde{\gfn}(z) = z + \bigO(e^{-z})$ as $z \to +\infty$ in $\strip$, \eqref{eq:leading_term_in_Cp} yields
\begin{equation} \label{eq:leading_coeff_of_P_2_is_related_to_kappa}
  P_2(z) = \frac{-h^{(n)}_{n+k}}{2\pi i} e^{-n\ell} e^{-(k+1)z} + \bigO(e^{-(k+2)z}), \quad \text{as $z \to +\infty$.}
\end{equation}
By \eqref{eq:uniform_conv_hatP_2}, \eqref{def P F}, and \eqref{def F1}, we have as $n\to\infty$,
\begin{equation} \label{eq:leading_coeff_of_P_2_algebraic}
  \begin{split}
    \lim_{z \to +\infty} P_2(z) e^{(k+1)z} ={}& \lim_{z \to +\infty} P_2^{(\infty)}(z) e^{(k+1)z}(1+\bigO(n^{-1}))\\ ={}& \lim_{z \to +\infty} F(\Jinv_2(z)) e^{(k+1)z}(1 + \bigO(n^{-1})) \\
    = {}& \lim_{z \to +\infty} c^k_1 \left( \Jinv_2(z) - \frac{1}{2} \right)^{k+1} \frac{e^{-c_1 \Jinv_2(z) - c_0}}{\sqrt{(\Jinv_2(z) - s_a)(\Jinv_2(z) - s_b)}} e^{(k+1)z} (1 + \bigO(n^{-1})).
  \end{split}
\end{equation}
From the formula \eqref{eq:Joukowsky_like_transform} of $\Jlike(s)$ which is the inverse function of $\Jinv_2(z)$, we have
\begin{equation}
  \Jinv_2(z) = \frac{1}{2} + e^{\frac{c_1}{2} + c_0} e^{-z} + \bigO(e^{-2z}), \quad \text{as $z \to +\infty$,}
\end{equation}
and we obtain that
\begin{equation} \label{eq:leading_coeff_of_P_2}
  \begin{split}
    \lim_{z \to +\infty} c^k_1 \left( \Jinv_2(z) - \frac{1}{2} \right)^{k+1} \frac{e^{-c_1 \Jinv_2(z) - c_0}}{\sqrt{(\Jinv_2(z) - s_a)(\Jinv_2(z) - s_b)}} e^{(k+1)z} ={}& c^k_1 e^{k(\frac{c_1}{2} + c_0)} \frac{i}{\sqrt{(\frac{1}{2} - s_a)(s_b - \frac{1}{2})}} \\
    ={}& ic^{k+\frac{1}{2}}_1 e^{k(\frac{c_1}{2} + c_0)},
  \end{split}
\end{equation}
where $s_a$ and $s_b$ are expressed in $c_1$ by \eqref{eq:defn_of_s_a_and_s_b}. Formulas \eqref{eq:leading_coeff_of_P_2}, \eqref{eq:leading_coeff_of_P_2_algebraic} and \eqref{eq:leading_coeff_of_P_2_is_related_to_kappa} yield Theorem \ref{theorem: asympt2}\ref{enu:theorem: asympt2:5}.

\appendix

\section{Proofs of several technical results}\label{section: proofs
lemmas}

\subsection{Proof of Proposition \ref{propq}} \label{subsec:proof_of_algebraic_prop}

Our proof is similar to that of \cite[Proposition 2.1]{Bleher-Kuijlaars04a}. By the formula of the probability density function \eqref{jpdf}, the average of $\prod_{j=1}^n(e^z-e^{\lambda_j})$ can be expressed as
\begin{equation} \label{eq:explicit_formula_of_2nd_average}
  \mathbb E_n'(\prod_{j=1}^n(e^z-e^{\lambda_j}))=\frac{1}{{Z}_n'}\int_{\mathbb R^n}
\prod_{j=1}^n (e^z-e^{\lambda_j})\prod_{i<j} (\lambda_j - \lambda_i) \
  \prod_{i<j} (e^{\lambda_j}-e^{\lambda_i})\ \prod_{j=1}^n
  e^{-nV(\lambda_j)}\  d\lambda_j.
\end{equation}
From this formula, it is clear that $\mathbb E_n'(\prod_{j=1}^n(e^z-e^{\lambda_j}))$ is a linear combination of $e^{kz}$ with $k = 0, 1, \dotsc, n$, and that the coefficient of $e^{nz}$ is equal to $1$ since the probability measure is normalized. To show that it is equal to $q^{(n)}_n(e^z)$, we only need to verify that it satisfies the orthogonality conditions \eqref{orthoI}, which characterize $q^{(n)}_n(e^z)$ uniquely.

Expanding the Vandermonde determinant over the symmetric group $S_n$ gives
\begin{equation} \label{eq:expansion_of_Vandermonde}
  \prod_{i<j} (\lambda_j-\lambda_i)=\det(\lambda_i^{j-1})_{i,j=1,\ldots, n}=\sum_{\sigma\in S_n}(-1)^\sigma \prod_{j=1}^n\lambda_j^{\sigma(j) - 1}.
\end{equation}
Substituting \eqref{eq:expansion_of_Vandermonde} into \eqref{eq:explicit_formula_of_2nd_average}, we obtain
\begin{equation} \label{eq:2nd_average_after_permutation}
  \begin{split}
    \mathbb E_n'(\prod_{j=1}^n(e^z-e^{\lambda_j})) ={}& \frac{1}{{Z}_n'} \int_{\mathbb R^n} \sum_{\sigma\in S_n}(-1)^\sigma \prod_{j=1}^n (e^z-e^{\lambda_j}) \prod_{i<j} (e^{\lambda_j}-e^{\lambda_i}) \prod_{j=1}^n\lambda_j^{\sigma(j) - 1} e^{-nV(\lambda_j)}\  d\lambda_j \\
    ={}& \frac{n!}{{Z}_n'} \int_{\mathbb R^n}\prod_{j=1}^n (e^z-e^{\lambda_j}) \prod_{i<j} (e^{\lambda_j}-e^{\lambda_i}) \prod_{j=1}^n\lambda_j^{j - 1} e^{-nV(\lambda_j)}\ d\lambda_j.
  \end{split}
\end{equation}
Substituting the identity
\begin{equation}
  \prod_{j=1}^n (e^z-e^{\lambda_j})
  \prod_{i<j} (e^{\lambda_j}-e^{\lambda_i})=\det
  \begin{pmatrix}
    1&e^{\lambda_1}&\ldots &e^{n\lambda_1}\\
    1&e^{\lambda_2}&\ldots &e^{n\lambda_2}\\
    \vdots&\vdots& &\vdots\\
    1&e^{\lambda_n}&\ldots &e^{n\lambda_n}\\
    1&e^{z}&\ldots &e^{nz}
  \end{pmatrix}
\end{equation}
into \eqref{eq:2nd_average_after_permutation}, we obtain after integrating with respect to $\lambda_i$ that
\begin{equation} \label{qdet}
  \mathbb E_n'(\prod_{j=1}^n(e^z-e^{\lambda_j}))=\frac{n!}{Z_n'}\det
  \begin{pmatrix}
    m_{00}&m_{01}&\ldots &m_{0n}\\
    m_{10}&m_{11}&\ldots &m_{1n}\\
    \vdots&\vdots& &\vdots\\
    m_{n-1, 0}&m_{n-1, 1}&\ldots &m_{n-1, n}\\
    1&e^{z}&\ldots &e^{nz}
  \end{pmatrix},
  \quad \text{where} \quad
  m_{jk}=\int_{\mathbb
R}\lambda^je^{k\lambda}e^{-nV(\lambda)}d\lambda.
\end{equation}
Then it is straightforward to verify that for $k = 0, 1, \dots, n-1$,
\begin{equation} \label{qdet2}
  \int_{\mathbb R}\mathbb E_n'(\prod_{j=1}^n(e^z-e^{\lambda_j}))z^k e^{-nV(z)}dz = \frac{n!}{{Z}_n'}\det
  \begin{pmatrix}
    m_{00}&m_{01}&\ldots &m_{0n}\\
    m_{10}&m_{11}&\ldots &m_{1n}\\
    \vdots&\vdots& &\vdots\\
    m_{n - 1, 0}&m_{n - 1, 1}&\ldots &m_{n-1, n}\\
    m_{k0}&m_{k1}&\ldots &m_{kn}
  \end{pmatrix}
  = 0.
\end{equation}
Thus we prove that $\mathbb E_n'(\prod_{j=1}^n(e^z-e^{\lambda_j}))$ satisfies the orthogonality condition \eqref{orthoI} that determines $q^{(n)}_n(e^z)$, and then it follows that $\mathbb E_n'(\prod_{j=1}^n(e^z-e^{\lambda_j})) = q^{(n)}_n(e^z)$.

\subsection{Proof of Proposition \ref{prop:Joukowsky_like}} \label{subsec:Proof_of_Prop_Joukowsky_like}

In this proof, we fix $c_1 \in \realR^+$ and $c_0 \in \realR$, and $\Jlike$ stands for $\Jlike_{c_1, c_0}$ such that $\Jlike(s) = c_1 s + c_0 - \log \frac{s - \frac{1}{2}}{s + \frac{1}{2}}$. Recall that $s_a = -\sqrt{\frac{1}{4} + \frac{1}{c_1}}, s_b = \sqrt{\frac{1}{4} + \frac{1}{c_1}}$ as in \eqref{eq:defn_of_s_a_and_s_b}, and $a = \Jlike(s_a), b = \Jlike(s_b)$ as in \eqref{eq:defining_formula_of_a_b}.

To prove part \ref{enu:prop:Joukowsky_like:a}, we show that the equation $\Jlike(s) = x$:
\begin{enumerate}[label=(\arabic{*})]
\item has a unique solution $s$ in the upper half plane $\compC^+ = \{ s = u + iv $ with $v > 0 \}$ if $x\in (a,b)$,
\item has no solution in $\compC^+$ if $x\in\mathbb R\setminus (a, b)$.
\end{enumerate}
Moreover, as $x$ runs from $a$ to $b$, the solutions $s=s(x)$ form an arc in $\compC^+$ from $s_a$ to $s_b$. Then this arc is the desired $\gamma_1$ in Proposition \ref{prop:Joukowsky_like}, and the complex conjugate of $\gamma_1$ is the arc $\gamma_2$.

For $s = u + iv$ with $v>0$, $\Jlike(s) \in \realR$ if and only if the identity for its imaginary part
\begin{equation} \label{eq:argument_equation_for_gamma}
  c_1 v - \arccot \frac{u^2 + v^2 - \frac{1}{4}}{v} = 0
\end{equation}
is satisfied, where the range of $\arccot$ is $(0, \pi)$. It is a direct consequence of \eqref{eq:argument_equation_for_gamma} that $v < \frac{\pi}{c_1}$. Under the condition $0 < v < \frac{\pi}{c_1}$, \eqref{eq:argument_equation_for_gamma} is equivalent to
\begin{equation} \label{eq:simplified_eq_for_argument}
  u^2 = \frac{1}{4} + v\cot(c_1 v) - v^2.
\end{equation}
By direct calculation we find that the right-hand side of \eqref{eq:simplified_eq_for_argument} is a decreasing function in $v$ for $0<v<\frac{\pi}{c_1}$. Moreover, as $v\to 0$, it tends to $\frac{1}{4} + \frac{1}{c_1}$, and as $v\to \frac{\pi}{c_1}$, it tends to $-\infty$.

Thus for $\Jlike(s)$ to be real where $s = u + iv$ with $v > 0$, $u$ has to be in $(s_a, s_b)$, and for any $u$ in this interval there is a unique $v > 0$ to make \eqref{eq:simplified_eq_for_argument} hold. The locus of all such $s = u + iv$ is an arc in $\compC^+$ connecting $s_a$ and $s_b$. As a consequence of \eqref{eq:simplified_eq_for_argument}, $v$ increases as $u$ runs from $s_a$ to $0$, and then decreases as $u$ runs from $0$ to $s_b$. At any $s$ in this arc,
\begin{equation}
  \frac{d \Jlike(s)}{ds} = c_1 - \frac{1}{s^2 - \frac{1}{4}} \neq 0,
\end{equation}
and it follows that $\Jlike$ is a homeomorphism from this arc to the interval $[a,b]$, which proves part \ref{enu:prop:Joukowsky_like:a} of Proposition \ref{prop:Joukowsky_like}.

\medskip

Next we prove part \ref{enu:prop:Joukowsky_like:b}. It is easy to check that $\Jlike$ maps the ray $(s_b, \infty)$ to $(b, \infty)$ and the ray $(-\infty, s_a)$ to $(-\infty, a)$ homeomorphically. Then it suffices to show that $\Jlike$ is a univalent map from $\compC^+ \setminus \bar{D}$ onto $\compC^+$, and the univalent property of $\Jlike$ on $\compC^- \setminus \bar{D}$ follows by complex conjugation. To this end, we use the following elementary lemma:
\begin{lem}[Exercise 10 in Section 14.5 of \cite{Conway95}] \label{lem:Conway}
  Suppose that $G$ and $\Omega$ are simply connected Jordan regions and $f$ is a continuous function on the
  closure of $G$ such that $f$ is analytic on $G$ and $f(G) \subseteq \Omega$. If $f$ maps $\partial G$ homeomorphically
  onto $\partial \Omega$, then $f$ is univalent on $G$ and $f(G) = \Omega$.
\end{lem}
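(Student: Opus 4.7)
The plan is to deduce the lemma from the argument principle applied to $f - w_0$ at an arbitrary $w_0 \in \Omega$, using the homeomorphism hypothesis to pin down a winding number. Since $f|_{\partial G}$ is a homeomorphism of $\partial G$ onto the Jordan curve $\partial \Omega$, the image $f(\partial G) = \partial \Omega$ is a simple closed curve, and by the Jordan curve theorem its index at any $w_0 \in \Omega$ equals $\pm 1$ and is constant in $w_0$. On the other hand, by the argument principle the integer
\begin{equation*}
  N(w_0) \colonequals \frac{1}{2\pi i}\oint_{\partial G} \frac{f'(z)}{f(z) - w_0}\, dz
\end{equation*}
counts the zeros of $f - w_0$ in $G$ with multiplicity, and a change of variables identifies it with the winding number of $f(\partial G)$ around $w_0$. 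Since the zero count is a non-negative integer, the common index on $\Omega$ must be $+1$, so each $w_0 \in \Omega$ has a unique preimage in $G$ and the corresponding zero of $f - w_0$ is simple. This gives simultaneously the univalence of $f$ on $G$ and the equality $f(G) = \Omega$: the inclusion $f(G) \subseteq \Omega$ is assumed, and surjectivity onto $\Omega$ is exactly what the argument principle has just delivered.

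The main obstacle is technical: the classical argument principle requires $f$ to be analytic in a neighborhood of $\overline G$, whereas here we only have continuity on $\overline G$ together with analyticity on $G$, and $\partial G$ is an arbitrary Jordan curve, which may fail even to be rectifiable. To handle this I would exhaust $G$ by smooth Jordan subdomains $G_n$ with $\overline{G_n} \subset G$ and $\bigcup_n G_n = G$, apply the argument principle on each $G_n$, and pass to the limit. Uniform continuity of $f$ on $\overline G$ makes $f(\partial G_n)$ converge uniformly to $f(\partial G) = \partial\Omega$, and since the preimage $f^{-1}(\{w_0\})$ is compact in $G$, the point $w_0$ stays at positive distance from $f(\partial G_n)$ for large $n$, so both sides of the argument-principle identity stabilize to the desired limits. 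An alternative route is to invoke the Riemann mapping theorem together with Carath\'{e}odory's extension theorem to reduce to the case $G = \{|z| < 1\}$, where the argument principle on circles of radius $r < 1$ combined with $r \to 1$ yields the same conclusion cleanly; once either route is in place, the proof requires no further ingredients.
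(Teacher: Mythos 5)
The paper does not actually prove this lemma: it is cited verbatim as Exercise 10 in Section 14.5 of Conway's \emph{Functions of One Complex Variable II} and then applied (after conjugating by a M\"obius map) to show $\Jlike$ is univalent on $\compC^+\setminus\overline D$. So there is no in-paper proof to compare against, and the evaluation has to stand on its own.

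Your argument-principle proof is the standard one for this boundary-correspondence result and is essentially correct. The chain of reasoning is sound: since $f|_{\partial G}$ is a homeomorphism onto the Jordan curve $\partial\Omega$, the winding number of $f(\partial G)$ around any $w_0\in\Omega$ is $\pm 1$ and constant on $\Omega$; the argument principle identifies that winding number with the number of solutions of $f(z)=w_0$ in $G$ counted with multiplicity; non-negativity of the zero count forces the sign to be $+1$, giving exactly one simple preimage for each $w_0$, hence univalence together with surjectivity onto $\Omega$ (and the hypothesis $f(G)\subseteq\Omega$ gives the reverse inclusion, so $f(G)=\Omega$). You also correctly identified where the naive argument principle breaks down — $f'$ need not extend to $\partial G$ and $\partial G$ need not be rectifiable, so the boundary integral is not literally defined — and both of your repairs are valid. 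Of the two, the Riemann mapping plus Carath\'eodory extension route is the cleaner one to carry out fully: composing with a conformal map $\psi:\{|z|<1\}\to G$ and its continuous boundary extension reduces everything to circles $|z|=r$, where the argument principle is unproblematic, and then $r\to 1^-$ is controlled by uniform continuity of $f\circ\psi$ on the closed disk. The exhaustion-by-subdomains route works too but would need a word on why winding numbers of the nearby curves $f(\partial G_n)$ agree with that of $\partial\Omega$; you gesture at this via uniform convergence and the positive distance from $w_0$ to $\partial\Omega$, which is the right idea.
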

But this lemma is not directly applicable, since both $\compC^+ \setminus \bar{D}$ and $\compC^+$ are unbounded. Let $g(s)  \colonequals
-i\frac{s-i}{s+i}$ be the conformal map from the unit disk to the
upper half plane, we find that $g^{-1} \circ \Jlike \circ g$
is a map from the simply connected region $g^{-1}(\compC^+ \setminus
\bar{D})$ into the unit disk, and the map is homeomorphic on the
boundary. A direct application of Lemma \ref{lem:Conway} shows that
$g^{-1} \circ \Jlike \circ g$ is univalent in $g^{-1}(\compC^+
\setminus \bar{D})$ and onto the unit disk, hence $\Jlike$ is
univalent in $\compC^+ \setminus \bar{D}$ and onto the upper half
plane, and part \ref{enu:prop:Joukowsky_like:b} is proved.

To prove part \ref{enu:prop:Joukowsky_like:c}, we find by direct calculation that $\Jlike$ maps homeomorphically
\begin{enumerate}[label=(\arabic{*})]
\item the interval $(s_a, -\frac{1}{2})$ to the ray $(-\infty, a)$,
\item the interval $(\frac{1}{2}, s_b)$ to the ray $(b, \infty)$,
\item the upper side of the interval $(-\frac{1}{2}, \frac{1}{2})$ to the horizontal line $\mathbb R - \pi i$, and
\item the lower side of the interval $(-\frac{1}{2}, \frac{1}{2})$ to the horizontal line $\mathbb R + \pi i$.
 \end{enumerate}
 Then it suffices to show that $\Jlike$ maps $D \cap \compC^+$ onto $\strip \cap \compC^-$ univalently. We use Lemma \ref{lem:Conway} again. Similar to the conformal map $g$, we use the conformal map $h(s)  \colonequals  \log g(s) = \log \frac{-is-1}{s-i}$ that transforms the unit disk to $\strip \cap \compC^+$. We omit the details since the arguments are very similar to those in the proof of part \ref{enu:prop:Joukowsky_like:b}.

\subsection{Proof of Lemma \ref{lem:c_0_and_c_1}}

First, we show that for any $x_1 \in \realR^+$, \eqref{intro mod eq 2} has a unique solution as an equation in $x_0$. Note that
\begin{equation} \label{eq:derivative_wrt_x}
  \begin{split}
    \frac{d}{dx_0} \frac{1}{2\pi i} \oint_{\gamma} V' \left(\Jlike_{x_1, x_0}(s) \right)
    \frac{ds}{s-\frac{1}{2}}
    ={}& \frac{1}{2\pi i} \oint_{\gamma} V'' \left(\Jlike_{x_1, x_0}(s) \right) \frac{ds}{s-\frac{1}{2}}
    ={} \frac{-1}{\pi} \Im \int_{\gamma_1} V'' \left(\Jlike_{x_1, x_0}(s) \right) \frac{ds}{s-\frac{1}{2}} \\
    ={}& \frac{1}{\pi} \int^{\pi}_0 V'' \left(\Jlike_{x_1, x_0}(s(\theta)) \right) \Im \frac{d \log(s(\theta) - \frac{1}{2})}{d\theta} d\theta, 
  \end{split}
\end{equation}
where we parametrize $s \in \gamma_1$ by its argument $\theta$ that
runs from $0$ to $\pi$. This parametrization is well defined since as $s$ moves along $\gamma_1$, its imaginary part increases as its real part  increases from $s_a$ to $0$, and then decreases as its real part continues to increase from $0$ to $s_b$, as shown in the proof of Proposition \ref{prop:Joukowsky_like}.

Below we show that the right-hand side of \eqref{eq:derivative_wrt_x} is bounded below by a positive constant for all $x_0 \in \realR$. Since $V''(\Jlike_{x_1, x_0}(s(\theta)))$ is bounded below by a positive number by the strong convexity of $V$, we need only to prove for all $\theta \in (0, \pi)$, $\Im \log(s(\theta) - \frac{1}{2}) = \arg(s(\theta) - \frac{1}{2})$ is an increasing function.
We show the increasing for $\theta \in (\frac{\pi}{2}, \pi)$ and $\theta \in (0, \frac{\pi}{2})$ separately. For geometric reasons, when $\theta \in (\frac{\pi}{2}, \pi)$, $\arg(s(\theta)-\frac{1}{2})$ is increasing with $\theta$ since both
 $\Re s(\theta) < 0$ and $\Im s(\theta) > 0$ are decreasing. For $\theta \in (0, \frac{\pi}{2})$, we use the identity
\begin{equation}
  \Im \log(s(\theta) - \frac{1}{2}) = \Im \left( x_1 s(\theta) + x_0 + \log(s(\theta) + \frac{1}{2}) \right) - \Im \Jlike_{x_1, x_0}(s(\theta)).
\end{equation}
Here $\Im \Jlike_{x_1, x_0}(s(\theta))$, by the construction of $\gamma_1$, vanishes, $\Im s(\theta)$ increases as $\theta$ runs from $0$ to $\frac{\pi}{2}$ and for geometric reasons $\Im \log(s + \frac{1}{2})$ also increases as $\theta$ runs from $0$ to $\frac{\pi}{2}$. Thus we have that for $\theta \in (0, \frac{\pi}{2})$,  $\Im \log(s(\theta) - \frac{1}{2}) = x_1 \Im s(\theta) + \Im \log(s(\theta) + \frac{1}{2})$ is increasing.

Now we have that as a function in $x_0$,
$\frac{1}{2\pi i} \oint_{\gamma} V' (\Jlike_{x_1, x_0}(s) )/(s-\frac{1}{2}) ds$ is a bijection from $\mathbb R$ to $\mathbb R$, since its derivative is bounded below by a positive constant. Hence by continuity, there must
be a unique $x_0$ to make this function equal to $1$. Given $x_1 \in \realR^+$, we denote the unique $x_0$ that solves \eqref{intro mod eq 2} by $c_0(x_1)$. Similarly we can show that $c_0(x_1)$ is a continuous function in $x_1$.

Although we do not have a simple formula for $c_0(x_1)$, we show below that
\begin{align}
  \frac{1}{2\pi i} \oint_{\gamma} V' (\Jlike_{x_1, c_0(x_1)}(s)) ds < {}& x_1^{-1},& & \text{for $x_1$ sufficiently small,}\\
  \frac{1}{2\pi i} \oint_{\gamma}  V'(\Jlike_{x_1, c_0(x_1)}(s)) ds > {}& x_1^{-1},& & \text{for $x_1$ sufficiently large.}
\end{align}
Hence, by continuity, it follows that there exists $c_1 \in \realR^+$ that, together with $c_0 = c_0(c_1)$, solves \eqref{intro mod eq 1}--\eqref{intro mod eq 2}.

As $x_1 \to 0^+$, from \eqref{eq:simplified_eq_for_argument}, it follows that the shape of
$\gamma$ is close to the circle with radius $x_1^{-1/2}$ and center $0$. Hence if we parametrize $s \in \gamma$ as before by its argument $\theta$, we have for $\theta \in [0, 2\pi)$,
\begin{equation}
  s(\theta)=e^{i\theta}x_1^{-1/2}+o(1), \quad \lim_{x_1\to 0^+}\frac{s'(\theta)}{s(\theta)-\frac{1}{2}}=i, \quad \lim_{x_1\to 0^+} V' (\Jlike_{x_1, x_0}(s(\theta))) = V'(x_0).
\end{equation}
By the dominated convergence theorem, we have
\begin{equation} \label{eq:first_large_circle}
  \begin{split}
    \lim_{x_1\to 0^+}\frac{1}{2\pi i} \oint_{\gamma} V' (\Jlike_{x_1, x_0}(s)) \frac{ds}{s-\frac{1}{2}} = {}& \frac{1}{2\pi i}\lim_{x_1\to 0^+}\int_0^{2\pi}V'(\Jlike_{x_1, x_0}(s(\theta))) \frac{s'(\theta)}{s(\theta)-\frac{1}{2}}d\theta\\
    = {}& \frac{1}{2\pi i}\int_0^{2\pi}V'(x_0) i d\theta = V'(x_0).
\end{split}
\end{equation}
We find
 $\lim_{x_1\to 0^+}c_0(x_1)= \tilde{x}_0$, where $\tilde{x}_0$ is the unique value such that $V'(\tilde{x}_0) = 1$.
From the results obtained above, we have that
\begin{equation} \label{eq:estimate_of_2_1_small}
  \frac{1}{2\pi i} \oint_{\gamma} V' (\Jlike_{x_1, c_0(x_1)}(s)) ds = o(x_1^{-1}) \quad \text{as $x_1 \to 0^+$},
\end{equation}
since the shape of contour $\gamma$ approaches to the circle with radius $x_1^{-1/2}$, and the integrand tends
uniformly to $V'(\tilde{x}_0)=1$.

 On the other hand, for large values of $x_1$, we use the expression
\begin{equation} \label{eq:parametrization_of_Res_1}
  \frac{1}{2\pi i} \oint_{\gamma} V'(\Jlike_{x_1, x_0}(s)) ds = \frac{-1}{\pi} \Im \int_{\gamma_1} V' (\Jlike_{x_1, x_0}(s)) ds =
  \frac{-1}{\pi} \int^{s_b}_{s_a} V' (\Jlike_{x_1, x_0}(s(u))) \Im v'(u) du,
\end{equation}
where $s \in \gamma_1$ is expressed as a function in its real part $u = \Re s$, and  $v(u)>0$ is defined by the condition that $s(u)=u+iv(u)\in\gamma_1$, and $s_a, s_b$ are the two endpoints of $\gamma_1$, as denoted in the beginning of Appendix \ref{subsec:Proof_of_Prop_Joukowsky_like}, with the parameters $c_1, c_0$ substituted by $x_1, x_0$.
Let us decompose the integral at the right of \eqref{eq:parametrization_of_Res_1} as $I_1+I_2+I_3$, where
\begin{align}
I_1 = {}& \frac{-1}{\pi} \int^{-\frac{1}{2}}_{s_a} V' (\Jlike_{x_1, x_0}(s(u))) \Im v'(u) du,\\
I_2 = {}& \frac{-1}{\pi} \int_{-\frac{1}{2}}^{\frac{1}{2}} V' (\Jlike_{x_1, x_0}(s(u))) \Im v'(u) du,\\
I_3 = {}& \frac{-1}{\pi} \int_{\frac{1}{2}}^{s_b} V' (\Jlike_{x_1, x_0}(s(u))) \Im v'(u) du.
\end{align}
From \eqref{eq:simplified_eq_for_argument}, it is not difficult to find that as $x_1 \to \infty$,
\begin{equation} \label{eq:est_of_Im_s(theta_12)}
  v(-\frac{1}{2}) = v(\frac{1}{2}) = \frac{\pi}{2} x_1^{-1} + o(x_1^{-1}).
\end{equation}
We know that $V'$ is an increasing function in $u$ and that $v(u)$ is an even function.
From Appendix \ref{subsec:Proof_of_Prop_Joukowsky_like} we have that $v(u)$ is increasing for $u \in (s_a, 0)$ and decreasing for $u \in (0, s_b)$. Hence the integral $I_2$ is positive.
Using the monotonicity of $V'$ and integration by parts for $I_1$ and $I_3$, we similarly obtain
\begin{equation}
  \begin{split}
    I_1 + I_3 > {}& \frac{1}{\pi} V'(\Jlike_{x_1, x_0}(s(\frac{1}{2}))) (v(\frac{1}{2}) - v(s_b)) - \frac{1}{\pi} V'(\Jlike_{x_1, x_0}(s(-\frac{1}{2}))) (v(-\frac{1}{2}) - v(s_a)) \\
    ={}& \frac{1}{\pi} \left( V'(\Jlike_{x_1, x_0}(s(\frac{1}{2}))) - V'(\Jlike_{x_1, x_0}(s(-\frac{1}{2}))) \right) v(\frac{1}{2}),
  \end{split}
\end{equation}
where in the last line we used the identities $v(s_a) = v(s_b) = 0$ and $v(-\frac{1}{2}) = v(\frac{1}{2})$. Hence \eqref{eq:parametrization_of_Res_1} and the estimates of $I_2$ and $I_1 + I_3$ above imply that
\begin{equation}\label{res1}
  \frac{1}{2\pi i} \oint_{\gamma} V'(\Jlike_{x_1, x_0}(s)) ds > \frac{1}{\pi}\left( V' (\Jlike_{x_1, x_0}(s(\frac{1}{2})) - V' (\Jlike_{x_1, x_0}(s(-\frac{1}{2})) \right) v(\frac{1}{2}).
\end{equation}
As $x_1 \to \infty$,
\begin{equation} \label{eq:estimate_of_J(pm_half)}
  \Jlike_{x_1, x_0}(s(-\frac{1}{2})) = x_0 - \frac{x_1}{2} + o(x_1), \quad \Jlike_{x_1, x_0}(s(\frac{1}{2})) = x_0 + \frac{x_1}{2} + o(x_1),
\end{equation}
where the two $o(x_1)$ terms are independent to $x_0$. By \eqref{eq:estimate_of_J(pm_half)} and the assumption $V''(x) > c > 0$ for all $x$, we have that if $x_1$ is large enough, then uniformly for all $x_0 \in \realR$
\begin{equation} \label{eq:difference_of_V'_at_two_s}
  V' (\Jlike_{x_1, x_0}(s(\frac{1}{2}))) - V' (\Jlike_{x_1, x_0}(s(-\frac{1}{2})) > cx_1.
\end{equation}
Substituting \eqref{eq:difference_of_V'_at_two_s} and \eqref{eq:est_of_Im_s(theta_12)} into \eqref{res1}, we have that as $x_1 \to \infty$ and $x_0 = c_0(x_1)$,
\begin{equation} \label{eq:estimate_of_2_1_large}
  \frac{1}{2\pi i} \oint_{\gamma} V'(\Jlike_{x_1, c_0(x_1)}(s)) ds \gg x_1^{-1}.
\end{equation}
We note that $\frac{1}{2\pi i} \oint_{\gamma} V'(\Jlike_{x_1,
  c_0(x_1)}(s)) ds$ is continuous in $x_1$, since $\frac{1}{2\pi i}
\oint_{\gamma} V'(\Jlike_{x_1, x_0}(s)) ds$ is continuous in $x_1,
x_0$ and $c_0(x)$ is continuous. Then we find that the estimates \eqref{eq:estimate_of_2_1_small} and \eqref{eq:estimate_of_2_1_large} imply that there is a pair $(c_1, c_0 = c_0(c_1))$ such that both \eqref{intro mod eq 2} and \eqref{intro mod eq 1} are satisfied.

\section{Explicit construction of the equilibrium measure for quadratic and quartic
$V$}
\label{section: equi examples}

In this appendix we use the method developed in Section \ref{section:equi} to find the endpoints of the support of the equilibrium measure explicitly for quadratic and quartic external fields $V$. In the quadratic case, we consider a monomial external field $V(x)=\frac{x^2}{t}$, but the same method can be applied to all quadratic $V$. We also construct the density function of the equilibrium measure. In the quartic case, we confine our attention to $V$ such that $V(x) - \frac{x}{2}$ is an even function. Under this condition the equilibrium measure is symmetric around the origin. In contrast to the quadratic $V$ that is automatically convex, we also consider quartic $V$ that is one-cut but not convex.

\subsubsection*{External field $V(x) = \frac{x^2}{2t}$}

In this case, $V'(x) = \frac{x}{t}$, and a simple calculation of residue yields
\begin{equation}
  \frac{1}{2\pi i} \oint_{\gamma} V' \left(c_1 s + c_0 - \log \frac{s - \frac{1}{2}}{s + \frac{1}{2}} \right) ds = \frac{1}{t}, \quad \frac{1}{2\pi i} \oint_{\gamma} \frac{V' \left(c_1 s + c_0 - \log \frac{s - \frac{1}{2}}{s + \frac{1}{2}} \right)}{s-\frac{1}{2}} ds = \frac{c_0}{t}+\frac{c_1}{2t}.
\end{equation}
Thus by Lemma \ref{lem:c_0_and_c_1}, we have
\begin{equation}
  c_1 = t, \quad c_0 = \frac{t}{2}.
\end{equation}
The support of the equilibrium measure, as expressed by \eqref{eq:defining_formula_of_a_b}, is
\begin{equation}
  \begin{split}
    [a,b] ={}& [\Jlike_{t, \frac{t}{2}}(s_a), \Jlike_{t, \frac{t}{2}}(s_a)] \\
    ={}& \left[\frac{1}{2}(t-\sqrt{t^2+4t}) - \log\frac{t+2+\sqrt{t^2+4t}}{2}, \frac{1}{2}(t+\sqrt{t^2+4t}) - \log\frac{t+2-\sqrt{t^2+4t}}{2}\right].
    \end{split}
\end{equation}
In particular, for $t=1$, we have
\begin{equation} \label{eq:support_of_eq_measure_standard_quadrartic}
  [a,b] = \left[\frac{-\sqrt{5}+1}{2} - \log\frac{3+\sqrt{5}}{2}, \frac{\sqrt{5}+1}{2} - \log\frac{3-\sqrt{5}}{2}
  \right].
\end{equation}
To find the equilibrium density, we have as a particular case of \eqref{formula M} that
\begin{equation}\label{M quadr}
  M(s) =
  \begin{cases}
    \frac{-1}{t} \log \frac{s - \frac{1}{2}}{s + \frac{1}{2}}, & \text{for $s \in \compC \setminus \bar{D}$,} \\
    s + \frac{1}{2}, & \text{for $s \in D$.}
  \end{cases}
\end{equation}
Then by \eqref{psiM}, after a straightforward calculation, we obtain the following expression
\begin{equation} \label{eq:formula_of_Psi-quadratic}
  \psi_V(x) =\frac{1}{\pi}\Im \Jinv_+(x),
\end{equation}
where $\Jinv_+$ is as before the boundary value of the inverse of
$\Jlike = \Jlike_{t, \frac{t}{2}}$ which parametrizes the curve $\gamma_1$.

\subsubsection*{External field $V(x) = x^4/4 + ux^2/2 + x/2$}

In this case, $V'(x) = x^3 + ux + \frac{1}{2}$, and the calculation of residues yields
\begin{gather}
  \frac{1}{2\pi i} \oint_{\gamma} V' \left(c_1 s + c_0 - \log \frac{s - \frac{1}{2}}{s + \frac{1}{2}} \right) ds = \frac{c_1^2}{4} + 3c_1 + 3c_0^2 + u,\label{quartic eq1} \\
  \frac{1}{2\pi i} \oint_{\gamma} \frac{V' \left(c_1 s + c_0 - \log \frac{s - \frac{1}{2}}{s + \frac{1}{2}} \right)}{s-\frac{1}{2}} ds = \frac{c_1^3}{8} + (\frac{3c_0}{4} + \frac{3}{2})c_1^2 + (\frac{3c_0^2}{2} + 6c_0 + \frac{u}{2})c_1 + c_0^3 + uc_0 + \frac{1}{2}.
\end{gather}
As a consequence of the relation $V(x)=V(-x)+x$, the equilibrium measure $\mu_V$ is symmetric around the origin. Indeed, changing variables $s\mapsto -s$ and $t\mapsto -t$ in the energy functional (\ref{energy}), it is straightforward to verify that $I_V(\mu_V)=I_V(\tilde\mu_V)$, where $\tilde\mu_V$ is defined by the fact that $\tilde\mu_V(A)=\mu_V(-A)$ for any Borel set $A$. From the uniqueness of the equilibrium measure, it follows that $\mu_V = \tilde\mu_V$. In particular this implies that the support of the equilibrium measure is of the form $[-b,b]$. By \eqref{eq:first_def_of_c_0}, we have $c_0=0$. Substituting this and \eqref{quartic eq1} into \eqref{intro mod eq 1}, we obtain the equation
\begin{equation} \label{eq:equation_satisfied_by_c_1_quartic}
  c_1^3 + 12c_1^2 + 4uc_1 - 4 = 0.
\end{equation}
\begin{rmk}
Although the equilibrium measure, which is the limiting mean eigenvalue distribution of the random matrix ensemble as $n\to\infty$, is symmetric around the origin, this is not true for the finite $n$ joint probability distribution of eigenvalues \eqref{jpdf0}. The latter would only be invariant under the change of variables $\lambda_i \to -\lambda_i$ if the term $x/2$ in $V(x)$ were replaced by $(\frac{1}{2} - \frac{1}{2n})x$.
\end{rmk}

For any value of $u$, the equation \eqref{eq:equation_satisfied_by_c_1_quartic} has a unique positive solution
by Descartes' rule of signs. We have an explicit formula for $c_1 \in \realR^+$ in
$u$ by the formula for the roots of a cubic equation, but we will
not write down the long formula. Together with $c_0 = 0$, $c_1>0$ gives us a solution to the pair of equations \eqref{intro mod eq 1} and \eqref{intro mod eq 2}. Under the condition that the equilibrium measure is one-cut supported, this pair $c_0, c_1$ yields expressions for the support and the density function of the equilibrium measure, but we omit the formulas.

We note that the external field $V$ is convex if $u \geq 0$. If $u$ is negative, it is not but the construction of the equilibrium measure given above can still be carried out formally. When $u$ is negative but sufficiently close to $0$, we can check that the equilibrium measure constructed in this way is still a probability measure. When $u$ is a large negative number, the constructed density function $\psi_V(x)$ is negative on an interval centered at $0$, and therefore not a probability density. This means that the external field is not one-cut regular, and our construction fails.
Based on the analogy with matrix models without external source, the symmetry of the equilibrium measure and numerical simulations, we conjecture that $V$ is one-cut regular for values of $u$ such that $\psi_V(0)>0$.

From \eqref{eq:equation_satisfied_by_c_1_quartic}, we derive that $u = \frac{1}{c_1} - 3c_1 - \frac{1}{4}c^2_1$, where $c_1$ is the positive solution to \eqref{eq:equation_satisfied_by_c_1_quartic}. This makes $u$ a strictly decreasing function of $c_1$.
Since $c_0 = 0$, it is easy to see that $\Jinv_+(0)$ is on the imaginary axis, and we denote it as $\Jinv_+(0) = i\, p$ ($p > 0$). From the relation
\begin{equation}
  \Jlike_{c_1, c_0}(\Jinv_+(0)) = c_1 \Jinv_+(0) - \log\frac{\Jinv_+(0) - \frac{1}{2}}{\Jinv_+(0) + \frac{1}{2}} = 0,
\end{equation}
we derive that $c_1 = \frac{2}{p} \arctan \frac{1}{2p}$ and $c_1$ is a strictly decreasing function in $p$, which means that $u$ is a strictly increasing function in $p$.

Like \eqref{M quadr}, with our quartic $V$ (using the fact that $c_0 = 0$), we have by \eqref{formula M}
\begin{equation}\label{M quart}
  M(s) =
  \begin{cases}
    -(3c^2_1 s^2 + u) \log \frac{s - \frac{1}{2}}{s + \frac{1}{2}} + 3c_1s
  \left( \log \frac{s - \frac{1}{2}}{s + \frac{1}{2}} \right)^2 - \left( \log \frac{s - \frac{1}{2}}{s + \frac{1}{2}}
  \right)^3 - 3c^2_1s & s \in \compC \setminus \bar{D}, \\
   c^3_1 s^3 + uc_1s + 3c^2_1s + \frac{1}{2} & s \in
   D.
  \end{cases}
\end{equation}
Similarly to the quadratic case, we can recover the equilibrium
density using (\ref{psiM}). In particular at zero we have
\begin{equation}
  \begin{split}
    \psi_V(0) ={}& \frac{1}{\pi}\Im M_-(\Jinv_+(0)) = \frac{1}{\pi} \Im \left( c^3_1\Jinv_+(0)^3 + (uc_1 + 3c^2_1)\Jinv_+(0) + \frac{1}{2} \right) \\
    ={}& \frac{1}{\pi} \left( -c^3_1 p^3 + (1 - \frac{c^3_1}{4}) p \right) \\
    ={}& \frac{p}{\pi} \left( 1 - c_1^3(p^2 + \frac{1}{4}) \right) = \frac{p}{\pi} \left( 1 - (\frac{8}{p} + \frac{2}{p^3}) (\arctan\frac{1}{2p})^3 \right).
\end{split}
\end{equation}
Here we used \eqref{eq:equation_satisfied_by_c_1_quartic} to pass from the first to the second line.
Thus $\psi_V(0) > 0$ if and only if  $(\frac{8}{p} + \frac{2}{p^3}) (\arctan\frac{1}{2p})^3 < 1$, which is equivalent to $p>p_c$ for some value $p_c>0$. Since $u$ is an increasing function in $p$, this is equivalent to $u > u_c$, where $u_c$ can be approximated numerically as $u_c \approx -1.9250$. Although we have not rigorously proved that for $u > u_c$ the external field is one-cut regular, numerical results are convincing. When $u = -1.925$, the constructed equilibrium measure is shown in Figure \ref{fig:quartic_critical_equilibrium}. It suggests that around $u = u_c \approx -1.925$ the transition between one-cut and two-cut equilibrium measures occurs.
\begin{figure}[ht]
  \centering
  \includegraphics[width=0.6\linewidth]{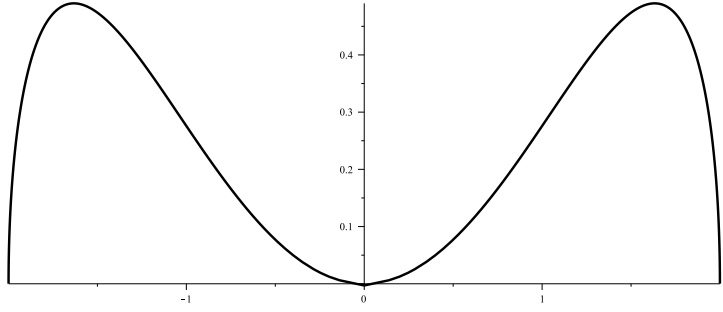}
  \caption{The density function $\psi_V$ of the equilibrium measure for the external field $V(x) = \frac{x^4}{4} - \frac{1.925 x^2}{2} + \frac{x}{2}$.}
  \label{fig:quartic_critical_equilibrium}
\end{figure}

\section{Asymptotics of $p^{(n)}_n(x)$ when $V(x) = \frac{x^2}{2}$} \label{sec:saddle_point_for_quadratic}

In this appendix, we give an alternative derivation of the asymptotic results in Theorem \ref{theorem: asympt2} when the external field is $V(x) = \frac{x^2}{2}$. The derivation is based on the contour integral formula of multiple Hermite polynomials in \cite[Theorems 2.1 and 2.3]{Bleher-Kuijlaars05}. This method can essentially reproduce all results in Theorem \ref{theorem: asympt2} for quadratic external field, but for brevity we only give the derivation for $p^{(n)}_n(x)$ where $x \in \realR$ and is away from the edges of the equilibrium measure. Although this contour integral method cannot be applied when the external field is not quadratic, it shows that the transformation $\Jlike_{c_1, c_0}$ arises naturally in the uniform external source model.

The result \cite[Theorem 2.1]{Bleher-Kuijlaars05} states that the monic polynomial $P_n(x)$ of degree $n$ that satisfies
\begin{equation}
  \int^{\infty}_{-\infty} P_n(x) e^{na_j x} e^{-n\frac{x^2}{2}} dx = 0, \quad \text{for $j = 1, 2, \dotsc, n$,}
\end{equation}
is expressed by an integral over the imaginary axis:
\begin{equation}
  P_n(x) = \frac{\sqrt{n}}{\sqrt{2 \pi} i} \int^{i \infty}_{-i \infty} e^{\frac{n}{2} (t - x)^2} \prod^n_{i = 1} (t - a_i) dt.
\end{equation}
When $a_j = \frac{j - 1}{n}$ as in \eqref{eq:a_j_are_equispaced}, we have, in our notations, $P_n=p_n^{(n)}$ where $V(x)=x^2/2$.
Setting $t = s + \frac{1}{2}$, we have
\begin{equation} \label{eq:contour_intgral_p^n_n}
  p^{(n)}_n(x) = \frac{\sqrt{n}}{\sqrt{2 \pi} i} \int^{i \infty}_{-i \infty} e^{n F_n(s; x)} ds,
\end{equation}
where
\begin{equation} \label{eq:F_n_added}
  F_n(s; x) = \frac{1}{2}(s + \frac{1}{2} - x)^2 + \sum^n_{i = 1} \log (s + \frac{1}{2} - \frac{i - 1}{n}) \frac{1}{n}.
\end{equation}
For $s$ away from the interval $[-\frac{1}{2}, \frac{1}{2}]$, we have the following uniform (in $s$ and $x$) asymptotic expansion as $n \to \infty$,
\begin{equation}
F_n(s; x) = F(s; x) + \frac{1}{n} \log \sqrt{\frac{s + \frac{1}{2}}{s - \frac{1}{2}}} + \bigO(\frac{1}{n^2}),
\end{equation}
where
\begin{equation} \label{eq:defn_of_F}
  F(s; x) = \frac{1}{2}(s + \frac{1}{2} - x)^2 + (s + \frac{1}{2}) \log(s + \frac{1}{2}) - (s - \frac{1}{2}) \log(s - \frac{1}{2}) - 1,
\end{equation}
and we take the principal branch of the logarithm and square root. Hence
\begin{equation}
  \frac{d}{ds} F(s; x) = s + \frac{1}{2} - \log \frac{s - \frac{1}{2}}{s + \frac{1}{2}} - x = \Jlike_{1, \frac{1}{2}}(s) - x.
\end{equation}
Below we consider the zeros $s$ of the derivative $\frac{d}{ds}F(s;x)$ and express them as functions in $x$. We use the functions $\Jinv_1(x), \Jinv_2(x)$ and their boundary values $\Jinv_\pm(x)$ as defined in \eqref{eq:defn_of_I_1}--\eqref{eq:defn_of_I_-} with $c_1 = 1$ and $c_0 = \frac{1}{2}$. Note that $s_a = -\frac{\sqrt{5}}{2}$ and $s_b = \frac{\sqrt{5}}{2}$ as given in \eqref{eq:defn_of_s_a_and_s_b}; we denote
\begin{equation}
  a = \Jlike_{1, \frac{1}{2}}(s_a)
  = \frac{-\sqrt{5} + 1}{2} + 2 \log  \frac{\sqrt{5} - 1}{2}, \quad
  b = \Jlike_{1, \frac{1}{2}}(s_b)
  = \frac{\sqrt{5} + 1}{2} + 2 \log  \frac{\sqrt{5} + 1}{2},
\end{equation}
as in \eqref{eq:defining_formula_of_a_b}, which agree with \eqref{eq:support_of_eq_measure_standard_quadrartic}.
We can say the following about the zeros of $\frac{d}{ds}F(s; x)$:
\begin{enumerate}[label=(\arabic{*})]
\item
  if $x > b$, then there are two zeros of $\frac{d}{ds}F(s; x)$: $\Jinv_1(x) \in (s_b, \infty)$ and $\Jinv_2(x) \in (\frac{1}{2}, s_b)$,
\item
  if $x < a$, then there are two zeros of $\frac{d}{ds}F(s; x)$: $\Jinv_1(x) \in (-\infty, s_a)$ and $\Jinv_2(x) \in (s_a, -\frac{1}{2})$,
\item
  if $x \in (a, b)$, then there are two zeros of $\frac{d}{ds}F(s; x)$: $\Jinv_+(x) \in \gamma_1$ and $\Jinv_-(x) \in \gamma_2$.
\end{enumerate}
By explicit computation, we find that for $x \in (-\infty,a) \cup (b,+\infty)$, then along the vertical line $\{ z = \Jinv_1(x) + it \mid t \in \realR \}$, $\Re F(z)$ attains its maximum at $z = \Jinv_1(x)$. If we deform the contour $i\mathbb R$ of integration in \eqref{eq:contour_intgral_p^n_n} to the vertical line through $\Jinv_1(x)$, the standard application of the saddle point method yields
\begin{equation} \label{eq:saddle_point_outside}
  \begin{split}
    p^{(n)}_n(x) ={}& \frac{\sqrt{n}}{\sqrt{2 \pi} i} \int^{\Jinv_1(x) + n^{-\frac{1}{3}} i}_{\Jinv_1(x) - n^{-\frac{1}{3}} i} e^{n F_n(s; x)} ds (1 + o(n^{-1})) \\
    ={}& \frac{\sqrt{n}}{\sqrt{2 \pi} i} \int^{\Jinv_1(x) + n^{-\frac{1}{3}} i}_{\Jinv_1(x) - n^{-\frac{1}{3}} i} e^{n F(s; x)} \sqrt{\frac{s + \frac{1}{2}}{s - \frac{1}{2}}} ds (1 + \bigO(n^{-1})) \\
    ={}& \frac{\sqrt{n} e^{n F_n(\Jinv_1(x); x)}}{\sqrt{2 \pi} i} \int^{\Jinv_1(x) + n^{-\frac{1}{3}} i}_{\Jinv_1(x) - n^{-\frac{1}{3}} i}
    \exp \left( \frac{n}{2}(s-\Jinv_1(x))^2 \left. \frac{d^2}{ds^2} F(s; x) \right\lvert_{s = \Jinv_1(x)} \right) \sqrt{\frac{s + \frac{1}{2}}{s - \frac{1}{2}}} ds (1 + \bigO(n^{-\frac{1}{2}})) \\
    ={}& e^{n F(\Jinv_1(x); x)} \frac{\Jinv_1(x) + \frac{1}{2}}{\sqrt{\Jinv_1(x)^2 - \frac{5}{4}}} (1 + \bigO(n^{-\frac{1}{2}})).
  \end{split}
\end{equation}

If $x \in (a, b)$, by explicit computation, we find that along the vertical line that passes through $\Jinv_+(x)$ and $\Jinv_-(x)$, $\Re F(z)$ attains its maximum at two points $z = \Jinv_+(x)$ and $z = \Jinv_-(x)$. (Note that although $F(z)$ is discontinuous on the interval $[-\frac{1}{2}, \frac{1}{2}]$, $\Re F(z)$ is continuous everywhere.) Then we take the contour in \eqref{eq:contour_intgral_p^n_n} as this vertical line. When the contour crosses the interval $[-\frac{1}{2}, \frac{1}{2}]$, $F(z)$ is no longer a good approximation of $F_n(z)$, but we can estimate the magnitude of $F_n(z)$ by other methods, (say, some rough and direct estimate of \eqref{eq:F_n_added}) and still find the vertical line suitable for saddle point analysis. The standard application of saddle point method yields, like \eqref{eq:saddle_point_outside},
\begin{equation}
  \frac{\sqrt{n}}{\sqrt{2 \pi} i} \int^{\Jinv_{\pm}(x) + n^{-\frac{1}{3}} i}_{\Jinv_{\pm}(x) - n^{-\frac{1}{3}} i} e^{n F_n(s; x)} ds = e^{n F(\Jinv_{\pm}(x); x)} \frac{\Jinv_{\pm}(x) + \frac{1}{2}}{\sqrt{\Jinv_{\pm}(x)^2 - \frac{5}{4}}} (1 + \bigO(n^{-\frac{1}{2}})),
\end{equation}
and
\begin{equation} \label{eq:saddle_point_bulk}
  \begin{split}
    p^{(n)}_n(x) ={}& \frac{\sqrt{n}}{\sqrt{2 \pi} i} \left( \int^{\Jinv_+(x) + n^{-\frac{1}{3}} i}_{\Jinv_+(x) - n^{-\frac{1}{3}} i} e^{n F_n(s; x)} ds + \int^{\Jinv_-(x) + n^{-\frac{1}{3}} i}_{\Jinv_-(x) - n^{-\frac{1}{3}} i} e^{n F_n(s; x)} ds \right)(1+o(n^{-1})) \\
    ={}& 2 e^{n \Re F(\Jinv_+(x); x)} \frac{\lvert \Jinv_+(x) + \frac{1}{2} \rvert}{\lvert \Jinv_+(x) - \frac{5}{4} \rvert^{\frac{1}{2}}} \left[\cos \left( n \Im F(\Jinv_+(x); x) + \arg \left( \frac{\Jinv_+(x) + \frac{1}{2}}{\sqrt{\Jinv_+(x)^2 - \frac{5}{4}}} \right)\right) + \bigO(n^{-\frac{1}{2}}) \right],
\end{split}
\end{equation}
where the square roots take the principal value. It is not obvious that the asymptotic formulas \eqref{eq:saddle_point_outside} and \eqref{eq:saddle_point_bulk} agree with the formulas \eqref{eq:asy_of_p_outside} and \eqref{eq:asy_of_p_bulk_real}. To convince the reader, we show that \eqref{eq:saddle_point_outside} is equivalent to \eqref{eq:asy_of_p_outside} (with $k = 0$ and $x \in \realR$) in the leading term.

It is easy to check that
\begin{equation} \label{eq:easy_relation_quadratic_V}
  \frac{\Jinv_1(x) + \frac{1}{2}}{\sqrt{\Jinv_1(x)^2 - \frac{5}{4}}} = G_0(\Jinv_1(z))
\end{equation}
where $G_0$ is the function defined in \eqref{eq:defn_of_G_and_G_hat} with $c_1 = 1$. We need also to show that $F(\Jinv_1(x); x) = \gfn(x)$ where $\gfn(x)$ is defined in \eqref{eq:expr_of_gfn_and_tilde_gfn-intro}. Since it is not hard to verify by direct computation that $\gfn(x) = \log(x) + o(1)$ and $F(\Jinv_1(x); x) = \log(x) + o(1)$,
we need only to show that the function $\gfn'(x) = G(x)$, defined in \eqref{eq:defn_of_G_and_tilde_G}, satisfies
\begin{equation} \label{eq:essential_relation_quadratic_V}
  G(x) = \frac{d}{dx} F(\Jinv_1(x); x),
\end{equation}
Note that by the relation $x = \Jlike_{1, \frac{1}{2}}(\Jinv_1(x))$, we have
\begin{equation}
  \begin{split}
    F(\Jinv_1(x); x) ={}& \tilde{F}(\Jinv_1(x)) \\
     \colonequals {}& \frac{1}{2} \left( \log \frac{\Jinv_1(x) + \frac{1}{2}}{\Jinv_1(x) - \frac{1}{2}} \right)^2 + (\Jinv_1(x) + \frac{1}{2}) \log(\Jinv_1(x) + \frac{1}{2}) - (\Jinv_1(x) - \frac{1}{2}) \log(\Jinv_1(x) - \frac{1}{2}) - 1,
  \end{split}
\end{equation}
where we consider $\tilde F$ as a function of $u=\Jinv_1(x)$, and
\begin{equation}
  \frac{d}{dx} F(\Jinv_1(x); x) = \frac{d}{d u} \tilde{F}(u) \left( \frac{d \Jlike(u)}{d u} \right)^{-1} = \log \frac{\Jinv_1(x) + \frac{1}{2}}{\Jinv_1(x) - \frac{1}{2}}.
\end{equation}
On the other hand, by the identities \eqref{def M} and \eqref{formula M}, we have
\begin{equation} \label{eq:expression_of_G(x)_quadratic_V}
  \begin{split}
    G(x) ={}& -\frac{1}{2 \pi i} \oint_{\gamma} \frac{\Jlike(\xi)}{\xi - \Jinv_1(x)} d \xi \\
    ={}& -\frac{1}{2 \pi i} \oint_{\gamma} \frac{\xi + \frac{1}{2}}{\xi - \Jinv_1(x)} d \xi - \frac{1}{2 \pi i} \oint_{\gamma} \frac{\log(\frac{\xi + \frac{1}{2}}{\xi - \frac{1}{2}})}{\xi - \Jinv_1(x)} d \xi.
  \end{split}
\end{equation}
By the calculation of residue, it is obvious that the first contour integral in the second line of \eqref{eq:expression_of_G(x)_quadratic_V} vanishes, and after some effort, we find the second contour integral has value $-2 \pi i \log \frac{\Jinv_1(x) + \frac{1}{2}}{\Jinv_1(x) - \frac{1}{2}}$. Thus \eqref{eq:essential_relation_quadratic_V} is proved, and together with \eqref{eq:easy_relation_quadratic_V} the equivalence between \eqref{eq:saddle_point_outside} and \eqref{eq:asy_of_p_outside} is obtained.

\section*{Acknowledgements}
The authors acknowledge support from the Belgian Interuniversity Attraction Pole P06/02, P07/18. TC was also supported by the European Research Council under the European Union's Seventh Framework Programme (FP/2007/2013)/ ERC Grant Agreement n. 307074, by FNRS, and by the ERC project FroMPDE. DW thanks Professor Pierre van Moerbeke for the fellowship in Universit\'{e} catholique de Louvain where the project initiated, and is grateful to the Universit\'{e} catholique de Louvain for hospitality. Part of DW's work was done in the department of mathematics, University of Michigan. Support of an NUS start-up grant \# R-146-000-164-133 is gratefully acknowledged.

\bibliographystyle{abbrv}
\bibliography{bibliography}

\end{document}